\def\anon{0}
\let\oldnl\nl
\newcommand{\nonl}{\renewcommand{\nl}{\let\nl\oldnl}}
\definecolor{mygreen}{RGB}{10,70,250}
\definecolor{myred}{RGB}{80,150,20}
\renewcommand{\epsilon}{\varepsilon}
\newcommand{\sbcomment}[1]{{\textcolor{blue}{[\textbf{Soheil:} #1]}}}
\newcommand{\agcomment}[1]{{\textcolor{orange}{[\textbf{Alma:} #1]}}}
\newcommand{\mscomment}[1]{{\textcolor{red}{[\textbf{Madhu:} #1]}}}
\newcommand{\amcomment}[1]{{\textcolor{purple}{[\textbf{Amir:} #1]}}}
\newcommand{\sbcomment}[1]{}
\newcommand{\agcomment}[1]{}
\newcommand{\mscomment}[1]{}
\newcommand{\amcomment}[1]{}
\newcommand{\vccomment}[1]{}
\newcommand{\mE}{\mathcal{E}}
\newcommand{\mC}{\mathcal{C}}
\newcommand{\mT}{\mathcal{T}}
\newcommand{\mB}{\mathcal{B}}
\newcommand{\mG}{\mathcal{G}}
\newcommand{\hiddencomment}[1]{}
\DeclarePairedDelimiter\card{\lvert}{\rvert}
\newcommand{\zero}{{(0)}}
\newcommand{\one}{{(1)}}
\newcommand{\bd}{\ensuremath{\mathbf{d}}}
\newcommand{\dtv}{d_{\textnormal{TV}}}
\newcommand{\ER}{Erdős–Rényi}
\newcommand{\poly}{\mathrm{poly}}
\newcommand{\Ot}{\ensuremath{\widetilde{O}}}
\crefname{lemma}{Lemma}{Lemmas}
\crefname{theorem}{Theorem}{Theorems}
\crefname{property}{Property}{Properties}
\crefname{claim}{Claim}{Claims}
\crefname{definition}{Definition}{Definitions}
\crefname{observation}{Observation}{Observations}
\crefname{proposition}{Proposition}{Propositions}
\crefname{assumption}{Assumption}{Assumptions}
\crefname{line}{Line}{Lines}
\crefname{figure}{Figure}{Figures}
\crefname{equation}{}{}
\crefname{section}{Section}{Sections}
\crefname{appendix}{Appendix}{Appendices}
\crefname{algCounter}{Algorithm}{Algorithms}
\Crefname{algCounter}{Algorithm}{Algorithms}
\newtheorem{lemma}{Lemma}[section]
\newtheorem{theorem}[lemma]{Theorem}
\newtheorem{definition}[lemma]{Definition}
\newtheorem{claim}[lemma]{Claim}
\newtheorem{observation}[lemma]{Observation}
\newtheorem{remark}[lemma]{Remark}
\newtheorem{assumption}[lemma]{Assumption}
\definecolor{mylightgray}{RGB}{240,240,240}
\algnewcommand{\IIf}[2]{\textbf{if} #1 \textbf{then} #2}
\algnewcommand{\EndIIf}{\unskip\ \algorithmicend\ \algorithmicif}
\algnewcommand{\IElse}[1]{\textbf{else} #1}
\newenvironment{graytbox}{
\par\addvspace{0.1cm}
\begin{tcolorbox}[width=\textwidth,
                  boxsep=5pt,
                  left=1pt,
                  right=1pt,
                  top=2pt,
                  bottom=2pt,
                  boxrule=0pt,
                  arc=2pt,
                  colback=mylightgray,
                  colframe=black,
                  ]
}{
\end{tcolorbox}
}
\newenvironment{whitetbox}{
\par\addvspace{0.1cm}
\begin{tcolorbox}[width=\textwidth,
                  boxsep=5pt,
                  left=1pt,
                  right=1pt,
                  top=2pt,
                  bottom=2pt,
                  boxrule=1pt,
                  arc=0pt,
                  colframe=black,
                  colback=white
                  ]
}{
\end{tcolorbox}
}
\newcounter{algCounter}
\let\oldlemma\lemma
\renewcommand{\lemma}{%
  \renewcommand{\emph}[1]{\textbf{##1}}
  \oldlemma}
\let\olddefinition\definition
\renewcommand{\definition}{%
  \renewcommand{\emph}[1]{\textbf{##1}}
  \olddefinition}
\newcommand{\snote}[1]{{\color{brown} [Soheil: #1]}}
\newcommand{\mnote}[1]{{\color{red} [Madhu: #1]}}
\newcommand{\aanote}[1]{{\color{pink} [Amir: #1]}}
\newcommand{\agnote}[1]{{\color{blue} [Alma: #1]}}
\newcommand{\aanote}[1]{}
\newcommand{\mnote}[1]{}
\newcommand{\agnote}[1]{}
\newcommand{\snote}[1]{}
\renewcommand{\paragraph}{%
  \@startsection{paragraph}{4}%
  {\z@}{10pt}{-1em}%
  {\normalfont\normalsize\bfseries}%
}
\title{Lower Bounds for Non-adaptive Local Computation Algorithms}
\author{
Amir Azarmehr\thanks{Northeastern University, Boston, Massachusetts, USA.  Supported in part by NSF CAREER Award CCF 2442812 and a Google Research Award. Emails: \texttt{\{azarmehr.a, s.behnezhad, ghafari.m\}@northeastern.edu}.} \and 
Soheil Behnezhad\footnotemark[1] \and
Alma Ghafari\footnotemark[1] \and
Madhu Sudan\thanks{School of Engineering and Applied Sciences, Harvard University, Cambridge, Massachusetts, USA. Supported in part by a Simons Investigator Award and NSF Award CCF 2152413. Email: \texttt{madhu@cs.harvard.edu}.}
}
\date{}
\begin{document}

\maketitle

\thispagestyle{empty}
\begin{abstract}

We study {\em non-adaptive} Local Computation Algorithms (LCA). A reduction of Parnas and Ron (TCS'07) turns any distributed algorithm into a non-adaptive LCA. Plugging known distributed algorithms, this leads to non-adaptive LCAs for constant approximations of maximum matching (MM) and minimum vertex cover (MVC) with complexity $\Delta^{O(\log \Delta / \log \log \Delta)}$, where $\Delta$ is the maximum degree of the graph. Allowing adaptivity, this bound can be significantly improved to $\poly(\Delta)$, but is such a gap necessary or are there better non-adaptive LCAs?

\smallskip\smallskip
Adaptivity as a resource has been studied extensively across various areas. Beyond this, we further motivate the study of non-adaptive LCAs by showing that even a modest improvement over the Parnas-Ron bound for the MVC problem would have major implications in the Massively Parallel Computation (MPC) setting. In particular, it would lead to faster truly sublinear space MPC algorithms for approximate MM, a major open problem of the area. Our main result is a lower bound that rules out this avenue for progress.

\smallskip\smallskip
Specifically, we prove that $\Delta^{\Omega(\log \Delta / \log \log \Delta)}$ queries are needed for any non-adaptive LCA computing a constant approximation of MM or MVC. This is the first separation between non-adaptive and adaptive LCAs, and already matches (up to constants in the exponent) the algorithm obtained by the black-box reduction of Parnas and Ron.

\smallskip \smallskip
Our proof blends techniques from two separate lines of work: sublinear time lower bounds and distributed lower bounds. Particularly, we adopt techniques such as couplings over acyclic subgraphs from the recent sublinear time lower bounds of Behnezhad, Roghani, and Rubinstein (STOC'23, FOCS'23, STOC'24). We apply these techniques on a very different instance, particularly (a modified version of) the construction of Kuhn, Moscibroda and Wattenhoffer (JACM'16) from distributed computing. Our proof reveals that the (modified) KMW instance has the rather surprising property that any random walk of {\em any length} has a tiny chance $(\Delta^{-\Omega(\log \Delta/\log \log \Delta)})$ of identifying a matching edge. In contrast, the work of KMW only proves that short walks (i.e., walks of depth $O(\log \Delta/ \log \log \Delta)$) are not useful.

\end{abstract}


\clearpage

\setcounter{page}{1}

\section{Introduction}

Local Computation Algorithms (LCAs), formalized in the works of \citet{AlonRVX12} and \citet{RubinfeldTVX11}, are an important class of sublinear time algorithms that have had deep consequences in several other models of computation. LCAs are not required to return their entire solutions at once, but rather parts of it upon queries. For example, an LCA for the minimum vertex cover (MVC) problem only returns whether a queried vertex $v$ is part of the vertex cover, without necessarily revealing any information about the other vertices. The main measure of complexity for LCAs is the time spent by the algorithm to return the solution to one such query.

Most LCAs in the literature are {\em adaptive}. Take the vertex cover example. To determine whether $v$ is in the vertex cover, existing LCAs often explore a local neighborhood of $v$ in an {\em adaptive} manner until they are ready to decide whether $v$ is part of the output. The notion of {\em non-adaptivity} for LCAs has also been studied indirectly in several applications of LCAs. 
Indeed, this dates back to a reduction of \citet{ParnasRon07} from 2007 (in fact before LCAs were formalized) which translates any local distributed algorithm to a {\em non-adaptive} LCA. However, to our knowledge, this notion has never been formalized. We give a formal definition in this work (see \cref{def:non-adaptive-lca-general}). We describe it informally below.

\paragraph{Non-adaptive LCAs:} We recall the notion of an LCA using the specific example of (approximate) MVC. Such an algorithm is given as input a vertex $u$ in the graph, and query access to the adjacency list of the graph, and is required to output whether $u$ is in the vertex cover. The requirement is that the LCA produces answers that are consistent with {\em some} (approximate) MVC. A {\em non-adaptive} LCA for a graph problem, given a vertex $u$, non-adaptively specifies the structure of a rooted tree $T$ which is then used to explore the neighborhood of $u$. For example, setting $T$ to be a path of length $\ell$ corresponds to a random walk of length $\ell$ from $u$. Another example is setting $T$ to be a $\Delta$-ary tree of depth $r$, which corresponds to collecting the whole neighborhood of $u$ up to distance $r$. For the formal definition, see \cref{def:non-adaptive-lca-general}.

Returning to the above-mentioned reduction of \citet{ParnasRon07}, their work  observed that by collecting the $r$-hop neighborhood of a vertex $u$, one can simulate the output of any distributed local algorithm on that vertex $u$. This corresponds to a non-adaptive LCA with query complexity $\Delta^{r}$. Plugging in the $r = O(\log \Delta / \log \log \Delta)$-round distributed algorithm of \citet*{Bar-YehudaCS17}, this implies, for any fixed $\epsilon > 0$, a non-adaptive LCA for $(2+\epsilon)$ approximate MVC with query complexity $\Delta^{O(\log \Delta / \log \log \Delta)}$. While this remains the best known {\em non-adaptive} LCA for any $O(1)$ approximate MVC, the query complexity can be significantly improved if we allow adaptivity. In particular, 
there are several adaptive LCAs that obtain a $(2+\epsilon)$-approximate MVC with $\poly(\Delta)$ queries \cite{YoshidaYISTOC09,behnezhad2021,Ghaffari-FOCS22,KapralovSODA20}.
But is this large gap necessary? And should we even care about designing more efficient non-adaptive LCAs?

\vspace{-0.2cm}
\subsection{Why Non-adaptive LCAs Matter}

The role of adaptivity in algorithm design has long been studied across various areas \cite{GoldreichT01,FischerLNRRS02,ChenSTW-APX17,CanonneG18,GonenR07,RaskhodnikovaS06,0001DNSW25,DBLP:conf/stoc/GirishST024,BehnezhadDH20,IndykPW11}. The distinction between adaptive and non-adaptive algorithms is particularly well studied in property testing and learning (the list is long; see \cite{GoldreichT01,FischerLNRRS02,0001DNSW25} for some representative results and the references therein). There is an inherent connection between LCAs and such algorithms. For example, in a remarkable result, \citet*{LangeRV22} used vertex cover LCAs for properly learning monotone functions. A non-adaptive LCA would make the algorithm of \cite{LangeRV22} also non-adaptive, but unfortunately the existing non-adaptive vertex cover LCAs are too slow to be applicable for this while the adaptive ones are fast enough.

\vspace{-0.2cm}
\paragraph{Implications in the MPC model:} Even if adaptivity is not a first-order concern, there are still strong reasons to study non-adaptive LCAs due to their applications. For example, we show that even a slightly more efficient non-adaptive LCA would resolve a major open problem in the study of {\em massively parallel computation} (MPC) algorithms. The MPC model is an abstraction of many modern parallel frameworks such as MapReduce, Hadoop, or Spark \cite{KarloffSV10}. For many problems, including $O(1)$ approximate maximum matching, the fastest known MPC algorithm with truly sublinear space runs in $O(\sqrt{\log n} \cdot \log \log n)$ rounds \cite{GhaffariU19,OnakArxiv18}. There are reasons to believe the right answer is $\poly(\log \log n)$ for this problem, but achieving this or any improvement over the state-of-the-art remains a major open problem.

In this paper, we prove that any $O(1)$-approximate MVC non-adaptive LCA with complexity $O(\Delta^{(\log \Delta)^{1-\epsilon}})$ would imply an $(\log n)^{1/2-\Omega(\epsilon)}$ round MPC algorithm (formalized as \cref{thm:MPC}). Therefore, any slight improvement over the state-of-the-art $\Delta^{O(\log \Delta/\log \log \Delta)}$-query non-adaptive LCA for MVC \cite{ParnasRon07,Bar-YehudaCS17} leads to an improvement in the MPC model.

In fact, if non-adaptive LCAs were as efficient as adaptive ones with $\poly(\Delta)$ (or even $\Delta^{O(\log \log \Delta)}$) complexity, then the resulting MPC algorithm of our \cref{thm:MPC} would only run in $\poly(\log \log n)$ rounds, fully resolving the aforementioned open problem.

\subsection{Our Main Contribution}

In this paper, we prove a firm separation between adaptive and non-adaptive LCAs for computing constant approximations of minimum vertex cover, maximum matching, and maximal independent set problems. 

\begin{graytbox}
\begin{theorem}\label{thm:main}
    Any (possibly randomized) non-adaptive LCA that given a graph $G$, returns an $O(1)$ approximation of maximum matching, an $O(1)$ approximation of minimum vertex cover, or a maximal independent set with constant probability, requires $\Delta^{\Omega(\log \Delta / \log \log \Delta)}$ queries.
\end{theorem}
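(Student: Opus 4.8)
The plan is to exhibit a hard distribution over instances — a graph $G$ together with a query — on which every non-adaptive LCA making $\Delta^{o(r)}$ queries, with $r := \Theta(\log\Delta/\log\log\Delta)$, errs with constant probability; by Yao's minimax principle this proves the theorem, and we obtain all three problems either from a single instance family or via the standard relationships between $O(1)$-approximate maximum matching, $O(1)$-approximate minimum vertex cover, and maximal independent set. The instance is a high-girth variant of the cluster-tree construction of Kuhn, Moscibroda and Wattenhofer, with $\approx r$ levels and geometrically growing level-to-level degrees, modified by attaching auxiliary gadgets (so that the total degree stays $\le \Delta$ while long excursions of a walk become harmless). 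It is designed so that: (i) $G$ admits a ``structured'' matching / vertex cover of near-optimal size, so a good approximation exists; (ii) a typical queried vertex or edge $q$ belongs to a large family of locally identical candidates, only a vanishing fraction of which can be picked by any near-optimal solution, so to be correct on a constant fraction of queries an LCA must break this symmetry; and (iii) the only way to break it is to reach a designated ``anchor'' structure, which sits at distance $\ge r$ from $q$ in the high-girth graph.

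The heart of the argument — and the step I expect to be the main obstacle — is the following strengthening of the KMW indistinguishability lemma: in the modified instance, a random walk started from $q$ hits an anchor within \emph{any} number of steps only with probability $\Delta^{-\Omega(r)}$. Walks of depth below $r$ cannot reach an anchor at all, by the girth bound. For longer walks I would track the level of the walk within the cluster tree and argue it is a birth--death process with a strong drift away from the anchor level — because each vertex has many more ``downward'' than ``upward'' neighbors — so that each net level of progress toward an anchor costs a multiplicative factor in probability; the auxiliary gadgets are chosen precisely so that a walk which drifts downward effectively mixes and cannot cheaply reclaim lost progress. An optional-stopping / potential-function computation then multiplies these per-level factors over the $\approx r$ levels to yield the $\Delta^{-\Omega(r)}$ bound. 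The subtle part is choosing the modification so that it neutralizes long walks without destroying property (i) or the local-view indistinguishability used below; this is exactly the ``surprising'' feature the abstract advertises, since the original KMW analysis only controls walks of depth $O(r)$.

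Given the hitting lemma, the rest is a union bound and a coupling. Fix any non-adaptive LCA with query tree $T$ and $|T| \le \Delta^{c r}$ for a sufficiently small constant $c > 0$. Embedding $T$ at $q$ yields at most $|T|$ root-to-node random walks from $q$, so by the lemma and a union bound, with probability $1 - |T|\cdot\Delta^{-\Omega(r)} = 1 - o(1)$ no node of $T$ is mapped to an anchor. Conditioned on this event I would couple the LCA's entire view across the possible roles $q$ can play within its locally identical family — using the high girth to ensure the explored region is a tree, and invoking the acyclic-subgraph coupling technique of Behnezhad, Roghani and Rubinstein — so that the LCA's answer on $q$ is essentially independent of $q$'s true role. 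Then on a random query the LCA is correct only with probability bounded away from $1$ by a constant, which is incompatible with outputting a valid $O(1)$-approximation: such an output must answer YES on a constant fraction of an optimal solution and hence, by the coupling, also on conflicting locally identical candidates, violating feasibility. Therefore $|T| = \Delta^{\Omega(r)} = \Delta^{\Omega(\log\Delta/\log\log\Delta)}$, as claimed.
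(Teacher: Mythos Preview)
Your overall plan---Yao's principle on a modified KMW instance, a coupling over the query tree, and a union bound over root-to-node paths---matches the paper's architecture closely. But two of your technical load-bearing steps would not go through as stated.

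First, the high-girth assumption is infeasible here and the paper explicitly avoids it. You write ``using the high girth to ensure the explored region is a tree,'' and elsewhere that anchors sit at distance $\ge r$ ``by the girth bound.'' But a non-adaptive LCA with $\Delta^{o(r)}$ queries can explore paths of length $\Delta^{o(r)} \gg r$, so you would need girth on that order---which is impossible to guarantee at these degree/size parameters. The paper instead builds each inter-cluster piece as a \emph{random} biregular bipartite graph (with $\Delta = 2^{O(\sqrt{\log n \log\log n})}$ so the graph is sparse), and proves \emph{probabilistically} that the explored subgraph is a tree: conditioned on what has been revealed, each new edge looks like an Erd\H{o}s--R\'enyi edge of the right density, so a union bound over $\kappa^2$ potential back-edges shows no cycle is discovered with probability $1 - o(1)$. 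Your coupling step needs exactly this lemma, not girth.

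Second, your ``hitting an anchor'' picture and the birth--death drift heuristic are in the right spirit but do not match the actual mechanism, and I am not confident a pure level-drift argument closes. In the paper there is no single anchor; the coupling tracks \emph{two} label-synchronized walks (one from a significant edge, one from a misleading edge) and fails only when they land in clusters of \emph{different} total degree. The key combinatorial lemma (their characterization of ``distinguishing label sequences'') says: any label sequence that causes such a failure must contain a subsequence of $r$ steps whose $i$th element has label $\le \delta^i$. Since a step of label $\le \delta^i$ has probability $\le \delta^i/\delta^{r+1}$, the product over $i=1,\dots,r$ plus a $\binom{\ell}{r}$ count of positions gives the $\Delta^{-\Omega(r)}$ bound for walks of length $\ell = O(r\log\Delta)$. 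The dummy-cluster gadget (what you call auxiliary gadgets) then absorbs any walk longer than that with probability $1 - \Delta^{-\Omega(r)}$. So the per-path bound is not ``drift away from one anchor level'' but rather ``you must accumulate an entire staircase of increasingly-rare low-label steps before the coupling can diverge''; the walk may wander up and down arbitrarily, and what is counted is a monotone subsequence of rare labels, which is what makes the argument work for moderate-length paths that a naive drift bound would not obviously control.
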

\end{graytbox}

\cref{thm:main} shows that the $\Delta^{O(\log \Delta / \log \log \Delta)}$-query non-adaptive LCA implied by the reduction of \cite{ParnasRon07} applied on the distributed algorithm of \cite{Bar-YehudaCS17} is optimal up to constant factors in the exponent.

\cref{thm:main} also rules out the possibility of achieving a faster MPC algorithm for approximate matchings by designing better non-adaptive LCAs. Therefore to improve over the $\widetilde{O}(\sqrt{\log n})$ bound, if at all possible, one has to follow a completely different approach.

\subsection{Perspective: The Need for New Lower Bound Techniques}

Recall again from the reduction of \citet*{ParnasRon07} that an $r$-round distributed algorithm implies an $\Delta^r$-query non-adaptive LCA. This means that any $\Delta^{\Omega(r)}$ query lower bound for non-adaptive LCAs also implies an $\Omega(r)$ lower bound on the round complexity of distributed algorithms. But the reverse is not true. To our knowledge, there are only two existing methods for proving distributed lower bounds, namely, $(i)$ round elimination, and $(ii)$ isomorphism up to a certain radius. But neither of these approaches seems useful for proving non-adaptive LCA lower bounds.

The round elimination technique \cite{linial1992locality,balliu2021lower} relies crucially on the notion of rounds, and proceeds by showing that an $r$-round algorithm for some problem $P$ implies an $(r-1)$-round algorithm for a slightly easier problem $P'$, which is then applied iteratively to obtain some contradiction. But since there is no clear analog of the notion of round complexity for non-adaptive LCAs it is not immediately clear how to apply this method for proving non-adaptive LCA lower bounds.

The second method relies on the fact that an $r$-round distributed algorithm can only view the $r$-hop neighborhood of each vertex. The lower bound proofs then follow by constructing a family of graphs where two vertices that have to produce different outputs have isomorphic $r$-hop neighborhoods. For example, \citet*{KuhnMW16} use this method to prove an $\Omega(\log \Delta / \log \log \Delta)$ lower bound on the round-complexity of any distributed algorithm computing a constant approximate MVC. The downside of this method is that if we go one hop further, the lower bound completely breaks. Namely, the said vertices with indistinguishable $r$-hop neighborhoods are completely distinguishable once we go one hop further. Since a non-adaptive LCA can explore a long but thin tree around a vertex, any lower bound should bound the effect of such explorations.

\paragraph{Our new lower bound technique:} In order to prove \cref{thm:main}, we blend techniques from two separate lines of work: sublinear time lower bounds and distributed lower bounds. Particularly, we adopt techniques such as couplings over acyclic subgraphs from the recent sublinear time lower bounds of \citet*{Behnezhad-RRS-STOC23,BehnezhadRR-FOCS23}. However, we apply these techniques very differently in that both the coupling as well as the instance it is defined on are different. Particularly, the instance is a modified version of the construction of  KMW \cite{KuhnMW16}. Our proof reveals that the (modified) KMW instance has the rather surprising property that any random walk of {\em any length} has a tiny chance $(\Delta^{-\Omega(\log \Delta/\log \log \Delta)})$ of identifying a matching edge. In contrast, the work of KMW only proves that walks of depth $O(\log \Delta/ \log \log \Delta)$ are not useful. 
\section{Technical Overview} \label{sec:technical-overview}

We describe our lower bound for maximum matching approximation.
The argument needs only slight modification for the vertex cover and the maximal independent set problems.

As noted in the previous section, our starting point is a construction of \citet*{KuhnMW16} who show that any distributed LOCAL algorithm computing an approximately maximum matching requires $r = \Omega(\log \Delta / \log\log \Delta)$ rounds of communication. To get this result, they construct a family of hard instances, given by distributions of graphs on $n$ vertices of maximum degree $\Delta$ (for every $\Delta$ and every sufficiently large $n$).
On a high level, the vertices are divided into groups, referred to as clusters, and the edges are then divided into groups based on the endpoint clusters (depicted in \Cref{fig:cluster-tree}; we have suppressed some of the details here which are explained in \cref{sec:construction}).
In particular, there is a group of \enquote{significant edges} which are indistinguishable from a group of \enquote{misleading edges} for $r$-round LOCAL algorithms.
Any constant approximation of the maximum matching must contain a constant fraction of the significant edges, whereas any matching can only contain an $o(1)$-fraction of the misleading edges.
Therefore, an algorithm that cannot distinguish between the two would be unable to approximate the maximum matching.

For the construction of KMW (and often in the context of LOCAL algorithms),
indistinguishability of vertices/edges is established by showing that the corresponding $r$-hop neighborhoods are isomorphic.
The input graphs have girth $2r + 1$, hence the $r$-hop neighborhood is a tree for any vertex.
As a result, isomorphism boils down to the degrees of the explored vertices, as the structure of any tree can be expressed using its degree sequence.
Their construction is such that the degree is the same for the vertices of a cluster, and even the same among \emph{most} clusters.
This essentially hides the cluster of an explored vertex,
and hence plays a key role in the proof of isomorphism.

Our lower bound is also obtained from the indistinguishability of significant and misleading edges. However, unlike KMW, we (provably) cannot obtain a non-adaptive LCA lower bound of $\Delta^{\Omega(\log \Delta/\log \log \Delta})$ (or even a much weaker lower bound of $\omega(\log \Delta / \log \log \Delta)$) via exact radius-based isomorphism. This is because a non-adaptive LCA of complexity  $\Delta^{o(\log \Delta / \log \log \Delta)}$ can still run a long random walk of this length, whereas the full $O(\log \Delta / \log \log \Delta)$-hop of a vertex is always sufficient to solve constant-approximate matchings due to the distributed algorithm of \cite{Bar-YehudaCS17}.

So instead of isomorphism, we blend some of the machinery developed in recent lower bounds of \citet*{Behnezhad-RRS-STOC23,BehnezhadRR-FOCS23,behnezhad2024approximating} for sublinear time lower bounds with the construction of KMW. In particular, we prove that any non-adaptive structure of size $\Delta^{o(\log \Delta/\log \log \Delta)}$ will likely see the same output when the origin is either a significant or a misleading edge. We do this by presenting a careful coupling of these explored subgraphs. This coupling implies that it is highly unlikely (but not impossible, as in the isomorphism-based proof of \cite{KuhnMW16}) that the algorithm successfully distinguishes significant edges from misleading ones. We note that while both our work and those of \cite{Behnezhad-RRS-STOC23,BehnezhadRR-FOCS23,behnezhad2024approximating} rely on couplings, the details are significantly different as the underlying graph distributions are completely different.

Our first step to formalize this coupling is to show (through a different method than bounding the girth) that the explored subgraph is a tree with high probability. The proof of this part is similar to \cite{Behnezhad-RRS-STOC23,BehnezhadRR-FOCS23,behnezhad2024approximating}. Having this structure, the degree of the vertices in this tree is the only observation the algorithm can make.
We present a coupling such that, with constant probability, the degrees of the explored vertices from the two starting points are exactly the same. This is where our proof completely deviates from the lower bounds of \cite{Behnezhad-RRS-STOC23,BehnezhadRR-FOCS23,behnezhad2024approximating}.
As a result, we can bound the probability that the algorithm can distinguish between the two types of edges.
Note that, for a non-adaptive LCA, the distributions of the observed degrees are determined by the structure of the query tree.
For a specific query tree, we loosely refer to the total variation distance of the two distributions as the amount of information revealed by the query tree. 

Take $r = \Theta(\log \Delta/\log \log \Delta)$ to be the radius in the construction of KMW up to which exact isomorphism holds between the neighborhoods of misleading and significant edges. A key challenge in formalizing our lower bound is that a non-adaptive LCA with query-complexity $\Delta^{o(r)}$ can still explore a path of length $\Delta^{o(r)}$, which goes far beyond radius $r$.
On a high level, we first address extremely long query paths (of length $\omega(r \log \Delta)$) by modifying the construction of KMW.
Then, we present a more sophisticated analysis of the KMW construction to argue about moderately long paths (of length between $r$ and $O(r \log \Delta)$). In what follows, we provide more details about each of these two steps.

\paragraph{Step 1: Protecting from extremely long walks by modifying KMW.}

As a first step, we modify the construction so that extremely long walks (of length $\omega(r \log \Delta)$) are not useful in identifying the origin.
To do this, we add a cluster of ``dummy'' vertices with edges to random vertices of the KMW graph. This cluster includes a small number ($< \epsilon n$) of vertices, so doesn't change the size of maximum matching much, but a constant fraction of neighbors of each vertex in KMW go to this cluster. This addition gives the property that every step of a random walk enters the dummy cluster with a constant probability.
Once it does, further exploration from these vertices does not provide any distinguishing information.
That is, considering the coupling of two paths starting from a significant edge and a misleading edge, once the paths reach the dummy cluster, they can be coupled perfectly, meaning the degrees of the explored vertices from that point on will be the same with probability $1$.

As a result, paths of length $\omega(r \log \Delta)$ enter the the dummy cluster at some point with probability $1-2^{-\omega(r \log \Delta)} = 1-o(\Delta^{-r})$.
This implies that the endpoint of this path reveals no useful information about the origin.
Taking the union bound over all $O(\Delta^r)$ such endpoints, we conclude that exploring a random walk further than $\omega(r \log \Delta)$ steps is not helpful to the algorithm.

\paragraph{Step 2 (main challenge): Protecting from moderately long walks.}
While the addition of dummy vertices ensures that long random walks provide no useful information, it is still possible that a ``moderately long'' random walk of length $O(r \log \Delta)$ will reveal useful information about the origin with noticeably high probability. 
Note, this is still much higher than radius $r$ for which the isomorphism guarantee holds.
As we wish to prove a lower bound of roughly $\Delta^{r}$ on the number of queries, if moderate random walks distinguish the origin with probability $\Delta^{-o(r)}$, this would preclude the kind of query lower bound we seek. 

To reason about the inability of moderate random walks to identify the origin, we need to perform a more refined analysis of the KMW construction
and to characterize the paths that serve to identify the origin.
In particular, we need to show that most of the moderate paths do not identify the origin. 
That is, considering the coupling of the two explored subgraphs starting from a significant or a misleading edge, the probability that the moderate walks have a different sequence of degrees is small.
To this end, at any point, we quantify how close the path is to revealing useful information.
Based on this distance, we classify a group of \enquote{low-labelled edges}, traversing which would bring the path closer to identifying the origin. 
We bound the number of low-labelled edges in every step, in turn bounding the probability that the path traverses one.
Therefore, we can conclude that a \enquote{distinguishing path} (i.e.\ a path that reveals the origin, \Cref{fig:distinguishing-1,fig:distinguishing-2}), which contains many such steps, has a small probability of being explored.
Finally, we remark that while the probabilistic bounds we seek happen to hold in the KMW construction, this aspect seems to be a coincidence. A priori, there seems to be no reason why the paths that lead to the identifiability of the origin should have a low probability in an instance designed to foil distributed LOCAL algorithms!

\paragraph{Additional modifications (cycles).} 
Proving the algorithm does not uncover a cycle is essential to our analysis.
The simple structure of a tree makes it possible to easily examine the distribution of the explored subgraph, as the structure of any tree can be expressed by its degree sequence.
To prove the explored subgraph is a tree, we employ techniques from sublinear lower bounds \cite{BehnezhadRR-FOCS23}:
We ensure the graph is relatively sparse, i.e.\ $\Delta = O(2^{\sqrt{\log n \log \log n}})$, and make use of random regular bipartite subgraphs.
We show that while exploring the graph, conditioned on the edges revealed so far, the remainder of each bipartite subgraph acts similarly to an \ER{} graph of the same density.
Using this lemma, we are able to bound the probability that certain edges exist (namely, for unexplored edges between explored vertices, the existence of which would imply a cycle), and argue that the algorithm does not discover any cycles.
In contrast, KMW construction uses lifting to ensure a girth of $2r + 1$, thereby preventing any $r$-round LOCAL algorithm from discovering a cycle.
Note that such a scheme is not applicable to LCAs since the algorithm can explore the graph up to radius $\Delta^r$, as guaranteeing such a high girth is infeasible.
Our construction could potentially include cycles of length $4$.
Despite that, we show that the probability of a random walk discovering a cycle is negligible.

\paragraph{Putting things together: indistinguishability through coupling.}
To summarize, we prove the indistinguishability of significant and misleading edges by coupling the explored subgraph starting at each type of edge.
We ensure that the explored subgraph is a tree by incorporating random bipartite graphs into the construction.
As a result, what the algorithm observes boils down to the degrees of the explored vertices.
We present a coupling of the two explored subgraphs that maximizes the probability that the explored vertices in every step have the same degree, which results in the two sequences being the same with constant probability.
We argue about this probability for the deeper vertices of the query tree (deeper than $\omega(r \log \Delta)$), by showing that the path leading to them reaches the dummy cluster with high probability,
and for all other vertices (of depth $O(r \log \Delta)$), by characterizing paths that lead to vertices of different degrees.
From the coupling, we can infer that the algorithm observes identical subgraphs starting from a significant or a misleading edge with constant probability, and hence cannot distinguish between them.

\paragraph{Paper Organization.}
We present the notations along with the formal definition of non-adaptive LCAs in \Cref{sec:prelim}.
Next, we describe the construction and study its properties in \Cref{sec:construction}.
This section also includes the key definition of \emph{distinguishing label sequences} (\Cref{def:distinguishing-sequence}), which characterizes the random walks that reveal information about the origin.
\Cref{sec:matching-lb}, contains our main result, the non-adaptive LCA lower bound for maximum matching.
In the subsequent sections, we show that the arguments can easily be adapted to the vertex cover problem (\cref{sec:vc}) and the maximal independent set problem (\cref{sec:mis}).
Finally, in \Cref{sec:mpc}, we discuss the application of non-adaptive LCAs to the MPC model.
\section{Preliminaries}
\label{sec:prelim}

For a graph $G = (V, E)$, we use $\mu(G)$ to denote the size of the maximum matching,
that is, the size of the largest edge set $M \subseteq E$ such that no two edges in $M$ share an endpoint.
A vertex set $A$ is a \emph{vertex cover} of $G$, if any edge $e \in E$ has an endpoint in $A$,
i.e.\ the induced subgraph $G[V \setminus A]$ has no edges.
It can be seen that any vertex cover must have at least $\mu(G)$ vertices since for every edge of the maximum matching, at least one of the endpoints must be in the vertex cover.

We start by specifying the restrictions on a non-adaptive LCA before turning to its performance guarantees (given in \cref{def:NLCA-MM}). 

\begin{definition}[Non-adaptive LCA] \label{def:non-adaptive-lca-general}
    A non-adaptive LCA, given the knowledge of the maximum degree $\Delta$, upon being prompted with a vertex $u$, produces a sequence of query \linebreak instructions $(a_1, b_1), (a_2, b_2), \ldots (a_{Q}, b_{Q})$ such that $a_i \leq i$ and $1 \leq b_i \leq \Delta$. 
    Here, $Q$ is the query complexity of the algorithm. 

    Afterwards, the instructions are used to explore the graph as follows:
    $u_1 = u$ is the first discovered vertex.
    Then, for every step $1 \leq i \leq Q$,
    the newly discovered vertex $u_{i+1}$ is set to the $b_i$-th vertex in the adjacency list of $u_{a_i}$.
    If $u_{a_i}$ has fewer than $b_i$ neighbors, then $u_{i+1}$ is set to a predesignated null value $\bot$, and any further queries made to $u_{i+1}$ also return $\bot$.
    The LCA then computes the output based on the explored subgraph.
\end{definition}

For our application, we assume that after all the vertices are discovered, their \linebreak degrees $d_1, d_2, \ldots, d_{Q+1}$ is revealed to the algorithm.
This is without loss of generality since it only makes the algorithm stronger.
Furthermore, the degrees can be computed with an extra $O(\Delta)$ factor,
and thus the query complexity in the two models differs by at most an $O(\Delta)$ factor.

\begin{remark}
    Through a slight modification of our arguments (in \cref{clm:path-coupling}), the lower bound is also applicable to the model where the query instruction consists only of integers $(a_1, a_2, \ldots, a_Q)$, and $u_{i+1}$ is set to a random neighbor of $u_{a_i}$ (with or without replacement).
\end{remark}

\begin{definition}[Non-adaptive LCAs for maximum matching]\label{def:NLCA-MM}
    Given a graph $G$, a non-adaptive LCA for maximum matching is prompted to decide whether an edge $(u, v)$ is in the output.
    Hence, we slightly modify \cref{def:non-adaptive-lca-general} to initiate the sequence of discovered vertices with $u_1 = u$ and $u_2 = v$, after which the query instructions are used to explore the graph as normal.
    We say that the LCA $c$-approximates the maximum matching, if (1) the output is always a matching, and (2) it has size larger than $c \cdot \mu(G)$ with constant probability.
\end{definition}
\section{The Construction} \label{sec:construction}

This section describes the hard instance. 
First, we present an informal overview of the construction, which is based on a blueprint. 
Then, we formally define \emph{cluster trees} the main component of the blueprint (\cref{subsec:cluster-trees}).
Afterwards, we characterize \emph{distinguishing sequences}, which are walks on the tree that reveal information about the starting point (\cref{subsec:distinguishing-sequences}).
Distinguishing sequences have a key role in our lower bound approach, in showing that the presented coupling works.
Finally, we present the full blueprint and describe how the final graph is derived from it (\cref{subsec:full-blueprint,subsec:final-construction}).

The blueprint is a graph where each node represents a cluster of vertices in the final graph, and each edge represents a set of edges between the vertices of the two clusters (i.e.\ a bipartite subgraph).
This blueprint consists of (1) two isomorphic trees, (2) a dummy cluster that is connected to all the other clusters, and (3) a perfect matching between the two trees that connects corresponding clusters.
We emphasize that, compared to \cite{KuhnMW16}, the introduction of dummy clusters is specific to our construction, and we employ a different procedure for deriving the final graph from the blueprint. 

Each of the two trees in the blueprint, called the \emph{cluster trees}, has a root cluster $C_0$.
Each cluster will be assigned a size in \cref{subsec:final-construction}, which is the number of corresponding vertices in the final graph.
The root cluster contains a constant fraction of all the vertices in the graph, and the cluster sizes shrink by a factor of $\delta$ as the depth of the cluster increases.
This ensures that for any approximate maximum matching, a constant fraction of the edges come from the perfect matching between the vertices of the two root clusters.
We call these edges \emph{significant} in the proof of the lower bound, and call certain other edges \emph{misleading}.
We use a coupling to argue that the LCA is unable to distinguish between them and conclude that the approximation suffers.

\subsection{Background on KMW: Cluster Trees} \label{subsec:cluster-trees}

\newbool{showstuff}
\setbool{showstuff}{false} 

In this section, we provide some background on \emph{cluster trees} (due to \cite{KuhnMW16}), one of the main components of the blueprint.
We state some of their properties, most of which are present in their paper in some capacity. The proofs are deferred to \cref{apx:cluster-trees}.

The tree is defined recursively by a parameter $r$, and denoted with $\mT_r = (\mC_r, \mE_r)$. Here, $\mC_r$ is the set of clusters and $\mE_r$ is the set of edges.
For two adjacent clusters $X$ and $Y$, the edge $(X, Y) \in \mE_r$ is labeled with an integer $d(X, Y)$ which represents the degree of the vertices in $X$, from the edges going to $Y$ in the final construction ($d(Y, X)$ is defined similarly). We refer to $d(X, Y)$ as the degree of $X$ to $Y$.
The labels depend on a parameter $\delta> 0$ that is defined later.

\begin{definition}[The cluster tree \cite{KuhnMW16}]
    \label{def:cluster-tree}
    The base case is $r = 0$: \footnote{We use a slightly different base case than \cite{KuhnMW16}, which ultimately leads to the same tree structure.}
    \begin{align*}
        \mC_0 &:= \{C_0, C_1\}, \\
        \mE_0 &:= \{(C_0, C_1)\}, \\
        d(C_0, C_1) &:= 1, \qquad d(C_1, C_0) := \delta,
    \end{align*}
    and the tree is rooted at $C_0$.
    
    For $r > 0$, the tree $\mT_r$ is constructed by adding some leaf clusters to $\mT_{r-1}$:
    \begin{enumerate}
        \item For every non-leaf cluster $C \in \mC_{r-1}$, a child cluster $C'$ is created with $d(C, C') = \delta^r$.
        \item For every leaf cluster $C \in \mC_{r-1}$, let $p_C$ be the parent of $C$ and $i^*$ be the integer such that $d(C, p_C) = \delta^{i^*}$.
        Then, for every $i \in \{0, 1, \ldots, r\} - \{i^*\}$
        a child cluster $C'$ is added with $d(C, C') = \delta^i$.
    \end{enumerate}
    For every new edge $(C, C')$ where $C'$ is the leaf, we let the upward label $d(C', C) = \delta \cdot d(C, C')$.
    For every cluster $C$, we use $d_p(C)$ as a shorthand for $d(C, p_C)$.
\end{definition}

The \emph{degree} and \emph{color} of the clusters are defined as follows.

\begin{definition}[Cluster degrees]
    For a cluster $C \in \mC_r$, we define its degree in the tree as 
    $$
    d_\mT(C) := \sum_{C': (C, C') \in \mE_r} d(C, C'),
    $$
    We also use $\Delta_r$ to denote the maximum degree among the clusters.
\end{definition}

\begin{definition}[Color] \label{def:color}
    For a cluster $C \in \mC_r$, its color $\tau(C)$ is the integer $0 \leq i \leq r$ such that $C$ was first added in $\mT_i$.
\end{definition}

\begin{figure}
    \centering
    \resizebox{\textwidth}{!}{
\begin{tikzpicture}[>=stealth, transform shape]

\usetikzlibrary{calc}
\node[draw, rounded corners=5pt, fill=black!60, minimum width=21.5cm, minimum height=1.5cm, font=\LARGE] (Bottom) at (2.5,0) {$C_0$};

\node[draw, rounded corners=5pt, fill=white, minimum width=3.5cm, minimum height=1cm] (RectLeft) at (-5,3) {};
\node[draw, rounded corners=5pt, fill=black!20, minimum width=3.5cm, minimum height=1cm] (RectMid1) at (0,3) {};
\node[draw, rounded corners=5pt, fill=black!40, minimum width=3.5cm, minimum height=1cm] (RectMid2) at (5,3) {};
\node[draw, rounded corners=5pt, fill=black!60, minimum width=3.5cm, minimum height=1cm, font=\LARGE] (RectMid3) at (10,3) {$C_1$};

\draw[->] ([xshift=-7.5cm]Bottom.north) -- node[left]{\(\delta_3\)} (RectLeft.south);
\draw[->] ([xshift=-2.5cm]Bottom.north) -- node[left]{\(\delta_2\)} (RectMid1.south);
\draw[->] ([xshift=+2.5cm]Bottom.north) -- node[left]{\(\delta_1\)} (RectMid2.south);
\draw[->] ([xshift=+7.5cm]Bottom.north) -- node[left]{\(\delta_0\)} (RectMid3.south);

\node[draw, rounded corners=5pt, fill=white, minimum size=1cm] (M11) at (-2,6) {};
\node[draw, rounded corners=5pt, fill=white, minimum size=1cm] (M12) at (0,6) {};
\node[draw, rounded corners=5pt, fill=white, minimum size=1cm] (M13) at (2,6) {};
\draw[->] (RectMid1.north) -- node[right]{\(\delta_2\)} (M11.south);
\draw[->] (RectMid1.north) -- node[right]{\(\delta_1\)} (M12.south);
\draw[->] (RectMid1.north) -- node[right]{\(\delta_0\)} (M13.south);

\node[draw, rounded corners=5pt, fill=white, minimum size=1cm] (M21) at (3.5,6) {};
\node[draw, rounded corners=5pt, fill=black!20, minimum size=1cm] (M22) at (5,6) {};
\node[draw, rounded corners=5pt, fill=black!20, minimum size=1cm] (M23) at (6.5,6) {};
\draw[->] (RectMid2.north) -- node[right]{\(\delta_3\)} (M21.south);
\draw[->] (RectMid2.north) -- node[right]{\(\delta_1\)} (M22.south);
\draw[->] (RectMid2.north) -- node[right]{\(\delta_0\)} (M23.south);

\node[draw, rounded corners=5pt, fill=white, minimum size=1cm] (M31) at (8,6) {}; 
\node[draw, rounded corners=5pt, fill=black!20, minimum size=1cm] (M32) at (10,6) {};
\node[draw, rounded corners=5pt, fill=black!40, minimum size=1cm] (M33) at (12,6) {};
\draw[->] (RectMid3.north) -- node[right]{\(\delta_3\)} (M31.south);
\draw[->] (RectMid3.north) -- node[right]{\(\delta_2\)} (M32.south);
\draw[->] (RectMid3.north) -- node[right]{\(\delta_0\)} (M33.south);

\node[draw, rounded corners=2pt, fill=white, minimum size=0.5cm] (M22_top1) at (4,9) {};
\node[draw, rounded corners=2pt, fill=white, minimum size=0.5cm] (M22_top2) at (5,9) {};
\node[draw, rounded corners=2pt, fill=white, minimum size=0.5cm] (M22_top3) at (6,9) {};
\draw[->] (M22.north) -- node[left]{\(\delta_3\)} (M22_top1.south);
\draw[->] (M22.north) -- node[above]{\(\delta_2\)} (M22_top2.south);
\draw[->] (M22.north) -- node[right]{\(\delta_0\)} (M22_top3.south);

\node[draw, rounded corners=2pt, fill=white, minimum size=0.5cm] (M23_top1) at (7,9) {};
\node[draw, rounded corners=2pt, fill=white, minimum size=0.5cm] (M23_top2) at (8,9) {};
\node[draw, rounded corners=2pt, fill=white, minimum size=0.5cm] (M23_top3) at (9,9) {};
\draw[->] (M23.north) -- node[left]{\(\delta_3\)} (M23_top1.south);
\draw[->] (M23.north) -- node[above]{\(\delta_2\)} (M23_top2.south);
\draw[->] (M23.north) -- node[right]{\(\delta_0\)} (M23_top3.south);

\node[draw, rounded corners=2pt, fill=white, minimum size=0.5cm] (M32_top1) at (10,9) {};
\node[draw, rounded corners=2pt, fill=white, minimum size=0.5cm] (M32_top2) at (11,9) {};
\node[draw, rounded corners=2pt, fill=white, minimum size=0.5cm] (M32_top3) at (12,9) {};
\draw[->] (M32.north) -- node[right]{\(\delta_2\)} (M32_top1.south);
\draw[->] (M32.north) -- node[right]{\(\delta_1\)} (M32_top2.south);
\draw[->] (M32.north) -- node[right]{\(\delta_0\)} (M32_top3.south);

\node[draw, rounded corners=2pt, fill=white, minimum size=0.5cm] (M33_1) at (13,9) {};
\node[draw, rounded corners=2pt, fill=black!20, minimum size=0.5cm] (M33_2) at (14,9) {};
\node[draw, rounded corners=2pt, fill=black!20, minimum size=0.5cm] (M33_3) at (15,9) {};
\draw[->] (M33.north) -- node[left]{\(\delta_3\)} (M33_1.south);
\draw[->] (M33.north) -- node[above]{\(\delta_2\)} (M33_2.south);
\draw[->] (M33.north) -- node[right]{\(\delta_0\)} (M33_3.south);

\node[draw, rounded corners=2pt, fill=white, minimum size=0.5cm] (M33_2_a) at (13,12) {};
\node[draw, rounded corners=2pt, fill=white, minimum size=0.5cm] (M33_2_b) at (14,12) {};
\node[draw, rounded corners=2pt, fill=white, minimum size=0.5cm] (M33_2_c) at (15,12) {};
\draw[->] (M33_2.north) -- node[right]{\(\delta_2\)} (M33_2_a.south);
\draw[->] (M33_2.north) -- node[right]{\(\delta_1\)} (M33_2_b.south);
\draw[->] (M33_2.north) -- node[right]{\(\delta_0\)} (M33_2_c.south);

\node[draw, rounded corners=2pt, fill=white, minimum size=0.5cm] (M33_3_a) at (16,12) {};
\node[draw, rounded corners=2pt, fill=white, minimum size=0.5cm] (M33_3_b) at (17,12) {};
\node[draw, rounded corners=2pt, fill=white, minimum size=0.5cm] (M33_3_c) at (18,12) {};
\draw[->] (M33_3.north) -- node[right]{\(\delta_3\)} (M33_3_a.south);
\draw[->] (M33_3.north) -- node[right]{\(\delta_2\)} (M33_3_b.south);
\draw[->] (M33_3.north) -- node[right]{\(\delta_0\)} (M33_3_c.south);

\coordinate (A) at (-9,-1);
\coordinate (B) at (18.5,12.5);



\end{tikzpicture}
    }
    \caption{An illustration of the cluster tree (\cref{def:cluster-tree}) for $r = 3$. The shades represent the cluster colors. 
    $C_0$ is the root. The edge labels are displayed only in the direction away from the root.}
    \label{fig:cluster-tree}
\end{figure}
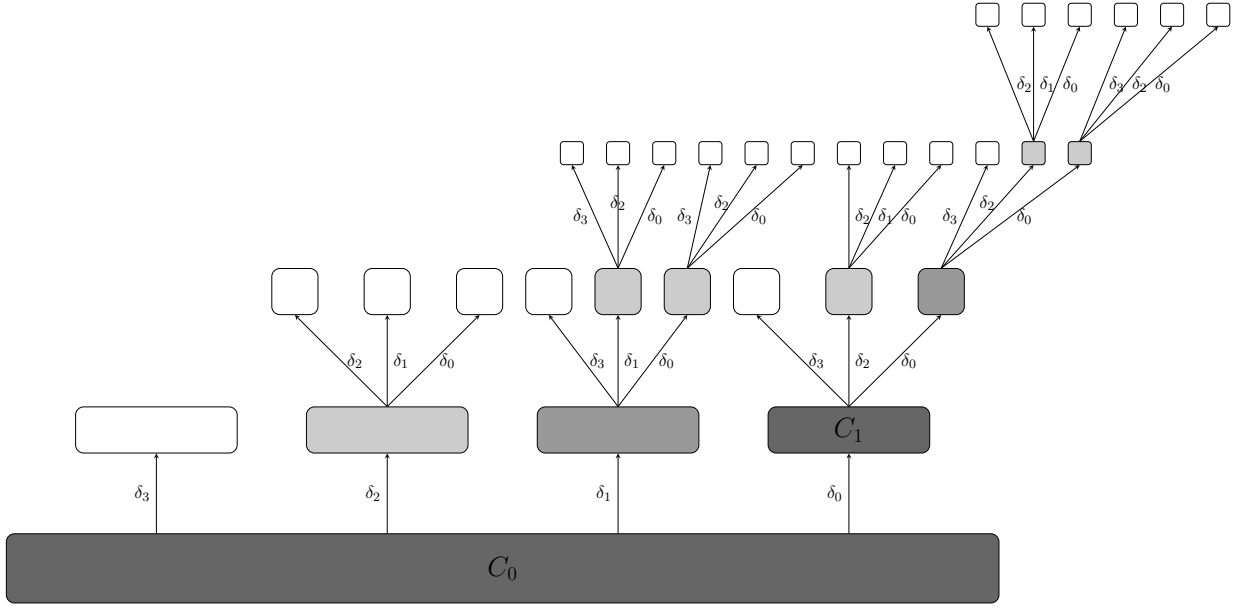

First, we characterize the degrees.

\begin{claim} \label{clm:max-parent-degree}
    For all clusters $C \in \mC_r \setminus \{C_0\}$, it holds
    $$
    d_p(C) \leq \delta^{\tau(C) + 1}.
    $$
\end{claim}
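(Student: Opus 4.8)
The plan is to read the bound off directly from the recursive definition of the cluster tree, by isolating the moment at which $C$ — and hence the edge from $C$ to its parent — is created. Fix $C \in \mathcal{C}_r \setminus \{C_0\}$ and write $t := \tau(C)$, so that $C$ first appears in $\mathcal{T}_t$. Because $\mathcal{T}_r$ is obtained from $\mathcal{T}_t$ by only \emph{appending} new leaf clusters in the subsequent steps, neither the parent $p_C$ nor the label $d(C, p_C)$ changes after step $t$; hence it suffices to determine $d_p(C)$ from the step of the recursion that introduces $C$.

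First I would handle the base case $t = 0$: the only non-root cluster in $\mathcal{T}_0$ is $C_1$, which has parent $C_0$ and $d_p(C_1) = d(C_1, C_0) = \delta = \delta^{0+1}$, so the claim holds with equality. For $t \ge 1$, the cluster $C$ is created in step $t$ as a child of some $p_C \in \mathcal{C}_{t-1}$, and I would split according to the two clauses of \cref{def:cluster-tree}. If $p_C$ is a non-leaf of $\mathcal{T}_{t-1}$ (clause 1), then $d(p_C, C) = \delta^{t}$, and the upward-label rule $d(C', C) = \delta \cdot d(C, C')$ gives $d_p(C) = \delta \cdot \delta^{t} = \delta^{t+1}$, again with equality. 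If $p_C$ is a leaf of $\mathcal{T}_{t-1}$ (clause 2), then $d(p_C, C) = \delta^{i}$ for some $i$ drawn from $\{0, 1, \ldots, t\} \setminus \{i^*\}$, hence $d_p(C) = \delta \cdot \delta^{i} = \delta^{i+1} \le \delta^{t+1}$, using $i \le t$ together with $\delta \ge 1$. Combining the cases yields $d_p(C) \le \delta^{\tau(C)+1}$, as desired.

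There is no real obstacle here — the statement is essentially bookkeeping on top of \cref{def:cluster-tree} — but the one point I would be careful to spell out is \emph{why} a leaf of $\mathcal{T}_{t-1}$ can only spawn children carrying labels $\delta^{i}$ with $i \le t$: this is precisely the index range $\{0,1,\ldots,r\}$ in clause 2 (with $r = t$ here), and it is exactly what forces the exponent to be at most $\tau(C)+1$ rather than larger. I would also flag that the clause-2 case invokes $\delta \ge 1$, which is automatic from the later choice of $\delta$ as a large (integer) degree-scaling parameter. The same argument can equivalently be phrased as a one-line induction on $r$, but the direct version above seems cleanest.
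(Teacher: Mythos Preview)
Your proof is correct and follows essentially the same approach as the paper: both read the bound off from the step of the recursion that creates $C$, observing that any downward label introduced in iteration $t = \tau(C)$ is at most $\delta^{t}$ and then applying the upward-label rule $d_p(C) = \delta \cdot d(p_C, C)$. The paper simply compresses your two-case split into the single observation that ``the degree to any leaf cluster is at most $\delta^{\tau(C)}$'' at the moment $C$ is added, whereas you spell out clauses 1 and 2 of \cref{def:cluster-tree} separately and handle the base case $C = C_1$ explicitly.
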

\ifbool{showstuff}{ \begin{proof}
    Recall that $C$ is added in iteration $\tau(C)$, at which point the degree to any leaf cluster is at most $\delta^{\tau(C)}$. Therefore, we have
    \begin{equation*}
    d_p(C) = d(C, p_C) = \delta \cdot d(p_C, C) \leq \delta^{\tau(C) + 1}. \qedhere
    \end{equation*}
\end{proof}}{}

\begin{claim} \label{clm:non-leaf-degrees-in-tree}
    For all non-leaf clusters $C \in \mC_r$, it holds
    $$
    d_\mT(C) = \sum_{i = 0}^r \delta^i =: \bar{d}_r.
    $$
\end{claim}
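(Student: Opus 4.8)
The plan is to prove the identity by induction on $r$, tracking how the labels on the edges incident to each cluster (i.e.\ the quantities $d(C,C')$ summed in $d_\mT(C)$) evolve when we pass from $\mT_{r-1}$ to $\mT_r$. For the base case $r=0$ the only non-leaf cluster is $C_0$, whose single incident edge $(C_0,C_1)$ carries label $d(C_0,C_1)=1=\delta^0=\bar{d}_0$.

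For the inductive step, assume the claim holds in $\mT_{r-1}$ and let $C\in\mC_r$ be non-leaf. Every cluster created during iteration $r$ is a leaf of $\mT_r$, so $C$ already belongs to $\mC_{r-1}$, and I would distinguish two cases. If $C$ is non-leaf already in $\mT_{r-1}$, then the only change to the edges incident to $C$ during iteration $r$ is the single child added by rule (1) of \cref{def:cluster-tree}, with downward label $\delta^r$ --- rule (2) attaches new children only to clusters that were leaves of $\mT_{r-1}$, and no existing cluster ever becomes the child of a newly created one, so no other edge at $C$ is added or relabeled. Hence, by the induction hypothesis, $d_\mT(C)$ grows from $\bar{d}_{r-1}$ to $\bar{d}_{r-1}+\delta^r=\bar{d}_r$.

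If instead $C$ is a leaf of $\mT_{r-1}$ that becomes non-leaf in $\mT_r$, let $i^*$ be the integer with $d_p(C)=d(C,p_C)=\delta^{i^*}$. By rule (2) of \cref{def:cluster-tree}, $C$ acquires exactly one child with downward label $\delta^i$ for every $i\in\{0,1,\dots,r\}\setminus\{i^*\}$, and these are the only new edges incident to $C$. Adding in the edge to its parent, the labels on the edges incident to $C$ in $\mT_r$ are precisely $\{\delta^i: 0\le i\le r\}$, so $d_\mT(C)=\sum_{i=0}^r\delta^i=\bar{d}_r$, as desired.

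The one point requiring care --- and essentially the only place the argument is not pure bookkeeping --- is checking that $i^*$ in the last case lies in $\{0,1,\dots,r\}$, so that $\{0,\dots,r\}\setminus\{i^*\}$ together with $\{i^*\}$ exactly recovers $\{0,\dots,r\}$. This follows from \cref{clm:max-parent-degree}: since $d_p(C)=\delta^{i^*}\le\delta^{\tau(C)+1}$ and $\delta>1$ we get $i^*\le\tau(C)+1\le r$ (using $\tau(C)\le r-1$ because $C\in\mC_{r-1}$), while $i^*\ge 0$ is immediate; alternatively this can be read off directly from the recursive definition of the labels, where the downward label of any edge created in iteration $j$ is at most $\delta^j$. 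Everything else is a routine unrolling of rules (1) and (2), so I expect no further difficulty.
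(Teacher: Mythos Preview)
Your proof is correct and follows essentially the same approach as the paper: both track how the labels incident to a non-leaf cluster accumulate across iterations, with rule~(2) supplying the labels $\{\delta^i : i \ne i^*\}$ when the cluster first becomes non-leaf and rule~(1) contributing one new $\delta^j$ per subsequent iteration. You organize this as a formal induction on $r$ (with a separate base case handling $C_0$) whereas the paper follows a single cluster through all iterations directly, and you are a bit more explicit than the paper about verifying $i^*\in\{0,\dots,r\}$; otherwise the arguments coincide.
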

\ifbool{showstuff}{ \begin{proof}
    Take a non-leaf cluster $C$, and consider the iterative construction of $\mT_r$. Let $i^*$ be such that $d_p(C) = \delta^{i^*}$.
    Recall that $C$ is added in iteration $\tau(C) < r$ at which point it is a leaf.
    In iteration $\tau(C) + 1$, for every $i \in \{0, 1, \ldots, r\} - \{i^*\}$ a leaf $C'$ is added to $C$ with $d(C, C') = \delta^i$.
    Considering the degree to the parent $d_p(C) = \delta^{i^*}$, the total degree of $C$ at this point is
    $$
    \sum_{i=0}^{\tau(C) + 1} \delta^i.
    $$
    This proves the claim for cases where $r = \tau(C) + 1$.
    Otherwise, in the proceeding iterations $i$ from $\tau(C) + 2$ to $r$,
    a leaf $C'$ is added to $C$ with $d(C, C') = \delta^i$.
    Therefore, after the $r$-th iteration it holds:
    \begin{equation*}
    d_\mT(C) = \sum_{i = 0}^r \delta^i. \qedhere
    \end{equation*}
\end{proof}}{}

\begin{claim} \label{clm:tree-max-degree}
    The maximum degree among the clusters is $\Delta_r = \delta^{r + 1}$.
\end{claim}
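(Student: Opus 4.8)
The plan is to prove the two bounds $d_\mT(C) \le \delta^{r+1}$ for every cluster $C \in \mC_r$ and $d_\mT(C') = \delta^{r+1}$ for at least one cluster $C'$; together these give $\Delta_r = \delta^{r+1}$. Both directions are essentially bookkeeping on \cref{def:cluster-tree} together with \cref{clm:max-parent-degree} and \cref{clm:non-leaf-degrees-in-tree}.

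For the upper bound I would split on whether $C$ is a leaf of $\mT_r$. If $C$ is not a leaf (in particular if $C = C_0$), then \cref{clm:non-leaf-degrees-in-tree} gives $d_\mT(C) = \bar{d}_r = \sum_{i=0}^{r}\delta^i$, and since $\delta \ge 2$ we have $\sum_{i=0}^{r}\delta^i = \frac{\delta^{r+1}-1}{\delta-1} \le \delta^{r+1}-1 < \delta^{r+1}$. If instead $C$ is a leaf, then $C \neq C_0$ and $C$ has no children in the rooted tree $\mT_r$, so its only incident edge is the one toward its parent $p_C$; hence $d_\mT(C) = d(C, p_C) = d_p(C)$, and \cref{clm:max-parent-degree} bounds this by $\delta^{\tau(C)+1}$, which is at most $\delta^{r+1}$ because $\tau(C) \le r$ for every cluster of $\mC_r$ (by \cref{def:color}). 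Thus $d_\mT(C) \le \delta^{r+1}$ in all cases.

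For tightness I would exhibit a single leaf of degree exactly $\delta^{r+1}$. If $r = 0$, then $C_1$ is a leaf with $d_\mT(C_1) = d(C_1, C_0) = \delta = \delta^{0+1}$. If $r \ge 1$, then $C_0$ is a non-leaf cluster of $\mT_{r-1}$, so item~1 of the recursive step in \cref{def:cluster-tree} adds a child $C'$ of $C_0$ with $d(C_0, C') = \delta^{r}$; by the upward-label rule $d(C', C_0) = \delta \cdot d(C_0, C') = \delta^{r+1}$, and since $C'$ is a leaf of $\mT_r$ its only incident edge is this one, so $d_\mT(C') = \delta^{r+1}$. Combining with the upper bound yields $\Delta_r = \delta^{r+1}$.

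I do not expect a real obstacle here; the only points that need a little care are the convention that the root $C_0$ counts as a non-leaf (which is what legitimizes identifying a leaf's tree-degree with its parent-degree in the case analysis), and the elementary inequality $\sum_{i=0}^{r}\delta^i < \delta^{r+1}$, which relies on $\delta \ge 2$.
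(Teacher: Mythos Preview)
Your proposal is correct and follows essentially the same approach as the paper: both split on leaf versus non-leaf clusters, bound the non-leaf degree via the geometric sum from \cref{clm:non-leaf-degrees-in-tree}, and identify a leaf's tree-degree with its parent-degree. The only cosmetic difference is that for the leaf upper bound you invoke \cref{clm:max-parent-degree}, whereas the paper directly observes that $d(p_C,C)$ ranges over $\delta^0,\ldots,\delta^r$; and you spell out an explicit witness for tightness, which the paper leaves implicit.
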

\ifbool{showstuff}{ \begin{proof}
    The maximum degree among the leaf clusters is $\delta^{r+1}$,
    because for a leaf cluster $C$, it holds
    $$
    d_\mT(C) = d_p(C) = \delta \cdot d(p_C, C),
    $$
    and $d(p_C, C)$ ranges from $\delta^0$ to $\delta^r$.
    For any non-leaf cluster, by \cref{clm:non-leaf-degrees-in-tree}, the degree is 
    \begin{equation*}
    \sum_{i = 0}^r \delta^i
    = \frac{\delta^{r+1} - 1}{\delta - 1}
    \leq \delta^{r+1}.
    \qedhere
    \end{equation*}
\end{proof}}{}

The following claim states that clusters with the same total degrees
have similar outgoing labels.
This is a crucial fact in our coupling arguments.

\begin{claim} \label{clm:same-outgoing-labels-in-tree}
    Take any two clusters $C, C' \in \mC_r$.
    If $d_\mT(C) = d_\mT(C')$, then $C$ and $C'$ have the same set of outgoing edge-labels.
\end{claim}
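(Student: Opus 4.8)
The plan is to compute, for every cluster $C \in \mC_r$, the exact set $L(C) := \{ d(C, X) : (C, X) \in \mE_r \}$ of its outgoing edge-labels, and then to observe that $L(C)$ is determined by the single number $d_\mT(C)$. Recall from \cref{def:cluster-tree} that every edge-label is a power of $\delta$: the downward label $d(C, C')$ from a cluster to one of its children lies in $\{\delta^0, \ldots, \delta^r\}$, and the matching upward label $d(C', C) = \delta \cdot d(C, C')$ lies in $\{\delta^1, \ldots, \delta^{r+1}\}$. I would treat three kinds of clusters separately: leaves, the root $C_0$, and the remaining non-root non-leaf clusters.

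A leaf cluster $C$ has exactly one incident edge, namely the one to its parent, so $L(C) = \{d_p(C)\}$ and in particular $d_\mT(C) = d_p(C)$; hence two leaves with $d_\mT(C) = d_\mT(C')$ trivially satisfy $L(C) = \{d_\mT(C)\} = \{d_\mT(C')\} = L(C')$. For the root, $C_0$ has no parent, and, being non-leaf at every stage of the recursion, acquires exactly one new child per iteration: the child $C_1$ with label $\delta^0$ in the base case, and, by the first rule of \cref{def:cluster-tree}, a child with label $\delta^k$ in iteration $k$ for each $1 \le k \le r$. Thus $L(C_0) = \{\delta^0, \ldots, \delta^r\}$. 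The only genuine bookkeeping is the non-root non-leaf case. Fix such a $C$, put $c := \tau(C)$ (so that $c < r$), and write $d_p(C) = \delta^j$, where $j \le c + 1$ by \cref{clm:max-parent-degree}. When $\mT_{c+1}$ is formed, $C$ is still a leaf of $\mC_c$ with parent-label $\delta^j$, so the second rule of \cref{def:cluster-tree} attaches to $C$ children carrying exactly the labels $\{ \delta^i : i \in \{0, \ldots, c+1\} \setminus \{j\} \}$; in every later iteration $k \in \{c+2, \ldots, r\}$, the first rule attaches one more child with label $\delta^k$. Together with the parent edge (label $\delta^j$), and using $j \le c+1$ to see that $\delta^j$ fills the one missing slot, this gives
\[
L(C) = \{\delta^j\} \cup \{ \delta^i : i \in \{0, \ldots, c+1\} \setminus \{j\} \} \cup \{\delta^{c+2}, \ldots, \delta^r\} = \{\delta^0, \delta^1, \ldots, \delta^r\}.
\]
So every non-leaf cluster, root or not, has the same outgoing label set $\{\delta^0, \ldots, \delta^r\}$ — consistent with $d_\mT(C) = \bar{d}_r$ from \cref{clm:non-leaf-degrees-in-tree}.

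It then remains to assemble the cases. If $C$ and $C'$ are both non-leaf, or both leaves, the computations above already give $L(C) = L(C')$ whenever $d_\mT(C) = d_\mT(C')$. So I only need to exclude the mixed case — $C$ a leaf and $C'$ non-leaf with $d_\mT(C) = d_\mT(C')$ — which would force $d_p(C) = \bar{d}_r = \sum_{i=0}^{r} \delta^i$. But $d_p(C)$ is a power of $\delta$, whereas for $r \ge 1$ we have $\delta^r < \bar{d}_r < 2\delta^r \le \delta^{r+1}$ (using $1 + \delta + \cdots + \delta^{r-1} < \delta^r$, valid for $\delta \ge 2$), so $\bar{d}_r$ lies strictly between consecutive powers of $\delta$ and cannot equal $d_p(C)$; and for $r = 0$ the only two clusters $C_0, C_1$ have the distinct degrees $1$ and $\delta$. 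Hence the mixed case never arises, and the claim follows. I expect the main obstacle to be the middle step: tracking precisely which children a non-root cluster collects across all the iterations following its creation, where \cref{clm:max-parent-degree} is exactly the ingredient that makes the parent-label $\delta^j$ patch the single gap in the batch $\{\delta^i : i \neq j\}$ of labels produced by the second rule. Everything else is either immediate (the leaf and root cases) or a one-line estimate (the exclusion of the mixed case).
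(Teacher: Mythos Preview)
Your proof is correct and follows essentially the same approach as the paper: both arguments split into the leaf/non-leaf dichotomy, observe that every non-leaf cluster has the full label set $\{\delta^0,\ldots,\delta^r\}$ (and degree $\bar d_r$), and that every leaf has the single label $d_p(C)=d_\mT(C)$. The paper phrases the split as ``degree $=\bar d_r$'' versus ``degree $\neq\bar d_r$'' and is terser; you are more explicit, in particular you actually justify the step the paper leaves implicit (that no leaf can have degree $\bar d_r$, since $\bar d_r$ is strictly between $\delta^r$ and $\delta^{r+1}$), and your separate treatment of the root $C_0$ is harmless though unnecessary.
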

\ifbool{showstuff}{ \begin{proof}
    If $d_\mT(C) = d_\mT(C') = \bar{d}_r$, then both $C$ and $C'$ are non-leaf clusters.
    Therefore, for every $i \in \{0, 1, \ldots, r\}$ they have exactly one edge with outgoing label $\delta^i$.
    If $d_\mT(C) = d_\mT(C') \neq \bar{d}_r$, then $C$ and $C'$ are both leaves, and they only have one edge with outgoing label $d_p(C) = d_p(C')$.
\end{proof}}{}

The following claim characterizes the color of a child cluster based on the color of the parent, and the label of the edge coming in from the parent.

\begin{claim} \label{clm:child-color}
    Take a cluster $C$ and let $C_i$ be a child of $C$ such that $d(C, C_i) = \delta^i$. Then, it holds
    $$
    \tau(C_i) = \max(i, \tau(C) + 1).
    $$
\end{claim}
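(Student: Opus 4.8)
The plan is to identify, for a child $C_i$ of $C$ with $d(C,C_i)=\delta^i$, the iteration $t$ in which $C_i$ is first created by the recursion of \cref{def:cluster-tree}; since $\tau(C_i)=t$ by \cref{def:color}, it then suffices to prove $t\geq\max(i,\tau(C)+1)$ and $t\leq\max(i,\tau(C)+1)$ separately.

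For the lower bound: since children created in iteration $t$ are always hung off clusters already present in $\mT_{t-1}$, a cluster never receives a child in the same iteration in which it is created, so $C$ appears strictly before $C_i$, i.e. $\tau(C)<t$ and thus $t\geq\tau(C)+1$. Moreover, reading off \cref{def:cluster-tree}, every child created in iteration $t$ carries a downward label $\delta^j$ with $j\leq t$ (the first rule uses $j=t$; the second uses $j\in\{0,\dots,t\}\setminus\{i^*\}$), so applying this to $C_i$ gives $i\leq t$. Hence $t\geq\max(i,\tau(C)+1)$.

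For the upper bound, assume first $C\neq C_0$. Then $C$ is a leaf of $\mT_{\tau(C)}$ at the moment it is created, and in the next iteration, $\tau(C)+1$, the second rule is applied to it exactly once, installing in a single step a child of $C$ for every downward label $\delta^j$ with $j\in\{0,\dots,\tau(C)+1\}\setminus\{i^*\}$, where $\delta^{i^*}=d_p(C)$ and $i^*\leq\tau(C)+1$ by \cref{clm:max-parent-degree}. After that $C$ is non-leaf, so in each iteration $t>\tau(C)+1$ the first rule adds exactly one further child of $C$, with downward label $\delta^t$. Consequently the child of $C$ with downward label $\delta^i$ is created in iteration $\tau(C)+1$ when $i\leq\tau(C)+1$ (the case $i=i^*$ cannot arise, as $C$ has no child with that label), and in iteration $i$ when $i\geq\tau(C)+2$; in both cases the creation iteration equals $\max(i,\tau(C)+1)$, matching the lower bound. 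The root $C_0$ is a boundary case treated separately: it acts as non‑leaf from the outset, and apart from $C_1\in\mT_0$ (pinned down by the base case) its children are produced by the first rule in iterations $t\geq 1$ with label $\delta^t$, for which $\tau=t=\max(t,\tau(C_0)+1)$.

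I expect the upper-bound step to be the main obstacle: it rests on the bookkeeping fact that the single application of the second rule to $C$, in iteration $\tau(C)+1$, creates children of $C$ with \emph{all} small downward labels $\delta^i$, $0\leq i\leq\tau(C)+1$, except the single forbidden value $i^*$, together with the fact that this forbidden value is itself small, $i^*\leq\tau(C)+1$, which is exactly \cref{clm:max-parent-degree}. Once those are nailed down, the case split $i\leq\tau(C)+1$ versus $i>\tau(C)+1$ closes the argument; a strong induction on $\tau(C_i)$ (or on $r$) would work equally well but reduces to the same case distinction.
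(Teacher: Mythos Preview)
Your proof is correct and follows essentially the same approach as the paper: both identify the iteration in which $C_i$ is created by case analysis on whether $i\leq\tau(C)+1$ (second rule applied to $C$ as a leaf in iteration $\tau(C)+1$) or $i>\tau(C)+1$ (first rule applied in iteration $i$). Your version is more detailed, framing it as separate lower and upper bounds, invoking \cref{clm:max-parent-degree} explicitly for $i^*\leq\tau(C)+1$, and treating the root $C_0$ as a boundary case, whereas the paper's proof is a terse direct identification of the creation iteration.
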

\ifbool{showstuff}{ \begin{proof}
    Consider the iterative construction in \cref{def:cluster-tree}.
    Cluster $C$ is added in iteration $\tau(C)$ as a leaf.
    Let $i^*$ be such that $d_P(C) = \delta^{i^*}$.
    In iteration $\tau(C) + 1$, since $C$ is a leaf,
    for every $i \in \{0, 1, \ldots, \tau(C) + 1\} - \{i^*\}$,
    a child $C_i$ is added with $d(C, C_i) = \delta^i$.
    This proves the claim for $i \leq \tau(C) + 1$.

    In iteration $i > \tau(C) + 1$,
    since $C$ is no longer a leaf,
    a single child $C_i$ is added for $C$ with $d(C, C_i) = \delta^i$.
    This proves the claim for $i > \tau(C) + 1$.
\end{proof}}{}

The following claim states that the structure of the subtree of $C \notin \{C_0, C_1\}$ is uniquely determined by $\tau(C)$ and $d_P(C)$.
This observation is implicitly present in Definition 2 and Lemma 7 of \cite{KuhnMW16}.

\begin{claim} \label{clm:identical-subtrees}
    Let $C, C' \in \mC_r \setminus \{C_0, C_1\}$ be two clusters of the same color
    $\tau = \tau(C) = \tau(C')$ 
    that have the same degrees 
    to their parents.
    Then, the subtrees of $C$ and $C'$ are identical (considering the labels on the edges).
\end{claim}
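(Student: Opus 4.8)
The plan is to prove the stronger statement that, for every cluster $C \in \mC_r \setminus \{C_0, C_1\}$, the labeled rooted subtree of $C$ in $\mT_r$ is determined entirely by the pair $(\tau(C), d_p(C))$; applying this to $C$ and $C'$ (which have equal colors and equal degrees to their parents by hypothesis) immediately gives the claim. I would prove it by induction on $r - \tau(C)$ with $r$ fixed, which is a legitimate induction parameter since, as comes out of the analysis, every child of $C$ has strictly larger color than $C$.

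The first step is to read off the children of a cluster $C$ of color $\tau$ directly from \cref{def:cluster-tree}. Since the edge down from $p_C$ to $C$ was created with a label of the form $\delta^j$, $j \ge 0$, its upward label is $d_p(C) = \delta^{j+1}$; write $d_p(C) = \delta^{i^*}$, so $i^* \ge 1$, and by \cref{clm:max-parent-degree} also $i^* \le \tau + 1$. Assuming $\tau < r$ (the case $\tau = r$ is trivial, as $C$ is then a leaf of $\mT_r$), unroll the two rules of \cref{def:cluster-tree}: in iteration $\tau+1$ the still-leaf cluster $C$ gains one child with downward label $\delta^i$ for each $i \in \{0,\dots,\tau+1\}\setminus\{i^*\}$, and in every later iteration $j \in \{\tau+2,\dots,r\}$ it gains exactly one new child with downward label $\delta^j$. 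Hence the set of downward labels on the edges from $C$ to its children is exactly $\{\delta^i : 0 \le i \le r\}\setminus\{d_p(C)\}$, a function of $r$ and $d_p(C)$ alone. Each child $C_i$ with $d(C,C_i)=\delta^i$ is a leaf at birth, so the upward label of that edge is $\delta^{i+1}$, i.e.\ $d_p(C_i)=\delta^{i+1}$, and by \cref{clm:child-color} its color is $\tau(C_i)=\max(i,\tau+1)$. Thus the whole list of pairs $(\tau(C_i),d_p(C_i))$ over the children of $C$ is a function of $(r,\tau,d_p(C))$.

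The induction is then routine. In the base case $r-\tau=0$ the cluster has color $r$, never receives children, and its subtree is the single vertex $C$. For $\tau<r$, each child $C_i$ satisfies $\tau(C_i)\ge\tau+1>\tau$, so $r-\tau(C_i)<r-\tau$; moreover, since $C\neq C_0$ its children are neither the root $C_0$ nor $C_1$ (the unique child of $C_0$), hence $C_i\in\mC_r\setminus\{C_0,C_1\}$ and the inductive hypothesis applies to it. The subtree of $C$ therefore consists of the root $C$ together with, for each $i\in\{0,\dots,r\}$ with $\delta^i\neq d_p(C)$, an edge labeled $(\delta^i,\delta^{i+1})$ leading to a subtree determined by $(\max(i,\tau+1),\delta^{i+1})$ — all a function of $(r,\tau,d_p(C))$, which closes the induction. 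Tracking only downward labels is enough, since each upward label is $\delta$ times the corresponding downward label.

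I do not foresee a genuine obstacle: the content is entirely an unrolling of \cref{def:cluster-tree}. The one point that needs care is the label bookkeeping — verifying $1\le i^*\le\tau+1\le r$ so that the children created in iteration $\tau+1$ (labels $\{0,\dots,\tau+1\}\setminus\{i^*\}$) together with the singleton children of iterations $\tau+2,\dots,r$ cover $\{0,\dots,r\}\setminus\{i^*\}$ exactly, with no overlaps and no omissions — together with checking that the recursion never leaves $\mC_r\setminus\{C_0,C_1\}$, which is exactly why the hypothesis must exclude $C_0$ and $C_1$ (a child of a non-root cluster is never $C_0$ or $C_1$).
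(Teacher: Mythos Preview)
Your proposal is correct and follows essentially the same approach as the paper: a downward induction on the color (equivalently, induction on $r-\tau$) with base case $\tau=r$, using \cref{clm:child-color} to identify the color and parent-degree of each child and then applying the inductive hypothesis. You are slightly more explicit than the paper about the bookkeeping (verifying $1\le i^*\le\tau+1$ and that the recursion stays in $\mC_r\setminus\{C_0,C_1\}$), but the structure of the argument is identical.
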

\ifbool{showstuff}{ \begin{proof}
For a fixed $r$, we prove the claim by induction on $\tau$.
The base case is $\tau = r$ where the claim holds because both $C$ and $C'$ are leaves.

Assume the claim holds for colors larger than $\tau$,
and let $C$ and $C'$ be two clusters such that 
$$\tau(C) = \tau(C') = \tau \qquad \textnormal{and} \qquad d_p(C) = d_p(C') = \delta^{i^*}.$$
For every $i \in \{0, 1, \ldots, r\} - \{i^*\}$, cluster $C$ has a child cluster $C_i$ with $d(C, C_i) = \delta^i$. Similarly, $C'$ has a child $C'_i$ with $d(C', C'_i) = \delta^i$.
It holds $$d_p(C_i) = d_p(C'_i) = \delta^{i+1}.$$
Furthermore, by \cref{clm:child-color}, we have
$$
\tau(C_i) = \tau(C'_i) \geq \tau + 1.
$$
Therefore, by the induction hypothesis, the subtree of $C_i$ is identical to that of $C'_i$. This implies the claim for the subtrees of $C$ and $C'$ and concludes the proof.
\end{proof}}{}

The following claim uses \cref{clm:identical-subtrees}
to show that starting from two clusters $C$ and $C'$, taking the edges with the same labels leads to isomorphic subtrees if the edge-label is large enough.

\begin{claim} \label{clm:large-steps}
    Take two clusters $C, C' \in \mC_r$. Let $\tau = \max(\tau(C), \tau(C')) + 2$, and $B$ and $B'$ be \emph{neighbors} of $C$ and $C'$ resp.\ such that $d(C', B) = d(C, B') \geq \delta^{\tau}$. Then, the subtrees of $B$ and $B'$ are identical (considering the labels on the edges).
\end{claim}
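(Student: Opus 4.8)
The plan is to derive \cref{clm:large-steps} from \cref{clm:identical-subtrees}, which already guarantees that two clusters outside $\{C_0,C_1\}$ with the same color and the same parent-degree have identical (labeled) subtrees. So it suffices to check three things about the neighbors $B$ of $C$ and $B'$ of $C'$ appearing in the statement (where, as the context makes clear, the hypothesis is $d(C,B)=d(C',B')=\colon\delta^i$ with $i\ge\tau$): that $\tau(B)=\tau(B')$, that $d_p(B)=d_p(B')$, and that $B,B'\notin\{C_0,C_1\}$.

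The first step is to observe that $B$ must be a \emph{child} of $C$, not its parent, and likewise $B'$ a child of $C'$. Indeed, a neighbor of $C$ in $\mT_r$ is either its parent or one of its children; if $B$ were the parent, then $d(C,B)=d_p(C)\le\delta^{\tau(C)+1}$ by \cref{clm:max-parent-degree}, contradicting $d(C,B)=\delta^i\ge\delta^\tau=\delta^{\max(\tau(C),\tau(C'))+2}\ge\delta^{\tau(C)+2}$. The same argument rules out $B'$ being the parent of $C'$. Now, since $i\ge\tau\ge\tau(C)+2>\tau(C)+1$, \cref{clm:child-color} gives $\tau(B)=\max(i,\tau(C)+1)=i$, and symmetrically $\tau(B')=i$, so $B$ and $B'$ share the same color. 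For the parent-degrees, the upward-label rule in \cref{def:cluster-tree} gives $d_p(B)=\delta\cdot d(C,B)=\delta^{i+1}$ and likewise $d_p(B')=\delta^{i+1}$. Finally, $B\neq C_0$ because $C_0$ has no parent while $B$ has parent $C$; and $B\neq C_1$ because $d_p(C_1)=\delta$ whereas $d_p(B)=\delta^{i+1}\ge\delta^3$ (using $\tau\ge 2$, hence $i\ge 2$). The same holds for $B'$. Applying \cref{clm:identical-subtrees} to $B$ and $B'$ then yields the claim.

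I do not expect a substantive obstacle here; the argument is a short bookkeeping exercise. The only points requiring care are distinguishing the "neighbor is the parent" case from the "neighbor is a child" case — handled cleanly by the degree bound of \cref{clm:max-parent-degree} — and excluding the boundary clusters $C_0$ and $C_1$ from the appeal to \cref{clm:identical-subtrees}. Both of these go through precisely because of the slack built into the hypothesis $i\ge\max(\tau(C),\tau(C'))+2$; this is the role of the "$+2$" in the definition of $\tau$, and it is worth flagging that a weaker bound (say $+1$) would not suffice to rule out the parent case.
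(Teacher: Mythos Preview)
Your proof is correct and follows essentially the same route as the paper: use \cref{clm:max-parent-degree} to force $B,B'$ to be children, then \cref{clm:child-color} to equate their colors, then invoke \cref{clm:identical-subtrees}. You are in fact slightly more thorough than the paper, which does not explicitly verify $d_p(B)=d_p(B')$ or the exclusion $B,B'\notin\{C_0,C_1\}$ before appealing to \cref{clm:identical-subtrees}; your remark on the necessity of the ``$+2$'' is also apt.
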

\ifbool{showstuff}{ \begin{proof}
    Recall that $d_p(C) \leq \delta^{\tau(C) + 1} < \delta^\tau$ (\cref{clm:max-parent-degree}). The same holds for $C'$. 
    Therefore, $B$ and $B'$ must be children of $C$ and $C'$ respectively.
    Let $i \geq \tau$ be such that $d(C, B) = d(C', B') = \delta^i$. Then, by \cref{clm:child-color}, we have 
    $$
    \tau(B) = \max( i, \tau(C) + 1) \qquad \textnormal{and} \qquad
    \tau(B') = \max( i, \tau(C') + 1).
    $$
    Since $i \geq \tau = \max(\tau(C), \tau(C')) + 2$, it holds $\tau(B) = \tau(B')$.
    Hence, we can invoke \cref{clm:identical-subtrees} to derive the claim.
\end{proof}}{}

\setbool{showstuff}{true}

\subsection{Characterizing Distinguishing Sequences} \label{subsec:distinguishing-sequences}
Now, we move on to a key part of this section.
In \cref{sec:matching-lb}, we present a coupling which couples paths that start at different clusters and follow edges with the same labels.
For the coupling to succeed, it is critical that the corresponding clusters between the two paths have the same degrees.
Here, we examine what sequences of edge labels the paths can take
that lead to clusters with different degrees,
i.e.\ which sequences cause the coupling to fail.

\begin{definition}[Label Sequences] \label{def:label-sequence}
    Let $d_1, d_2, \ldots, d_L$ be a sequence of edge labels, i.e.\ for $1 \leq i \leq L$, we have $d_i = \delta^j$ for some integer $0 \leq j \leq r + 1$.
    The path corresponding to $\{d_i\}_i$ starting at cluster $C$ is defined as $B_0, B_1, \ldots, B_{L'}$ such that (1) $B_0 := C$, (2) $B_i$ is the neighbor of $B_{i-1}$ with $d(B_{i-1}, B_i) = d_i$, and (3) if at any point $B_{i-1}$ does not have an edge with outgoing label $d_i$, the path stops.
\end{definition}

\begin{definition}[Distinguishing Sequences] \label{def:distinguishing-sequence}
    A label sequence $d_1, d_2, \ldots, d_L$ distinguishes between $C, C' \in \mC_r$ if the corresponding paths starting at $C$ and $C'$ reach clusters $B_i$ and $B_i'$ at step $i$ such that $d_\mT(B_i) \neq d_\mT(B_i')$.
\end{definition}
\begin{remark}
    Recall that all the non-leaf clusters have the same degree (\cref{clm:non-leaf-degrees-in-tree}).
    Furthermore, when two clusters have the same degree, they have the same set of outgoing edge-labels (\cref{clm:same-outgoing-labels-in-tree}).
    Therefore, for a label sequence to distinguish between $C$ and $C'$,
    at least one of the paths must reach a leaf cluster,
    while the other one reaches a non-leaf cluster or a leaf with a different degree.
\end{remark}

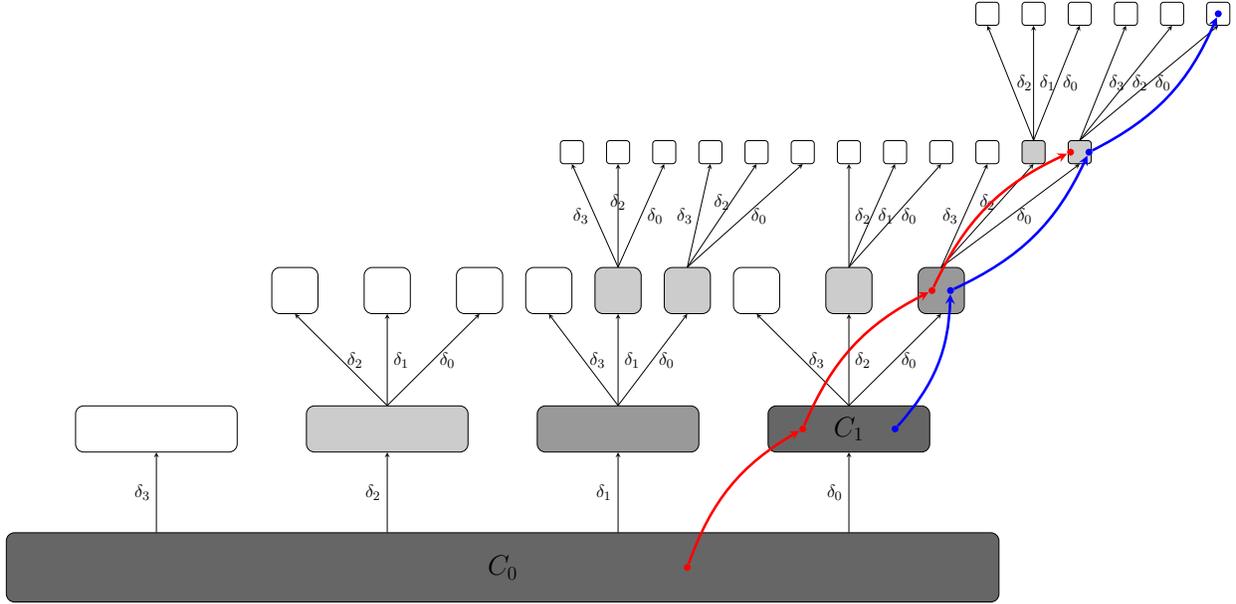
\begin{figure}
    \centering
    \resizebox{\textwidth}{!}{
\begin{tikzpicture}[>=stealth, transform shape]
\usetikzlibrary{calc}
\node[draw, rounded corners=5pt, fill=black!60, minimum width=21.5cm, minimum height=1.5cm, font=\LARGE] (Bottom) at (2.5,0) {$C_0$};

\node[draw, rounded corners=5pt, fill=white, minimum width=3.5cm, minimum height=1cm] (RectLeft) at (-5,3) {};
\node[draw, rounded corners=5pt, fill=black!20, minimum width=3.5cm, minimum height=1cm] (RectMid1) at (0,3) {};
\node[draw, rounded corners=5pt, fill=black!40, minimum width=3.5cm, minimum height=1cm] (RectMid2) at (5,3) {};
\node[draw, rounded corners=5pt, fill=black!60, minimum width=3.5cm, minimum height=1cm, font=\LARGE] (RectMid3) at (10,3) {$C_1$};

\draw[->] ([xshift=-7.5cm]Bottom.north) -- node[left]{\(\delta_3\)} (RectLeft.south);
\draw[->] ([xshift=-2.5cm]Bottom.north) -- node[left]{\(\delta_2\)} (RectMid1.south);
\draw[->] ([xshift=+2.5cm]Bottom.north) -- node[left]{\(\delta_1\)} (RectMid2.south);
\draw[->] ([xshift=+7.5cm]Bottom.north) -- node[left]{\(\delta_0\)} (RectMid3.south);

\node[draw, rounded corners=5pt, fill=white, minimum size=1cm] (M11) at (-2,6) {};
\node[draw, rounded corners=5pt, fill=white, minimum size=1cm] (M12) at (0,6) {};
\node[draw, rounded corners=5pt, fill=white, minimum size=1cm] (M13) at (2,6) {};
\draw[->] (RectMid1.north) -- node[right]{\(\delta_2\)} (M11.south);
\draw[->] (RectMid1.north) -- node[right]{\(\delta_1\)} (M12.south);
\draw[->] (RectMid1.north) -- node[right]{\(\delta_0\)} (M13.south);

\node[draw, rounded corners=5pt, fill=white, minimum size=1cm] (M21) at (3.5,6) {};
\node[draw, rounded corners=5pt, fill=black!20, minimum size=1cm] (M22) at (5,6) {};
\node[draw, rounded corners=5pt, fill=black!20, minimum size=1cm] (M23) at (6.5,6) {};
\draw[->] (RectMid2.north) -- node[right]{\(\delta_3\)} (M21.south);
\draw[->] (RectMid2.north) -- node[right]{\(\delta_1\)} (M22.south);
\draw[->] (RectMid2.north) -- node[right]{\(\delta_0\)} (M23.south);

\node[draw, rounded corners=5pt, fill=white, minimum size=1cm] (M31) at (8,6) {}; 
\node[draw, rounded corners=5pt, fill=black!20, minimum size=1cm] (M32) at (10,6) {};
\node[draw, rounded corners=5pt, fill=black!40, minimum size=1cm] (M33) at (12,6) {};
\draw[->] (RectMid3.north) -- node[right]{\(\delta_3\)} (M31.south);
\draw[->] (RectMid3.north) -- node[right]{\(\delta_2\)} (M32.south);
\draw[->] (RectMid3.north) -- node[right]{\(\delta_0\)} (M33.south);

\node[draw, rounded corners=2pt, fill=white, minimum size=0.5cm] (M22_top1) at (4,9) {};
\node[draw, rounded corners=2pt, fill=white, minimum size=0.5cm] (M22_top2) at (5,9) {};
\node[draw, rounded corners=2pt, fill=white, minimum size=0.5cm] (M22_top3) at (6,9) {};
\draw[->] (M22.north) -- node[left]{\(\delta_3\)} (M22_top1.south);
\draw[->] (M22.north) -- node[above]{\(\delta_2\)} (M22_top2.south);
\draw[->] (M22.north) -- node[right]{\(\delta_0\)} (M22_top3.south);

\node[draw, rounded corners=2pt, fill=white, minimum size=0.5cm] (M23_top1) at (7,9) {};
\node[draw, rounded corners=2pt, fill=white, minimum size=0.5cm] (M23_top2) at (8,9) {};
\node[draw, rounded corners=2pt, fill=white, minimum size=0.5cm] (M23_top3) at (9,9) {};
\draw[->] (M23.north) -- node[left]{\(\delta_3\)} (M23_top1.south);
\draw[->] (M23.north) -- node[above]{\(\delta_2\)} (M23_top2.south);
\draw[->] (M23.north) -- node[right]{\(\delta_0\)} (M23_top3.south);

\node[draw, rounded corners=2pt, fill=white, minimum size=0.5cm] (M32_top1) at (10,9) {};
\node[draw, rounded corners=2pt, fill=white, minimum size=0.5cm] (M32_top2) at (11,9) {};
\node[draw, rounded corners=2pt, fill=white, minimum size=0.5cm] (M32_top3) at (12,9) {};
\draw[->] (M32.north) -- node[right]{\(\delta_2\)} (M32_top1.south);
\draw[->] (M32.north) -- node[right]{\(\delta_1\)} (M32_top2.south);
\draw[->] (M32.north) -- node[right]{\(\delta_0\)} (M32_top3.south);

\node[draw, rounded corners=2pt, fill=white, minimum size=0.5cm] (M33_1) at (13,9) {};
\node[draw, rounded corners=2pt, fill=black!20, minimum size=0.5cm] (M33_2) at (14,9) {};
\node[draw, rounded corners=2pt, fill=black!20, minimum size=0.5cm] (M33_3) at (15,9) {};
\draw[->] (M33.north) -- node[left]{\(\delta_3\)} (M33_1.south);
\draw[->] (M33.north) -- node[above]{\(\delta_2\)} (M33_2.south);
\draw[->] (M33.north) -- node[right]{\(\delta_0\)} (M33_3.south);

\node[draw, rounded corners=2pt, fill=white, minimum size=0.5cm] (M33_2_a) at (13,12) {};
\node[draw, rounded corners=2pt, fill=white, minimum size=0.5cm] (M33_2_b) at (14,12) {};
\node[draw, rounded corners=2pt, fill=white, minimum size=0.5cm] (M33_2_c) at (15,12) {};
\draw[->] (M33_2.north) -- node[right]{\(\delta_2\)} (M33_2_a.south);
\draw[->] (M33_2.north) -- node[right]{\(\delta_1\)} (M33_2_b.south);
\draw[->] (M33_2.north) -- node[right]{\(\delta_0\)} (M33_2_c.south);

\node[draw, rounded corners=2pt, fill=white, minimum size=0.5cm] (M33_3_a) at (16,12) {};
\node[draw, rounded corners=2pt, fill=white, minimum size=0.5cm] (M33_3_b) at (17,12) {};
\node[draw, rounded corners=2pt, fill=white, minimum size=0.5cm] (M33_3_c) at (18,12) {};
\draw[->] (M33_3.north) -- node[right]{\(\delta_3\)} (M33_3_a.south);
\draw[->] (M33_3.north) -- node[right]{\(\delta_2\)} (M33_3_b.south);
\draw[->] (M33_3.north) -- node[right]{\(\delta_0\)} (M33_3_c.south);
\node[fill=red, circle, inner sep=1.5pt] (dot_Bottom_red) at ([xshift=40mm]Bottom.center) {};
\node[fill=red, circle, inner sep=1.5pt] (dot_RMid3_red) at ([xshift=-10mm]RectMid3.center) {};
\node[fill=red, circle, inner sep=1.5pt] (dot_M33_red) at ([xshift=-2mm]M33.center) {};
\node[fill=red, circle, inner sep=1.5pt] (dot_M33_3_red) at ([xshift=-2mm]M33_3.center) {};

\node[fill=blue, circle, inner sep=1.5pt] (dot_RMid3_blue) at ([xshift=10mm]RectMid3.center) {};
\node[fill=blue, circle, inner sep=1.5pt] (dot_M33_blue) at ([xshift=2mm]M33.center) {};
\node[fill=blue, circle, inner sep=1.5pt] (dot_M33_3_blue) at ([xshift=2mm]M33_3.center) {};
\node[fill=blue, circle, inner sep=1.5pt] (dot_M33_3_c_blue) at (M33_3_c.center) {};

\draw[->,red, ultra thick] (dot_Bottom_red) to[bend left=20] (dot_RMid3_red);

\draw[->,red, ultra thick]
(dot_RMid3_red)
to[bend left=20] (dot_M33_red);

\draw[->,red, ultra thick]
(dot_M33_red)to[bend left=20] (dot_M33_3_red);

\draw[->,blue, ultra thick] (dot_RMid3_blue) to[bend right=20] (dot_M33_blue); 

\draw[->,blue, ultra thick]
(dot_M33_blue)
to[bend right=20] (dot_M33_3_blue) ;

\draw[->,blue, ultra thick]
(dot_M33_3_blue)
to[bend right=20] (dot_M33_3_c_blue);




\end{tikzpicture}
    }
    \caption{An example of distinguishing sequences, following label sequence $(\delta_0, \delta_0, \delta_0)$.
    The paths start at $C_0$ and $C_1$. 
    One reaches a non-leaf cluster that has degree $\delta_0 + \delta_1 + \delta_2 + \delta_3$,
    whereas the other one reaches a leaf of degree $\delta_1$.
    }
    \label{fig:distinguishing-1}
\end{figure}
\begin{figure}
    \centering
    \resizebox{\textwidth}{!}{
\begin{tikzpicture}[>=stealth, transform shape]
\usetikzlibrary{calc}
\node[draw, rounded corners=5pt, fill=black!60, minimum width=21.5cm, minimum height=1.5cm, font=\LARGE] (Bottom) at (2.5,0) {$C_0$};

\node[draw, rounded corners=5pt, fill=white, minimum width=3.5cm, minimum height=1cm] (RectLeft) at (-5,3) {};
\node[draw, rounded corners=5pt, fill=black!20, minimum width=3.5cm, minimum height=1cm] (RectMid1) at (0,3) {};
\node[draw, rounded corners=5pt, fill=black!40, minimum width=3.5cm, minimum height=1cm] (RectMid2) at (5,3) {};
\node[draw, rounded corners=5pt, fill=black!60, minimum width=3.5cm, minimum height=1cm, font=\LARGE] (RectMid3) at (10,3) {$C_1$};

\draw[->] ([xshift=-7.5cm]Bottom.north) -- node[left]{\(\delta_3\)} (RectLeft.south);
\draw[->] ([xshift=-2.5cm]Bottom.north) -- node[left]{\(\delta_2\)} (RectMid1.south);
\draw[->] ([xshift=+2.5cm]Bottom.north) -- node[left]{\(\delta_1\)} (RectMid2.south);
\draw[->] ([xshift=+7.5cm]Bottom.north) -- node[left]{\(\delta_0\)} (RectMid3.south);

\node[draw, rounded corners=5pt, fill=white, minimum size=1cm] (M11) at (-2,6) {};
\node[draw, rounded corners=5pt, fill=white, minimum size=1cm] (M12) at (0,6) {};
\node[draw, rounded corners=5pt, fill=white, minimum size=1cm] (M13) at (2,6) {};
\draw[->] (RectMid1.north) -- node[right]{\(\delta_2\)} (M11.south);
\draw[->] (RectMid1.north) -- node[right]{\(\delta_1\)} (M12.south);
\draw[->] (RectMid1.north) -- node[right]{\(\delta_0\)} (M13.south);

\node[draw, rounded corners=5pt, fill=white, minimum size=1cm] (M21) at (3.5,6) {};
\node[draw, rounded corners=5pt, fill=black!20, minimum size=1cm] (M22) at (5,6) {};
\node[draw, rounded corners=5pt, fill=black!20, minimum size=1cm] (M23) at (6.5,6) {};
\draw[->] (RectMid2.north) -- node[right]{\(\delta_3\)} (M21.south);
\draw[->] (RectMid2.north) -- node[right]{\(\delta_1\)} (M22.south);
\draw[->] (RectMid2.north) -- node[right]{\(\delta_0\)} (M23.south);

\node[draw, rounded corners=5pt, fill=white, minimum size=1cm] (M31) at (8,6) {}; 
\node[draw, rounded corners=5pt, fill=black!20, minimum size=1cm] (M32) at (10,6) {};
\node[draw, rounded corners=5pt, fill=black!40, minimum size=1cm] (M33) at (12,6) {};
\draw[->] (RectMid3.north) -- node[right]{\(\delta_3\)} (M31.south);
\draw[->] (RectMid3.north) -- node[right]{\(\delta_2\)} (M32.south);
\draw[->] (RectMid3.north) -- node[right]{\(\delta_0\)} (M33.south);

\node[draw, rounded corners=2pt, fill=white, minimum size=0.5cm] (M22_top1) at (4,9) {};
\node[draw, rounded corners=2pt, fill=white, minimum size=0.5cm] (M22_top2) at (5,9) {};
\node[draw, rounded corners=2pt, fill=white, minimum size=0.5cm] (M22_top3) at (6,9) {};
\draw[->] (M22.north) -- node[left]{\(\delta_3\)} (M22_top1.south);
\draw[->] (M22.north) -- node[above]{\(\delta_2\)} (M22_top2.south);
\draw[->] (M22.north) -- node[right]{\(\delta_0\)} (M22_top3.south);

\node[draw, rounded corners=2pt, fill=white, minimum size=0.5cm] (M23_top1) at (7,9) {};
\node[draw, rounded corners=2pt, fill=white, minimum size=0.5cm] (M23_top2) at (8,9) {};
\node[draw, rounded corners=2pt, fill=white, minimum size=0.5cm] (M23_top3) at (9,9) {};
\draw[->] (M23.north) -- node[left]{\(\delta_3\)} (M23_top1.south);
\draw[->] (M23.north) -- node[above]{\(\delta_2\)} (M23_top2.south);
\draw[->] (M23.north) -- node[right]{\(\delta_0\)} (M23_top3.south);

\node[draw, rounded corners=2pt, fill=white, minimum size=0.5cm] (M32_top1) at (10,9) {};
\node[draw, rounded corners=2pt, fill=white, minimum size=0.5cm] (M32_top2) at (11,9) {};
\node[draw, rounded corners=2pt, fill=white, minimum size=0.5cm] (M32_top3) at (12,9) {};
\draw[->] (M32.north) -- node[right]{\(\delta_2\)} (M32_top1.south);
\draw[->] (M32.north) -- node[right]{\(\delta_1\)} (M32_top2.south);
\draw[->] (M32.north) -- node[right]{\(\delta_0\)} (M32_top3.south);

\node[draw, rounded corners=2pt, fill=white, minimum size=0.5cm] (M33_1) at (13,9) {};
\node[draw, rounded corners=2pt, fill=black!20, minimum size=0.5cm] (M33_2) at (14,9) {};
\node[draw, rounded corners=2pt, fill=black!20, minimum size=0.5cm] (M33_3) at (15,9) {};
\draw[->] (M33.north) -- node[left]{\(\delta_3\)} (M33_1.south);
\draw[->] (M33.north) -- node[above]{\(\delta_2\)} (M33_2.south);
\draw[->] (M33.north) -- node[right]{\(\delta_0\)} (M33_3.south);

\node[draw, rounded corners=2pt, fill=white, minimum size=0.5cm] (M33_2_a) at (13,12) {};
\node[draw, rounded corners=2pt, fill=white, minimum size=0.5cm] (M33_2_b) at (14,12) {};
\node[draw, rounded corners=2pt, fill=white, minimum size=0.5cm] (M33_2_c) at (15,12) {};
\draw[->] (M33_2.north) -- node[right]{\(\delta_2\)} (M33_2_a.south);
\draw[->] (M33_2.north) -- node[right]{\(\delta_1\)} (M33_2_b.south);
\draw[->] (M33_2.north) -- node[right]{\(\delta_0\)} (M33_2_c.south);

\node[draw, rounded corners=2pt, fill=white, minimum size=0.5cm] (M33_3_a) at (16,12) {};
\node[draw, rounded corners=2pt, fill=white, minimum size=0.5cm] (M33_3_b) at (17,12) {};
\node[draw, rounded corners=2pt, fill=white, minimum size=0.5cm] (M33_3_c) at (18,12) {};
\draw[->] (M33_3.north) -- node[right]{\(\delta_3\)} (M33_3_a.south);
\draw[->] (M33_3.north) -- node[right]{\(\delta_2\)} (M33_3_b.south);
\draw[->] (M33_3.north) -- node[right]{\(\delta_0\)} (M33_3_c.south);

\node[fill=blue, circle, inner sep=1.5pt] (dot_RMid3) at ([xshift = - 10mm]RectMid3.center) {};
\node[fill=blue, circle, inner sep=1.5pt] (dot_Bottom_blue) at ([xshift = - 20mm]Bottom.center) {};

\node[fill=red, circle, inner sep=1.5pt] (dot_Bottom_red1) at ([xshift =  40mm]Bottom.center) {};

\node[fill=red, circle, inner sep=1.5pt] (dot_Bottom_red2) at ([xshift =  20mm]Bottom.center) {};

\node[fill=blue, circle, inner sep=1.5pt] (dot_RMid1_blue) at (RectMid1.center) {};
\node[fill=red, circle, inner sep=1.5pt] (dot_RMid1_red) at ([xshift =  10mm]RectMid1.center) {};

\node[fill=blue, circle, inner sep=1.5pt] (dot_M11) at (M11.center) {};
\node[fill=red, circle, inner sep=1.5pt] (dot_RMid2) at (RectMid2.center) {};

\draw[->, blue, ultra thick] (dot_RMid3) to[bend left=20] (dot_Bottom_blue) ;
\draw[->, blue, ultra thick]
(dot_Bottom_blue) 
to[bend left=20] (dot_RMid1_blue);

\draw[->, blue, ultra thick] (dot_RMid1_blue)
to[bend right=20] (dot_M11);

\draw[->, red, ultra thick] (dot_Bottom_red1) to[bend right=20] (dot_RMid2);
\draw[->, red, ultra thick]
(dot_RMid2)to[bend right=20] (dot_Bottom_red2);
\draw[->, red, ultra thick]
(dot_Bottom_red2)
to[bend right=20] (dot_RMid1_red);




\end{tikzpicture}
    }
    \caption{Another example of distinguishing sequences, following label sequence $(\delta_1, \delta_2, \delta_2)$.
    The paths start at $C_0$ and $C_1$. 
    One reaches a non-leaf cluster that has degree $\delta_0 + \delta_1 + \delta_2 + \delta_3$,
    whereas the other one reaches a leaf of degree $\delta_3$.
    }
    \label{fig:distinguishing-2}
\end{figure}

The following claim states that a distinguishing sequence must include certain labels of low value. This is eventually used in the coupling to show these label sequences are less likely to be followed by the path.

\begin{claim} \label{clm:distinguishing-path}
    Let $d_1, d_2, \ldots, d_m$ be a label sequence that distinguishes between $C_0$ and $C_1$ in $\mT_r$. Then, there must be a subsequence with indices $0 \leq a_1 < a_2 < \ldots a_r \leq m$ such that $d_{a_i} \leq \delta^i$.
\end{claim}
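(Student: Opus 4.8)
The plan is to run both induced paths $B_0 = C_0, B_1, \dots$ and $B_0' = C_1, B_1', \dots$ side by side, write $d_i = \delta^{e_i}$ for the exponents, and monitor the single quantity $\Phi_i := \max(\tau(B_i),\tau(B_i'))$. Since the claim only asks for a subsequence of $d_1,\dots,d_m$, it suffices to treat the case where $m$ is the \emph{first} index with $d_\mT(B_m)\neq d_\mT(B_m')$. Then $\Phi_0 = \max(\tau(C_0),\tau(C_1)) = 0$; and because $d_\mT(B_m)\neq d_\mT(B_m')$ forces at least one of $B_m,B_m'$ to be a leaf — all non-leaf clusters have degree $\bar{d}_r$ by \cref{clm:non-leaf-degrees-in-tree}, which for $\delta\ge 2$ is not a power of $\delta$, whereas a leaf's degree is its parent label, a power of $\delta$ — and every leaf has color $r$ (a cluster introduced at iteration $i<r$ gains children at iteration $i+1$), we get $\Phi_m = r$. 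So $\Phi$ must climb from $0$ to $r$; the goal is to extract, for the $k$-th unit of this climb, an index whose label has exponent at most $k$.

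Call step $i+1$ \emph{regular} if $e_{i+1}\le\Phi_i+1$, and \emph{irregular} otherwise. The first thing I would prove is that a regular step cannot make $\Phi$ jump, i.e.\ $\Phi_{i+1}\le\Phi_i+1$: by \cref{clm:child-color} a downward step to a child produces color $\max(e_{i+1},\tau(\cdot)+1)\le\Phi_i+1$, while an upward step to a parent strictly \emph{decreases} the color (read \cref{clm:child-color} on the edge coming in from that parent, using \cref{clm:max-parent-degree}). The real work is with an irregular step $e_{i+1}\ge\Phi_i+2$, which can drag $\Phi$ up to $e_{i+1}$ while spending only a large label. Here I would invoke \cref{clm:large-steps}: if $e_{i+1}\ge\Phi_i+2$ then $B_i,B_i'$ are non-leaves (otherwise $\Phi_i=r$ and $e_{i+1}\ge r+2$, impossible), both steps go to children (the label strictly exceeds $d_p$ on both sides by \cref{clm:max-parent-degree}), and the subtrees of $B_{i+1}$ and $B_{i+1}'$ are identical as edge-labeled rooted trees. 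Hence, as long as the walk stays inside the subtree of $B_{i+1}$, its mirror stays inside the isomorphic subtree of $B_{i+1}'$ and the two currently visited clusters always have equal tree-degree (the only edges the subtree isomorphism ignores are the parent edges of the roots $B_{i+1},B_{i+1}'$, which both carry label $\delta^{e_{i+1}+1}$). So no distinguishing can occur inside this ``excursion''. Since distinguishing does occur at step $m$, the walk must leave the subtree of $B_{i+1}$ by step $m$, and — the tree being a tree — the only exit is the edge $B_{i+1}B_i$; so the walk returns to the state $(B_i,B_i')$ at some step $i'\le m$, and in fact $i'<m$ because $d_\mT(B_i)=d_\mT(B_i')$.

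With this in hand I would \emph{excise} every excursion: scan from the start, and whenever the current step is irregular delete the whole block $d_{i+1},\dots,d_{i'}$, which begins and ends at the same state (nested excursions are subsumed by the outermost one). The labels that survive, say $d_{a_1},\dots,d_{a_{m'}}$ with $a_1<\dots<a_{m'}$, form a subsequence of $d_1,\dots,d_m$, are themselves a label sequence, induce paths visiting exactly the retained states and ending at $(B_m,B_m')$, and \emph{every} one of their steps is regular. Writing $\Phi_0'=0,\Phi_1',\dots,\Phi_{m'}'$ for the $\Phi$-values along this reduced walk, the regular-step bound gives $\Phi_j'\le\Phi_{j-1}'+1$, hence $\Phi_j'\le j$; and $\Phi_{m'}'=\Phi_m=r$, so $m'\ge r$. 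Finally, regularity of step $j$ of the reduced walk gives $e_{a_j}\le\Phi_{j-1}'+1\le j$, i.e.\ $d_{a_j}\le\delta^{j}$; taking $j=1,\dots,r$ produces the required subsequence $a_1<\dots<a_r\le m$.

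The step I expect to be the main obstacle is the excursion analysis in the second paragraph. A distinguishing label sequence need not be economical: it may wander arbitrarily far into a region where the two explored subtrees are identical before its distinguishing behavior shows up, and such a detour contributes neither low labels nor net progress in $\Phi$. The role of \cref{clm:large-steps} (and, underneath it, \cref{clm:identical-subtrees}) is precisely that such detours are trapped inside a subtree, produce matching degree sequences, and can be removed without harming the walk's validity or its distinguishing property. A minor wrinkle is that \cref{clm:identical-subtrees} does not cover $C_0$ and $C_1$; this costs nothing, since an excised subtree is rooted at a cluster of color $e_{i+1}\ge 2>0$, hence different from both $C_0$ and $C_1$.
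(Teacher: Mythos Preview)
Your proof is correct and follows essentially the same approach as the paper's: both track the quantity $\max(\tau(B_i),\tau(B_i'))$, handle ``large-label'' steps via \cref{clm:large-steps} by arguing the paths enter identical subtrees and must return (your ``excursions'' are exactly the paper's ``chunks'' processed without incrementing the counter), and handle ``small-label'' steps by showing the maximum color increases by at most one. The only cosmetic difference is that you excise the excursions and then analyze the reduced walk, whereas the paper processes the sequence online maintaining the invariant $\max(\tau(B^0_i),\tau(B^1_i))\le t$; the resulting subsequence and the estimates $d_{a_j}\le\delta^j$ are obtained identically.
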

\begin{proof}
    We go over the label sequence and the corresponding paths starting at $C_0$ and $C_1$ chunk by chunk and construct the subsequence along the way. The following invariant is maintained: Let $t$ be the number of indices added so far to the subsequence, let $i$ be the index of the last processed label in the sequence $d$ (initially $i = 0$), and $B^0_i$ and $B^1_i$ be the $i$-th cluster in the corresponding paths starting at $C_0$ and $C_1$. Then, it holds $$\max\left(\tau(B^0_i), \tau(B^1_i)\right) \leq t.$$
    The invariant holds initially as $\tau(B^0_i) = \tau(B^1_i) = t = 0$.
    Now, consider processing the $(i+1)$-th label $d_{i+1}$.
    
    If $d_{i+1} \geq \delta^{t + 2}$, then by the invariant, we can invoke \cref{clm:large-steps}.
    Hence, $B^0_{i+1}$ and $B^1_{i+1}$ are children of $B^0_i$ and $B^1_i$ with identical subtrees.
    Observe that while the paths reside in these subtrees, the sequence cannot distinguish between $C_0$ and $C_1$ because at any point $j$ of the path inside the subtrees $B^0_j$ and $B^1_j$ have the same edge labels adjacent to them.
    Therefore, the paths must exit the subtrees at some point.
    That is, there exists an index $i' > i$ such that $B^0_{i'} = B^0_i$ and $B^1_{i'} = B^1_i$.
    In this case, we process the sequence up to index $i'$ and add nothing to the subsequence.
    The invariant still holds since $t$ does not change and the paths lead to the same clusters $B^0_i$ and $B^1_i$.

    Otherwise, if $d_{i+1} \leq \delta^{t + 1}$, we add $a_{t + 1} = i + 1$ to the sequence. 
    This satisfies the property required of the subsequence:
    $$
    d_{a_{t + 1}} = d_{i+1} \leq \delta^{t + 1}.
    $$
    It remains to show that the invariant holds.
    We prove that as a result of taking the $(i+1)$-th step, the color of the cluster increases by at most $1$.
    That is, 
    $$\tau(B^0_{i+1}) \leq \tau(B^0_{i}) + 1 \qquad \text{and} \qquad \tau(B^1_{i+1}) \leq \tau(B^1_{i}) + 1.$$
    Let us focus on $B^0_{i + 1}$. The claim holds similarly for $B^1_{i+1}$.
    If $B^0_{i+1}$ is the parent of $B^0_{i}$, then $\tau(B^0_{i+1}) \leq \tau(B^0_{i}) - 1$ and the claim holds trivially.
    Otherwise, if $B^0_{i+1}$ is a child of $B^0_{i}$, we can apply \cref{clm:child-color} to conclude
    $$
    \tau(B^0_{i+1}) = \max\left(d(B^0_{i}, B^0_{i + 1}), \tau(B^0_{i}) + 1\right) = \tau(B^0_{i}) + 1.
    $$
    Therefore, the invariant still holds.

    Finally, we prove that $r$ indices are added to the subsequence before this process stops.
    Since the label sequence is distinguishing, the process can stop only when at least one of $B^0_i$ and $B^1_i$ is a leaf.
    That is, when $\max(B^0_i, B^1_i) = r$.
    Therefore, by the invariant, we have the number of indices added to the subsequence is
    \begin{equation*}
    t \geq \max(B^0_i, B^1_i) = r. \qedhere
    \end{equation*}
\end{proof}

\subsection{Full Blueprint} \label{subsec:full-blueprint}

Finally, we present the full blueprint.

\begin{definition}[Full blueprint] 
\label{def:blueprint}
    The full blueprint includes 
    \begin{enumerate}
        \item two instances of the cluster tree $\mT_r$, referred to as $\mT^\zero$ and $\mT^\one$, and
        \item a cluster of dummy vertices $D$.
    \end{enumerate}
    For each cluster $C^\zero \in \mC^\zero$ and the corresponding cluster $C^\one \in \mC^\one$, an edge $(C^\zero, C^\one)$ is added with $d(C^\zero, C^\one) = d(C^\one, C^\zero) = 1$, we refer to these edges as the \emph{main matching} edges.
    For each cluster $C \in \mC^\zero \cup \mC^\one$, an edge $(C, D)$ is added with $d(C, D) = \Delta_r + 1$, and the value of $d(D, C)$ is defined later in \cref{subsec:final-construction}.
    
    We use $\mC_\mB := \mC^\zero \cup \mC^\one \cup \{D\}$ to denote the set of clusters in the full blueprint, and $\mE_\mB$ to denote the edges.
    We define the degree of a cluster $C \in \mC_\mB$ in the blueprint as
    $$
    d_{\mB}(C) = \sum_{C':(C, C') \in \mE_\mB} d(C, C').
    $$
\end{definition}

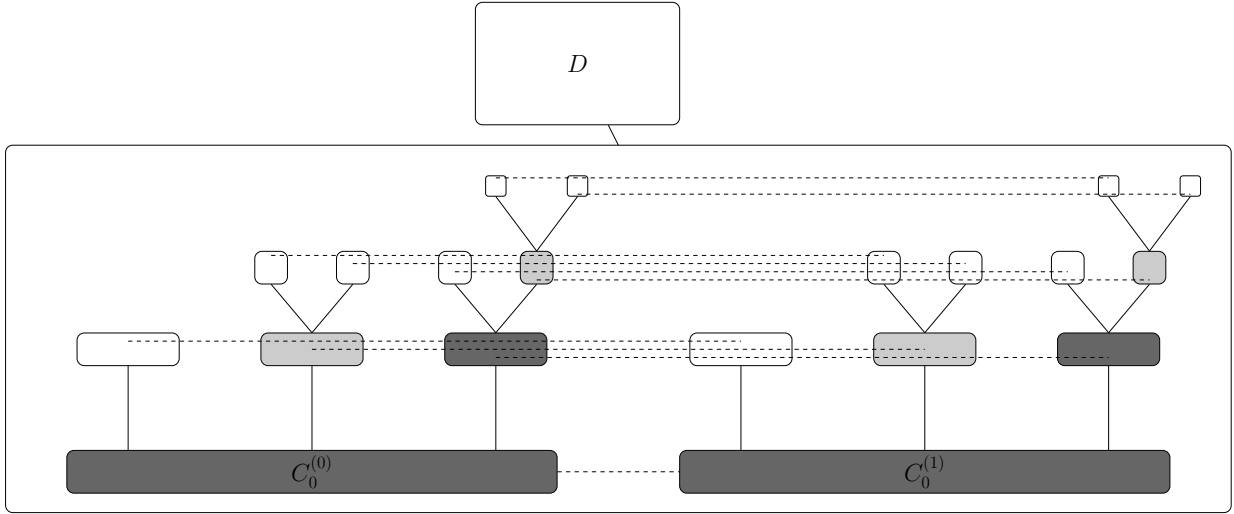
\begin{figure}
    \centering
    \resizebox{\textwidth}{!}{
    \begin{tikzpicture}[>=stealth, transform shape, font=\LARGE]

\begin{scope}[shift={(0,0)}]
  \node[draw, rounded corners=5pt, fill=black!60, 
        minimum width=12cm, minimum height=1cm] 
        (Bottom) at (2.5,0) {$C_0^{(0)}$};
  
  \node[draw, rounded corners=5pt, fill=white, 
        minimum width=2.5cm, minimum height=0.8cm] 
        (RectLeft) at (-2,3) {};
  \node[draw, rounded corners=5pt, fill=black!20, 
        minimum width=2.5cm, minimum height=0.8cm] 
        (RectMid) at (2.5,3) {};
  \node[draw, rounded corners=5pt, fill=black!60, 
        minimum width=2.5cm, minimum height=0.8cm] 
        (RectRight) at (7,3) {};
  
  \draw ([xshift=-4.5cm]Bottom.north) -- (RectLeft.south);
  \draw (Bottom.north) -- (RectMid.south);
  \draw ([xshift=+4.5cm]Bottom.north) -- (RectRight.south);
  
  \node[draw, rounded corners=5pt, fill=white, minimum size=0.8cm] 
        (M_mid1) at (1.5,5) {};
  \node[draw, rounded corners=5pt, fill=white, minimum size=0.8cm] 
        (M_mid2) at (3.5,5) {};
  \draw (RectMid.north) -- (M_mid1.south);
  \draw (RectMid.north) -- (M_mid2.south);
  
  \node[draw, rounded corners=5pt, fill=white, minimum size=0.8cm] 
        (M_right1) at (6,5) {};
  \node[draw, rounded corners=5pt, fill=black!20, minimum size=0.8cm] 
        (M_right2) at (8,5) {};
  \draw (RectRight.north) -- (M_right1.south);
  \draw (RectRight.north) -- (M_right2.south);
  
  \node[draw, rounded corners=2pt, fill=white, minimum size=0.5cm] 
        (M_right2_top1) at (7,7) {};
  \node[draw, rounded corners=2pt, fill=white, minimum size=0.5cm] 
        (M_right2_top2) at (9,7) {};
  \draw (M_right2.north) -- (M_right2_top1.south);
  \draw (M_right2.north) -- (M_right2_top2.south);
\end{scope}

\begin{scope}[shift={(15,0)}]
  \node[draw, rounded corners=5pt, fill=black!60, 
        minimum width=12cm, minimum height=1cm] 
        (Bottom_B) at (2.5,0) {$C_0^{(1)}$};
  
  \node[draw, rounded corners=5pt, fill=white, 
        minimum width=2.5cm, minimum height=0.8cm] 
        (RectLeft_B) at (-2,3) {};
  \node[draw, rounded corners=5pt, fill=black!20, 
        minimum width=2.5cm, minimum height=0.8cm] 
        (RectMid_B) at (2.5,3) {};
  \node[draw, rounded corners=5pt, fill=black!60, 
        minimum width=2.5cm, minimum height=0.8cm] 
        (RectRight_B) at (7,3) {};
  
  \draw ([xshift=-4.5cm]Bottom_B.north) -- (RectLeft_B.south);
  \draw (Bottom_B.north) -- (RectMid_B.south);
  \draw ([xshift=+4.5cm]Bottom_B.north) -- (RectRight_B.south);
  
  \node[draw, rounded corners=5pt, fill=white, minimum size=0.8cm] 
        (M_mid1_B) at (1.5,5) {};
  \node[draw, rounded corners=5pt, fill=white, minimum size=0.8cm] 
        (M_mid2_B) at (3.5,5) {};
  \draw (RectMid_B.north) -- (M_mid1_B.south);
  \draw (RectMid_B.north) -- (M_mid2_B.south);
  
  \node[draw, rounded corners=5pt, fill=white, minimum size=0.8cm] 
        (M_right1_B) at (6,5) {};
  \node[draw, rounded corners=5pt, fill=black!20, minimum size=0.8cm] 
        (M_right2_B) at (8,5) {};
  \draw (RectRight_B.north) -- (M_right1_B.south);
  \draw (RectRight_B.north) -- (M_right2_B.south);
  
  \node[draw, rounded corners=2pt, fill=white, minimum size=0.5cm] 
        (M_right2_top1_B) at (7,7) {};
  \node[draw, rounded corners=2pt, fill=white, minimum size=0.5cm] 
        (M_right2_top2_B) at (9,7) {};
  \draw (M_right2_B.north) -- (M_right2_top1_B.south);
  \draw (M_right2_B.north) -- (M_right2_top2_B.south);
\end{scope}

\draw[dashed] (Bottom) -- (Bottom_B);
\draw[dashed ] ([yshift=2mm]RectLeft.center) -- ([yshift=2mm]RectLeft_B.center);
\draw[dashed ] (RectMid.center) -- (RectMid_B.center);
\draw[dashed ] ([yshift=-2mm]RectRight.center) -- ([yshift=-2mm]RectRight_B.center);
\draw[dashed ] ([yshift=3mm]M_mid1.center) -- ([yshift=3mm]M_mid1_B.center);
\draw[dashed ] ([yshift=1mm]M_mid2.center) -- ([yshift=1mm]M_mid2_B.center);
\draw[dashed ] ([yshift=-1mm]M_right1.center) -- ([yshift=-1mm]M_right1_B.center);
\draw[dashed ] ([yshift=-3mm]M_right2.center) -- ([yshift=-3mm]M_right2_B.center);
\draw[dashed ] ([yshift=2mm]M_right2_top1.center) -- ([yshift=2mm]M_right2_top1_B.center);
\draw[dashed ] ([yshift=-2mm]M_right2_top2.center) -- ([yshift=-2mm]M_right2_top2_B.center);

\draw[ rounded corners=5pt] (-5,-1) rectangle (25,8) node[midway] {};
\node[draw, rounded corners=5pt, fill=white, minimum width=5cm, minimum height=3cm] (D) at (9,10) {$D$};
\draw (10,8) -- (D);

\end{tikzpicture}

    }
    \caption{An illustration of the full blueprint (\Cref{def:blueprint}) for $r = 2$.}
    \label{fig:full-blueprint}
\end{figure}

The degrees and outgoing edge labels are characterized below,
analogous to similar results for the cluster tree.

\begin{claim} \label{clm:non-leaf-degree}
    For any non-leaf cluster $C \in C^\zero \cup C^\one$ of the trees, it holds
    $$
    d_{\mB}(C) = \bar{d}_r + \Delta_r + 2 =: \bar{d}_\mB.
    $$
\end{claim}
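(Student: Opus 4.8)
The plan is to evaluate $d_\mB(C)$ directly from \cref{def:blueprint} by partitioning the edges of the blueprint incident to $C$ into the three kinds of edges the blueprint contains. Assume without loss of generality that $C$ lies in $\mT^\zero$; the case $C\in\mT^\one$ is symmetric. By \cref{def:blueprint}, the edges of $\mE_\mB$ incident to $C$ split into exactly three pairwise-disjoint groups: (i) the edges of the cluster tree $\mT^\zero$ incident to $C$, which together contribute $d_\mT(C)$ by the definition of the tree-degree; (ii) the unique main matching edge $(C^\zero, C^\one)$ joining $C$ to its counterpart in $\mT^\one$, contributing $d(C^\zero, C^\one)$; and (iii) the unique edge $(C, D)$ to the dummy cluster, contributing $d(C, D)$. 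Since these groups are disjoint and exhaust all blueprint-neighbors of $C$,
$$
d_\mB(C) \;=\; d_\mT(C) \;+\; d(C^\zero, C^\one) \;+\; d(C, D).
$$

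It then remains to plug in the three terms. First I would use the non-leaf hypothesis: \cref{clm:non-leaf-degrees-in-tree} gives $d_\mT(C) = \sum_{i=0}^{r} \delta^i = \bar{d}_r$. The other two terms are read off \cref{def:blueprint} directly: $d(C^\zero, C^\one) = 1$ for the main matching edge, and $d(C, D) = \Delta_r + 1$ for the edge to the dummy cluster. Adding the three contributions,
$$
d_\mB(C) \;=\; \bar{d}_r \;+\; 1 \;+\; (\Delta_r + 1) \;=\; \bar{d}_r + \Delta_r + 2 \;=\; \bar{d}_\mB,
$$
as claimed.

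There is no genuine obstacle here; the statement is bookkeeping over \cref{def:blueprint}. The only points worth a sentence are that the three families of incident edges are really disjoint and exhaustive — immediate, since $D \notin \mC^\zero \cup \mC^\one$ and, by \cref{def:blueprint}, each tree cluster has exactly one main matching edge and exactly one dummy edge — and that the non-leaf assumption is used only to invoke \cref{clm:non-leaf-degrees-in-tree}: for a leaf cluster the within-tree contribution would instead be $d_p(C)$, which is precisely why the uniform value $\bar{d}_\mB$ is special to non-leaf clusters. This uniformity — all non-leaf clusters of both trees sharing the same blueprint degree — is exactly what the claim records, and it is the blueprint-level analogue of \cref{clm:non-leaf-degrees-in-tree}.
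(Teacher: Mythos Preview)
Your proposal is correct and follows essentially the same approach as the paper: both assume without loss of generality that $C$ lies in one tree, partition the blueprint-neighbors of $C$ into (i) tree edges, (ii) the main matching edge, and (iii) the dummy edge, invoke \cref{clm:non-leaf-degrees-in-tree} for the tree contribution, and read off $1$ and $\Delta_r+1$ from \cref{def:blueprint} for the remaining two.
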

\begin{proof}
    Take a cluster $C \in \mC^\zero$ (the case where $C \in \mC^\one$ follows similarly).
    By \cref{clm:non-leaf-degrees-in-tree}, it holds $d_\mT(C) = \bar{d}_r$.
    This accounts for the neighbors of $C$ inside $\mT^\zero$.
    Outside $\mT^\zero$, $C$ has only two neighbors: the dummy cluster $D$, and the corresponding cluster $C'\in \mC^\one$ from the other tree.
    Therefore, it holds:
    \begin{equation*}
        d_\mB(C) = d_\mT(C) + d(C, D) + d(C, C') = 
        \bar{d}_r + (\Delta_r + 1) + 1. \qedhere
    \end{equation*}
\end{proof}

\begin{claim} \label{clm:same-outgoing-labels}
    Take any two clusters $C, C' \in \mC_\mB$ in the full blueprint.
    If $d_\mB(C) = d_\mB(C')$, then $C$ and $C'$ have the same set of outgoing edge-labels.
\end{claim}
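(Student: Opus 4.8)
The plan is to reduce this statement to its cluster‑tree analogue, \cref{clm:same-outgoing-labels-in-tree}, since passing from a cluster tree to the full blueprint is ``uniform'' on the tree clusters. Concretely, by \cref{def:blueprint} every cluster $C \in \mC^\zero \cup \mC^\one$ acquires, on top of its tree edges, exactly two new edges: the edge to the dummy cluster $D$, carrying outgoing label $d(C,D) = \Delta_r + 1$, and its main matching edge to the corresponding cluster in the other tree, carrying outgoing label $1$. Hence for every tree cluster $C$,
\[
d_\mB(C) = d_\mT(C) + (\Delta_r + 2),
\]
and the set of outgoing blueprint‑labels of $C$ equals the set of outgoing tree‑labels of $C$ together with $\{1,\ \Delta_r+1\}$. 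This is precisely the bookkeeping already carried out for non‑leaf clusters in the proof of \cref{clm:non-leaf-degree} (recall $d_\mT(C)=\bar d_r$ for non‑leaf $C$ by \cref{clm:non-leaf-degrees-in-tree}, and $d_\mT(C)=d_p(C)$ for a leaf), now done for an arbitrary tree cluster.

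Given this, the argument is a short case analysis. First I would dispose of the dummy cluster: if $C=C'=D$ there is nothing to prove, and if exactly one of $C,C'$ equals $D$, then their blueprint‑degrees are distinct — the labels $d(D,\cdot)$ are fixed in \cref{subsec:final-construction} precisely so that $D$ is the unique cluster of its degree — so the hypothesis $d_\mB(C)=d_\mB(C')$ fails and the claim holds vacuously. Otherwise $C,C' \in \mC^\zero \cup \mC^\one$ are both tree clusters, and the displayed identity gives
\[
d_\mT(C) + (\Delta_r+2) \;=\; d_\mB(C) \;=\; d_\mB(C') \;=\; d_\mT(C') + (\Delta_r+2),
\]
so $d_\mT(C) = d_\mT(C')$. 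Now \cref{clm:same-outgoing-labels-in-tree} applies and yields that $C$ and $C'$ have the same set of outgoing tree‑labels; taking the union of each with the common set $\{1,\ \Delta_r+1\}$ shows they have the same set of outgoing blueprint‑labels, as desired.

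I do not expect a genuine obstacle here: the claim is essentially a corollary of \cref{clm:same-outgoing-labels-in-tree} obtained by tracking the two extra edges attached to each tree cluster. The only mildly delicate point is the dummy cluster $D$, whose blueprint‑degree is not yet pinned down in this section; if one prefers to avoid the forward reference to \cref{subsec:final-construction}, it suffices to observe that this claim is only ever invoked (in the coupling of \cref{sec:matching-lb}) for pairs of tree clusters, so restricting $C,C'$ to $\mC^\zero \cup \mC^\one$ loses nothing. I would also double‑check one potential subtlety — that an ``extra'' label might already occur among the tree‑labels of $C$ (indeed $1=\delta^0$ is the tree‑label $d(C_0,C_1)$) — but since we compare \emph{sets} of labels and both clusters receive exactly the same extra labels, such overlap is harmless.
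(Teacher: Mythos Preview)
Your proposal is correct and follows essentially the same approach as the paper: handle the dummy cluster $D$ separately by observing it has a unique degree (the paper states this directly, you phrase it as a vacuous hypothesis), then for tree clusters use $d_\mB(C) = d_\mT(C) + (\Delta_r + 2)$ to reduce to \cref{clm:same-outgoing-labels-in-tree} and append the common extra labels $\{1, \Delta_r+1\}$. Your remarks on the forward reference to \cref{subsec:final-construction} and on the overlap $1 = \delta^0$ are accurate and go slightly beyond what the paper spells out, but the core argument is identical.
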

\begin{proof}
    The dummy cluster $D$ has a degree strictly larger than all the other clusters.
    For all other clusters $C, C' \in \mC_\mB \setminus \{D\}$,
    it holds
    $$
    d_\mT(C) = d_\mB(C) - (\Delta_r + 2) = d_\mB(C') - (\Delta_r + 2)
    = d_\mT(C').
    $$
    Therefore, by \cref{clm:same-outgoing-labels-in-tree},
    $C$ and $C'$ have the same outgoing labels inside their tree.
    Outside, they both have an outgoing label $1$ to the corresponding cluster of the other tree,
    and an outgoing label $\Delta_r + 1$ to the dummy cluster $D$.
    Hence, they have the same set of outgoing labels overall.
\end{proof}

\subsection{Final Construction} \label{subsec:final-construction}

In this section, we show how to construct the final input distribution using the blueprint.

\begin{definition}[Input Distribution] \label{def:final-graph}
    For every cluster $C \in \mC_\mB$, there is a size $\card{C}$ associated with it.
    The vertices of the final graph $G$ consist of $\card{C}$ vertices for every cluster $C \in \mC_\mB$, and all the vertex labels are chosen at random.
    Then, for each edge $(X, Y) \in \mE_\mB$, a bipartite subgraph $H$ is added between the vertices of $X$ and $Y$,
    where $H$ is chosen at random from all bipartite subgraphs between $X$ and $Y$ that are $d(X, Y)$-regular on the $X$ side, and $d(Y, X)$-regular on the $Y$ side.
    The cluster sizes are selected in a way that $\card{X}d(X, Y) = \card{Y}d(Y, X)$
    to ensure that at least one such subgraph $H$ exists.
    We say that each edge in $H$ has label $d(X, Y)$ in the $X$-to-$Y$ direction, and label $d(Y, X)$ in the $Y$-to-$X$ direction.
    Finally, the adjacency list of every vertex is permutated uniformly at random.
\end{definition}

Now, we specify the cluster sizes.
The corresponding clusters of the two trees $\mT^\zero$ and $\mT^\one$ shall have the same size.
We use $N_0$ to denote the size of the root clusters $C_0^\zero$ and $C_0^\one$.
Then, for every non-root cluster $C \in \mC^\zero \cup \mC^\one$, we let $\card{C} = \card{p_C} / \delta$.
Let $N := \sum_{C\in\mC^\one}\card{C}$ be the total number of vertices in one of the trees. Finally, we let $\card{D} = 2\epsilon N$, where $\epsilon$ is a small constant determined later.

We also specify the edge labels $d(D, C)$ that were deferred here from \cref{subsec:cluster-trees}.
For a cluster $C \neq D$ we let 
$$
d(D, C) = \frac{\card{C}}{\card{D}} d(C, D) = \frac{\Delta_r + 1}{2\epsilon N}  \card{C}.
$$

\begin{claim} \label{clm:dummy-degree} \label{clm:blueprint-max-degree}
    In the full blueprint, it holds
    $ d_\mB(D) = (\Delta_r + 1)/\epsilon.$
    This is also the maximum degree $\Delta$ in the final graph $G$.
\end{claim}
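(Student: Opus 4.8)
The plan is to compute $d_\mB(D)$ directly from the definitions and then argue separately that no other cluster has larger degree; combined with the observation that in $G$ every vertex of a cluster $C$ has degree exactly $d_\mB(C)$, this yields both assertions.

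First I would evaluate $d_\mB(D)$. By \cref{def:blueprint} the neighbors of $D$ in the blueprint are exactly the clusters $C \in \mC^\zero \cup \mC^\one$, and the outgoing label on $(D,C)$ was fixed in \cref{subsec:final-construction} to $d(D,C) = \frac{\Delta_r+1}{2\epsilon N}\card{C}$. Hence
\[
d_\mB(D) = \sum_{C \in \mC^\zero \cup \mC^\one} d(D,C) = \frac{\Delta_r+1}{2\epsilon N} \sum_{C \in \mC^\zero \cup \mC^\one} \card{C}.
\]
By the choice of cluster sizes $\sum_{C \in \mC^\one}\card{C} = N$, and the corresponding clusters of $\mT^\zero$ have the same sizes, so $\sum_{C \in \mC^\zero \cup \mC^\one}\card{C} = 2N$; substituting gives $d_\mB(D) = (\Delta_r+1)/\epsilon$.

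Next I would record that in the final graph $G$ every vertex $v$ in a cluster $C$ has degree exactly $d_\mB(C)$: for each blueprint edge $(C,C')$ the bipartite subgraph placed between $C$ and $C'$ is $d(C,C')$-regular on the $C$-side (\cref{def:final-graph}), so $v$ receives exactly $d(C,C')$ incident edges from it, and summing over the neighbors $C'$ of $C$ in $\mE_\mB$ gives $d_\mB(C)$. Therefore $\Delta = \max_{C \in \mC_\mB} d_\mB(C)$, and it remains to check the maximum is attained at $D$. For a non-leaf cluster $C$ of a tree, \cref{clm:non-leaf-degree} gives $d_\mB(C) = \bar d_r + \Delta_r + 2$, and since $\bar d_r = \sum_{i=0}^r \delta^i \le \delta^{r+1} = \Delta_r$ (using \cref{clm:tree-max-degree}) this is at most $2(\Delta_r+1)$. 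For a leaf cluster $C$, inside its tree $C$ has only its parent as neighbor, so $d_\mT(C) = d_p(C) \le \delta^{r+1} = \Delta_r$ by \cref{clm:max-parent-degree}, and outside the tree $C$ has only the two neighbors $D$ (label $\Delta_r+1$) and its partner in the other tree (label $1$), so again $d_\mB(C) \le 2(\Delta_r+1)$.

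Finally, since $\epsilon$ is a sufficiently small constant (in particular $\epsilon \le 1/2$), we get $d_\mB(D) = (\Delta_r+1)/\epsilon \ge 2(\Delta_r+1) \ge d_\mB(C)$ for every $C \ne D$, and $d_\mB(D)$ is also strictly larger than $d_\mB(D)$'s only other "competitor" — there is none — so $\Delta = d_\mB(D) = (\Delta_r+1)/\epsilon$. The degree computation itself is just a telescoping of the cluster-size recursion; the only point requiring any care is the last comparison, where one must invoke that $\epsilon$ is chosen small (consistent with its role elsewhere in the construction) to be sure $D$ dominates all tree clusters. This is the mild "obstacle," and it is routine.
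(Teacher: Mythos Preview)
Your computation of $d_\mB(D)$ is correct and essentially identical to the paper's: you sum $d(D,C)=\frac{\Delta_r+1}{2\epsilon N}\card{C}$ over all $C\in\mC^\zero\cup\mC^\one$ and use $\sum_C\card{C}=2N$, whereas the paper splits this sum into the $\mC^\zero$ and $\mC^\one$ halves and evaluates each to $(\Delta_r+1)/2\epsilon$ before adding; the arithmetic is the same.

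Where you go beyond the paper is in actually proving the second assertion. The paper's proof stops after establishing $d_\mB(D)=(\Delta_r+1)/\epsilon$ and leaves the ``this is also the maximum degree $\Delta$'' part implicit. You supply it: you observe that each vertex in cluster $C$ has degree exactly $d_\mB(C)$ in $G$, bound $d_\mB(C)\le 2(\Delta_r+1)$ for every tree cluster (non-leaf via \cref{clm:non-leaf-degree} and $\bar d_r\le\Delta_r$, leaf via $d_p(C)\le\delta^{r+1}$), and then use $\epsilon\le 1/2$ to get $d_\mB(D)\ge 2(\Delta_r+1)$. This is a clean addition; the only small remark is that your appeal to $\epsilon\le 1/2$ is justified concretely by the paper's choice $\epsilon=c/6$ with $c\in(0,1)$, so you may as well cite that rather than leave it as ``sufficiently small.''
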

\begin{proof}
    We can break up the degree of $D$ into two parts:
    \begin{equation}
    d_\mB(D) = \sum_{C:(D, C)\in\mE_\mB} d(D, C)
    = \sum_{C \in \mC^\zero} d(D, C) + \sum_{C \in \mC^\one} d(D, C).
    \label{eq:clm-dummy-degree-1}
    \end{equation}
    Let us focus on the first part:
    \begin{align*}
        \sum_{C \in \mC^\zero} d(D, C)
        &=  \sum_{C \in \mC^\zero} \frac{\Delta_r + 1}{2\epsilon N}  \card{C} \\
        &= \frac{\Delta_r + 1}{2\epsilon N} \sum_{C \in \mC^\zero} \card{C} \\
        &= \frac{\Delta_r + 1}{2\epsilon}
    \end{align*}
    Similarly, it can be seen $\sum_{C \in \mC^\one} d(D, C) = (\Delta_r + 1)/2\epsilon$. Plugging back into \eqref{eq:clm-dummy-degree-1}, we get:
    \begin{equation*}
    d_\mB(D) = \frac{\Delta_r + 1}{2\epsilon} + \frac{\Delta_r + 1}{2\epsilon} = \frac{\Delta_r + 1}{\epsilon}. \qedhere
    \end{equation*}
\end{proof}

\begin{claim} \label{clm:value-of-N}
    It holds
    $$
    N \leq N_0 \left(1 + \frac{r + 1}{\delta - (r + 1)}\right)
    $$
\end{claim}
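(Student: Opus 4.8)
The plan is to sum the cluster sizes of one tree grouped by \emph{depth}, which turns $N$ into a geometric series. First I would note that, by the recursion $\card{C} = \card{p_C}/\delta$ together with $\card{C_0} = N_0$, every cluster at depth $k$ in $\mT_r$ has size exactly $N_0/\delta^{k}$. Writing $n_k$ for the number of clusters of $\mT_r$ at depth $k$, this gives
\[
N \;=\; \sum_{C \in \mC^\one} \card{C} \;=\; \sum_{k \ge 0} n_k\cdot \frac{N_0}{\delta^{k}} \;=\; N_0 \sum_{k \ge 0} \frac{n_k}{\delta^{k}}.
\]

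The heart of the argument is a bound on $n_k$, for which I would show that every cluster of $\mT_r$ has at most $r+1$ children; then $n_k \le (r+1)^k$ follows immediately by induction on $k$ (a depth-$(k{+}1)$ cluster is a child of a depth-$k$ cluster). The child-count bound is a direct unwinding of \cref{def:cluster-tree}: the root $C_0$ gains exactly one child at each of the iterations $0,1,\dots,r$ (namely $C_1$ at iteration $0$, and a child with outgoing label $\delta^{j}$ via rule (1) at iteration $j$), for a total of $r+1$; a non-root cluster $C$ of color $\tau=\tau(C)$ is a leaf until iteration $\tau+1$, at which point rule (2) attaches $\tau+1$ children (one label $\delta^i$ for each $i\in\{0,\dots,\tau+1\}$ except the index of $d_p(C)$) and then rule (1) attaches one further child per iteration $\tau+2,\dots,r$, for a total of $r$; and a color-$r$ cluster is a leaf with no children. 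Hence the maximum number of children over all clusters is $r+1$, attained only at the root.

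Finally I would combine the two facts: since $\delta > r+1$ under the eventual parameter choice (which is exactly what keeps the denominator positive and the series finite),
\[
N \;\le\; N_0 \sum_{k \ge 0} \Big(\frac{r+1}{\delta}\Big)^{k} \;=\; N_0\cdot\frac{\delta}{\delta-(r+1)} \;=\; N_0\Big(1 + \frac{r+1}{\delta-(r+1)}\Big),
\]
which is the claimed inequality. I expect no real obstacle beyond this bookkeeping; the only points that need a little care are keeping \emph{depth} and \emph{color} distinct (the size of a cluster depends on its depth, while the construction is organized by color, and a color-$j$ cluster can appear at several depths) and treating the root separately, as it is the unique cluster with $r+1$ children rather than $r$.
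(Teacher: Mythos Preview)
Your proposal is correct and follows essentially the same approach as the paper: group cluster sizes by depth, use that sizes shrink by a factor $\delta$ per level and that each cluster has at most $r+1$ children, and bound the resulting geometric series. Your justification of the $r+1$ child bound is in fact more detailed than the paper's, which simply asserts it.
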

\begin{proof}
    The depth of a cluster in the tree is its distance from the root $C_0$.
    Observe that any cluster $C$ at depth $i$, has at most $r + 1$ child clusters, each of size at most $\card{C}/\delta$.
    Therefore, the total cluster size at depth $i + 1$ is at most $\frac{r+1}{\delta}$ times that of depth $i$, and we have:
    \begin{equation*}
        N \leq \sum_{i=0}^r N_0 \left(\frac{r+1}{\delta}\right)^i
        \leq N_0 \left(1 + \frac{r + 1}{\delta - (r + 1)}\right). \qedhere
    \end{equation*}
\end{proof}

We set the values of the parameters here.
As a result, the total number of vertices shall be $n = (2 + 2\epsilon)N = \Theta(N) = \Theta(N_0)$ (see also \cref{clm:value-of-N}).
Let $c$ be the approximation ratio
for which we prove the lower bound, and $\epsilon := c/6$.
Let $r = \log \Delta/\log\log\Delta$, and $\delta$ be determined with $\Delta = \frac{\delta^{r+1} + 1}{\epsilon}$.
Note that as a result, it holds $\delta = \omega(\log\Delta/\log\log\Delta)$, in particular, it holds $\delta \geq \left(\frac{3}{c}+1\right)(r+1)$ for large enough $\Delta$.
Let $\kappa = \Delta^{(1/6)\log\Delta/\log\log\Delta}$,
so that $\Delta^{r/6} = \kappa$.
Finally, let $N_0$ be large enough such that $\kappa^2\Delta^2 = o(N_0)$,
i.e.\ $n$ must be large enough such that $\Delta = 2^{O(\sqrt{\log n \log \log n})}$.
\section{Maximum Matching Lower Bound}
\label{sec:matching-lb}

To prove the lower bound, we show that on the input distribution of \cref{def:final-graph}, any deterministic non-adaptive LCA requires $\Delta^{\Omega(\log\Delta/\log\log\Delta)}$ queries to compute a constant approximation of the maximum matching.
Then, the lower bound for randomized algorithms follows from a direct application of Yao's min-max principle.

To prove the lower bound in the deterministic case,
let us call the $C_0^\zero$--\,$C_0^\one$ edges \emph{significant} since they make up a constant fraction of the maximum matching,
and call the $C_0^\zero$--\,$C_1^\zero$ edges \emph{misleading} since
the size of the maximum matching among these edges is $o(1)\cdot \mu(G)$, but, they look roughly similar to significant edges.
We argue that the algorithm cannot distinguish between significant and misleading edges.
Therefore, as long as it is outputting a valid matching, it cannot output many of the significant edges, and thus has a poor approximation ratio.
This statement is formalized below.

Fix the edge set of the graph. Then, considering the randomness of the vertex IDs and the permutations of the adjacency lists, let $x_e$ be the probability that an edge $e$ is in the output of the LCA.

\begin{claim} \label{clm:small-fractional-matching}
    Take a deterministic non-adaptive LCA with query complexity $\kappa = \Delta^{\Omega(\log\Delta/\log\log\Delta)}$.
    For any edge $e \in E(G)$, let $x_e$ be the probability that the LCA includes $e$ in the output.
    Then, for any significant edge $e$, i.e.\ an edge between the vertices of $C_0^\zero$ and $C_0^\one$, it holds
    $x_e = o(1)$.
\end{claim}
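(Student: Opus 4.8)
The plan is to show that the LCA's verdict is almost oblivious to whether it is queried on a significant edge or on a misleading edge, and then to exploit the fact that misleading edges cannot support a large matching. Fix a significant edge $e=(u,v)$ with $u\in C_0^\zero$, $v\in C_0^\one$, and let $E'$ be a \emph{uniformly random} misleading edge (one between $C_0^\zero$ and $C_1^\zero$). I would couple the query process run on $e$ with the query process run on $E'$ — with fresh, suitably correlated vertex IDs and adjacency-list permutations in the two worlds — so that with probability $1-o(1)$ the two processes observe the very same sequence of $(\mathrm{ID},\mathrm{degree})$ pairs. Since a deterministic non-adaptive LCA outputs a fixed function of that sequence, this yields $|x_e-\E[x_{E'}]|=o(1)$, i.e.\ $x_e\le \frac{1}{N_0}\sum_{e'\text{ misleading}}x_{e'}+o(1)$. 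The second term is easy to bound: misleading edges run between $C_0^\zero$ and $C_1^\zero$ with $\card{C_1^\zero}=N_0/\delta$, so every matching — in particular the LCA's output — has at most $N_0/\delta$ misleading edges; hence $\sum_{e'\text{ misleading}}x_{e'}=\E[\#\{\text{misleading edges output}\}]\le N_0/\delta$, and $x_e\le 1/\delta+o(1)=o(1)$ as $\delta=\omega(1)$.

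So the whole burden is the coupling, which I would build incrementally along the (at most $\kappa$) query instructions, maintaining that the two explorations have so far seen identical IDs and degrees. The first ingredient is that, with probability $1-o(1)$, the explored subgraph is a tree: here I would use that the instance is sparse ($\Delta=2^{O(\sqrt{\log n\log\log n})}$) and that, conditioned on the edges already revealed, the unexplored part of each random regular bipartite subgraph behaves like an \ER{} graph of the same density (adapting the technique of \cite{BehnezhadRR-FOCS23}), which lets me bound the chance that the exploration ever closes a cycle, including the length-$4$ cycles the construction may contain. On the tree event every discovered vertex is fresh, so its ID can be coupled to agree in the two worlds, and its only informative feature is its degree, which equals $d_\mB$ of its cluster. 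Since $C_0^\zero$, $C_0^\one$, $C_1^\zero$ are non-leaf and hence all have blueprint-degree $\bar{d}_\mB$ (\cref{clm:non-leaf-degree}), they share the same multiset of outgoing labels (\cref{clm:same-outgoing-labels}); I can therefore couple the two explorations to follow, at every step, an edge with the \emph{same} label, and under this coupling the observed degrees coincide at every step unless the sequence of followed labels \emph{distinguishes} the two starting clusters in the sense of \cref{def:distinguishing-sequence}.

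The remaining task is to show that the followed label sequence is distinguishing with probability only $o(1)$, uniformly over the $\le\kappa$ nodes of the query tree; I would split on path length with $r=\log\Delta/\log\log\Delta$ as the threshold scale. For a path of length $\omega(r\log\Delta)$: every step enters the dummy cluster $D$ with constant probability (since $d(C,D)=\Delta_r+1$ is a constant fraction of $d_\mB(C)=\bar{d}_\mB$), and once both explorations sit at the same vertex of $D$ their histories coincide and they remain coupled forever; hence such a path avoids $D$ with probability $2^{-\omega(r\log\Delta)}=\Delta^{-\omega(r)}$, which beats a union bound over the $\kappa=\Delta^{O(r)}$ deep nodes. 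For a path of moderate length $O(r\log\Delta)$: by \cref{clm:distinguishing-path}, a distinguishing label sequence must contain indices $a_1<\cdots<a_r$ with $d_{a_i}\le\delta^i$; a single step follows a label $\le\delta^i$ with probability $O(\delta^{\,i-r-1})$ because a non-leaf cluster has $O(\delta^i)$ such edges out of $\bar{d}_\mB=\Theta(\delta^{r+1})$, so a fixed path realizes such a subsequence with probability at most $O(1)^r\prod_{i=1}^r\delta^{\,i-r-1}=O(1)^r\,\delta^{-r(r+1)/2}=\Delta^{-\Omega(r)}$; a union bound over the $\le\kappa$ nodes and the $\binom{O(r\log\Delta)}{r}\le\Delta^{O(1)}$ placements of $a_1,\dots,a_r$ still leaves $o(1)$, using $r\to\infty$ and $O(1)^r=\Delta^{o(1)}$.

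The hard part will be this last step: making the parameters line up so that the $\delta^{-r(r+1)/2}$ gain extracted from the $r$ forced low-label steps (\cref{clm:distinguishing-path}) genuinely dominates the union bound over all $\kappa$ query-tree nodes and all placements of the distinguished steps — which is exactly what pins down $r=\log\Delta/\log\log\Delta$ and $\kappa=\Delta^{(1/6)\log\Delta/\log\log\Delta}$ — together with carrying out the \ER{}-comparison argument carefully enough to certify that the exploration stays acyclic, which is what legitimizes treating degrees as the only observable and coupling IDs and labels freely. A secondary subtlety is handling the two endpoint-orderings of the queried edge and making the incremental coupling of the two ID/permutation assignments genuinely measure-preserving on each side.
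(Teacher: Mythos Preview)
Your proposal is correct and follows essentially the same route as the paper: establish acyclicity of the explored subgraph via the \ER{}-comparison argument, couple the two explorations label-by-label using \cref{clm:same-outgoing-labels}, bound coupling failure via the dummy-cluster absorption for long walks and the distinguishing-sequence characterization (\cref{clm:distinguishing-path}) for moderate ones, union-bound over the $\kappa$ query-tree nodes, and finish with the $1/\delta$ fractional-matching constraint on misleading edges. The paper organizes this as \cref{clm:edge-prob,clm:no-cycle,clm:path-coupling,clm:tree-coupling} and the final paragraph you wrote is essentially its proof of \cref{clm:small-fractional-matching}.

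One technical point you gloss over: your claim that ``on the tree event every discovered vertex is fresh'' is not quite right. Even when the induced subgraph is a tree, a query to the adjacency list of $v_k$ may return its parent $v_{k-1}$ in the query tree. This must be coupled with the corresponding back-step $v'_{k+1}=v'_{k-1}$ on the other side, but the labels of $(v_k,v_{k-1})$ and $(v'_k,v'_{k-1})$ need not agree (indeed, at the very first step from $C_0^\one$ vs.\ $C_1^\zero$ the back-labels are $1$ and $\delta$), so after removing those edges the remaining outgoing-label multisets of $v_k$ and $v'_k$ differ by one edge and the label-based coupling is no longer perfectly defined. The paper handles this by designating one ``special'' edge on each side, coupling special-to-special, and arguing that traversing a special edge still counts toward the critical-subsequence bound (probability $\le 1/\Delta$, color increases by at most one, and special edges vanish once the walks enter identical subtrees). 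This is a real wrinkle you should flag, though it does not change the overall shape or the final estimate.
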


The proof is deferred to later in the section.
Assuming the claim, we prove our main results.

\begin{lemma} \label{lem:deterministic-lb}
    Any \emph{deterministic} non-adaptive LCA that given a graph $G$ from the input distribution (as defined in \cref{def:final-graph}), $O(1)$-approximates the maximum matching with constant probability, requires $\kappa = \Delta^{\Omega(\log \Delta / \log \log \Delta)}$ queries.
\end{lemma}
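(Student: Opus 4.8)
The plan is to derive \cref{lem:deterministic-lb} from \cref{clm:small-fractional-matching} by a simple counting argument, deferring the real work (the coupling) to the proof of that claim. First I would set up the accounting of the maximum matching. Because the root clusters $C_0^{\zero}, C_0^{\one}$ contain a constant fraction of all vertices (the cluster sizes shrink geometrically by $\delta \gg r+1$, so by \cref{clm:value-of-N} the two root clusters alone carry $\Omega(n)$ vertices), and these root vertices are matched to each other by the $d=1$-regular \enquote{main matching} bipartite subgraph, the significant edges alone form a matching of size $\Omega(N_0) = \Omega(n) = \Omega(\mu(G))$; in particular $\mu(G) = \Theta(N_0)$. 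So any $c$-approximate matching $M$ output by the LCA must have $|M| \ge c\,\mu(G) = \Omega(N_0)$ edges.

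Next I would bound the expected number of edges the LCA actually outputs. Fix the edge set of $G$ (the random regular bipartite subgraphs), and let $x_e$ be the probability, over the vertex IDs and adjacency-list permutations, that $e$ is in the output. The expected output size is $\sum_{e \in E(G)} x_e$. I would split this sum into significant edges and all other edges. For significant edges, \cref{clm:small-fractional-matching} gives $x_e = o(1)$ for each, and there are only $O(N_0)$ of them, so their total contribution is $o(N_0)$. For the remaining edges, I would observe that a valid matching uses at most one edge incident to each vertex, hence for \emph{every} vertex $v$ we have $\sum_{e \ni v} x_e \le 1$ (since the events \enquote{$e \in M$} for $e \ni v$ are disjoint); summing over the non-root vertices and dividing by two, the non-significant edges contribute at most $n - |C_0^{\zero}| - |C_0^{\one}|$, which is a constant fraction $\beta < 1$ of $n$. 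The point is that, calibrating the cluster sizes (choosing $\epsilon = c/6$ and $\delta$ large relative to $r+1$ exactly as in the parameter setting), the root clusters can be made large enough that $\beta\, n < c\,\mu(G)$, so the non-significant edges alone cannot supply enough matching edges either. Combining, the expected output size is at most $\beta n + o(N_0) < c\,\mu(G)$.

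Finally I would turn this expectation bound into the statement about success probability. Since the output is always a valid matching, its size is a nonnegative random variable bounded above by $\mu(G)$; with expected value strictly below $c\,\mu(G)$, a Markov / averaging argument (or just the reverse Markov inequality, using the upper bound $\mu(G)$) shows the output has size $\ge c\,\mu(G)$ with probability at most some constant $< 1$, contradicting the assumed constant success probability of a $c$-approximate LCA from \cref{def:NLCA-MM}. Hence no deterministic non-adaptive LCA with $\kappa = \Delta^{o(\log\Delta/\log\log\Delta)}$ queries can $O(1)$-approximate maximum matching on this distribution, which is exactly \cref{lem:deterministic-lb}; the randomized case then follows by Yao's principle as already noted.

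The main obstacle is emphatically \emph{not} in this lemma but in \cref{clm:small-fractional-matching}, whose proof is the heart of the paper: one must show that for a significant edge $e = (u,v)$, the query tree rooted at $(u,v)$ of size $\kappa$ produces (in total variation) almost the same observed degree sequence as it would when rooted at a misleading edge, so the LCA's decision is essentially uncorrelated with the edge being significant versus misleading, forcing $x_e \approx x_{e'}$ for a misleading $e'$ and hence $x_e = o(1)$ (since misleading edges form only an $o(1)$ fraction of any matching). That argument requires: (i) showing the explored subgraph is a tree w.h.p.\ via the \ER-like behavior of the random regular bipartite pieces (\cref{clm:dummy-degree}-level sparsity, $\Delta = 2^{O(\sqrt{\log n\log\log n})}$); (ii) coupling the two explorations step by step so explored degrees match, handling deep vertices via the dummy cluster absorbing the walk w.h.p., and moderate-depth vertices via \cref{clm:distinguishing-path} bounding the number of low-labelled edges a distinguishing path must traverse; and (iii) a union bound over the $O(\kappa)$ queried vertices. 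Within \cref{lem:deterministic-lb} itself, the only subtlety to get right is the constant-fraction bookkeeping — ensuring the chosen $\epsilon$ and $\delta$ really do make the root clusters large enough that non-significant edges cannot by themselves meet the $c$-approximation threshold — but this is the routine calculation that the parameter setting at the end of \cref{subsec:final-construction} was arranged to make go through.
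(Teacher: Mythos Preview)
Your proposal is correct and follows essentially the same approach as the paper: invoke \cref{clm:small-fractional-matching} for the significant edges, bound the non-significant contribution by the size of the non-root vertex set (the paper phrases this as \enquote{$V(G)\setminus(C_0^\zero\cup C_0^\one)$ is a vertex cover for the insignificant edges}, which is exactly your fractional-matching inequality), and finish with the parameter arithmetic from \cref{clm:value-of-N}. One small slip: drop the phrase \enquote{dividing by two} — each non-significant edge has at least one non-root endpoint but not necessarily two, so the sum over non-root vertices already bounds $\sum_{e\text{ non-sig}} x_e$ without any division; the bound $n - |C_0^\zero| - |C_0^\one|$ you actually state is the right one.
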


\begin{proof}
    Taking an arbitrary constant $c \in (0, 1)$, we prove the lemma for any algorithm that outputs a matching of size at least $c\cdot\mu(G)$ in expectation.
    This also implies the lower bound for any algorithm that computes a $2c$ approximation with probability $\frac{1}{2}$.

    Since $x_e$ is the probability of each edge appearing in the output,
    by linearity of expectation,
    the size of the output matching is $\sum_e x_e$ in expectation.
    We upper-bound this sum.
    Observe that $V(G) \setminus (C_0^\zero \cup C_0^\one)$ is a vertex cover for the insignificant edges.
    Therefore, the sum of $x$ over these edges is at most
    $$
    \card{V(G) \setminus (C_0^\zero \cup C_0^\one)} = 2(N - N_0) + 2\epsilon N.
    $$
    Furthermore, the sum of $x$ over the significant edges is at most $o(N_0)$ since there are $N_0$ significant edges, and $x_e = o(1)$ for each significant edge $e$ (\cref{clm:small-fractional-matching}).
    Hence, the expected size of the output matching is
    $$
    \sum_e x_e = 2(N - N_0) + 2 \epsilon N +  o(N_0).
    $$
    On the other hand, the main-matching edges match all the non-dummy vertices. Therefore,
    $$
    \mu(G) \geq N.
    $$
    
    Putting the two together, the approximation ratio is at most
    \begin{align*}
    \frac{2(N - N_0) + 2 \epsilon N  + o(N_0)}{N}
    &\leq \frac{2(N-N_0)}{N_0} + 2\epsilon + o(1) \\
    &= \frac{r+1}{\delta - (r+1)} + 2\epsilon + o(1),
    \end{align*}
    where the second inequality follows from 
    $N \leq N_0 \left(1 + \frac{r + 1}{\delta - (r + 1)}\right)$ (\cref{clm:value-of-N}).
    Recall the values of the parameters $\delta \geq \left(\frac{3}{c}+1\right)(r+1)$ and $\epsilon = c/6$.
    Plugging these values, the approximation ratio is at most
    $$
    \frac{c}{3} + \frac{c}{3} + o(1) < c,
    $$
    which concludes the proof.
\end{proof}

We are now ready to prove our main theorem.

\begin{proof}[Proof of \cref{thm:main}]
    Follows from a direct application of Yao's min-max principle to \cref{lem:deterministic-lb}.
\end{proof}

The remainder of this section is devoted to the proof of \cref{clm:small-fractional-matching}.
First, we show that the subgraph induced by the vertices that the algorithm discovers is a tree with high probability (\cref{clm:edge-prob,clm:no-cycle}).
Then, we show that starting from a significant edge or a misleading edge, the tree that the algorithm discovers is similar in distribution (\cref{clm:path-coupling,clm:tree-coupling}).
As a result, we can show that the algorithm outputs each significant edge with probability at most $o(1)$. Hence, the approximation ratio suffers.

Our approach to prove the algorithm does not explore any cycles is the same as \cite{BehnezhadRR-FOCS23} (Lemmas 5.1 and 5.2). 
We include the argument here for completeness.

Intuitively, the following claim shows that since the graph is sparse ($\Delta = o(N_0)$), if the algorithm has not discovered many edges, the remainder of each of the regular bipartite subgraphs behaves like an \ER{} graph. Thus, we can bound the probability of existence for each edge.

\begin{claim} \label{clm:edge-prob}
    Consider the process of discovering the graph according to a fixed sequence of query instructions, and assume the number of queries $Q \leq \kappa = \Delta^{\Omega(\log\Delta/\log\log\Delta)}$.
    At any point, take a pair of vertices $u$ and $v$ such that no edge between them has been discovered.
    Conditioned on the edges discovered so far,
    the probability that there exists an edge between $u$ and $v$ is at most $O\left(\frac{\Delta^2}{N_0}\right)$.
\end{claim}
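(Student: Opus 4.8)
To prove the claim, the plan is to reduce the event to a single random bipartite subgraph of \cref{def:final-graph} and then run an edge-switching argument showing that exposing at most $Q\le\kappa$ of that subgraph's edges keeps the probability of a fixed unexposed pair at roughly its unconditioned value. Let $X,Y$ be the blueprint clusters containing $u,v$. Since the edges of $G$ are exactly those of the random regular bipartite subgraphs placed on the edges of $\mE_\mB$, if $X=Y$ or $(X,Y)\notin\mE_\mB$ then the probability in question is $0$ and there is nothing to prove. Otherwise let $H$ be the bipartite subgraph on $(X,Y)$ and set $a:=\card{X}$, $b:=\card{Y}$, $d_1:=d(X,Y)$, $d_2:=d(Y,X)$, and $m:=ad_1=bd_2$, the number of edges of $H$. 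As $H$ is drawn independently of every other bipartite subgraph, conditioning on all discovered edges reduces to conditioning on the set $F$ of discovered edges that lie in $H$; by hypothesis $|F|\le Q\le\kappa$ and $uv\notin F$, and (as is standard in this literature, using the random adjacency-list permutations together with the fact that all candidate graphs share the same degree sequence) the conditional law of $H$ is uniform over the simple bipartite graphs that are $d_1$-regular on the $X$ side, $d_2$-regular on the $Y$ side, and contain $F$.

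For the core estimate I would use an edge switching in the style of \cite{BehnezhadRR-FOCS23}. Let $\mathcal H^{+}$ and $\mathcal H^{-}$ be the sets of such graphs that do, respectively do not, contain the pair $uv$; if $\mathcal H^{+}=\emptyset$ the probability is $0$, so assume not. Given $H\in\mathcal H^{+}$, a switch chooses an edge $u'v'\in H\setminus F$ with $u'\notin N_H(v)\cup\{u\}$ and $v'\notin N_H(u)\cup\{v\}$ and replaces $\{uv,u'v'\}$ by $\{uv',u'v\}$; the result is a valid graph in $\mathcal H^{-}$, and the number of admissible choices of $u'v'$ is at least $m-|F|-O(d_1 d_2)\ge m-Q-O(d_1 d_2)$. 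Conversely, each $H'\in\mathcal H^{-}$ is produced by at most $d_1 d_2$ switches (pick a neighbor of $u$ in $H'$ other than $v$, and a neighbor of $v$ in $H'$ other than $u$). Double counting the switch-pairs gives $|\mathcal H^{+}|\,(m-Q-O(d_1 d_2))\le|\mathcal H^{-}|\,d_1 d_2$, hence
\[
\Pr[\,uv\in H\mid F\,]=\frac{|\mathcal H^{+}|}{|\mathcal H^{+}|+|\mathcal H^{-}|}\le\frac{|\mathcal H^{+}|}{|\mathcal H^{-}|}\le\frac{d_1 d_2}{m-Q-O(d_1 d_2)}.
\]

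To conclude, I would plug in crude bounds coming from the parameters of \cref{subsec:final-construction}. Every edge-label of the blueprint is at most the maximum degree $\Delta$ (\cref{clm:blueprint-max-degree}), so $d_1,d_2\le\Delta$ and $d_1 d_2\le\Delta^2$; and every cluster has at least $N_0/\Delta$ vertices---the root clusters have $N_0$, a cluster at depth $i\le r+1$ of a cluster tree has $N_0/\delta^i$ with $\delta^{r+1}=\Delta_r\le\Delta$, and $\card{D}=2\epsilon N=\Theta(N_0)$---so $m=ad_1\ge a\ge N_0/\Delta$. The parameter choices force $\Delta^3=o(N_0)$ and $\kappa\Delta=o(N_0)$ (both following from $\kappa^2\Delta^2=o(N_0)$ and $\kappa=\Delta^{\omega(1)}$), hence $d_1 d_2=o(m)$ and $Q\le\kappa=o(m)$, so $m-Q-O(d_1 d_2)=(1-o(1))m$ and
\[
\Pr[\,uv\in H\mid F\,]\le(1+o(1))\,\frac{d_1 d_2}{m}=(1+o(1))\,\frac{d_2}{a}\le(1+o(1))\,\frac{\Delta}{N_0/\Delta}=O\!\left(\frac{\Delta^2}{N_0}\right).
\]
The step I expect to be most delicate is the switching count: one must make precise that the conditioning really reduces to the uniform distribution over simple regular bipartite graphs containing $F$, and then bound the forbidden switches carefully enough that only an additive $O(d_1 d_2)+Q=o(m)$ loss enters the denominator, rather than something that swamps the main term. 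Everything else is arithmetic with the parameters.
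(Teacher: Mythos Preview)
Your proposal is correct and follows essentially the same approach as the paper: both reduce to the single random regular bipartite subgraph on $(X,Y)$, condition on the discovered edges $F$ within it, and bound $|\mathcal H^{+}|/|\mathcal H^{-}|$ via a two-switch counting argument, arriving at the same $O(\Delta^2/N_0)$ bound using $d_1,d_2\le\Delta$ and $|X|,|Y|\ge N_0/\Delta$. The paper's counts are organized slightly differently (it counts candidate edges $(x,y)$ by first fixing $y\notin N(u)$), but the arithmetic and the final inequality are the same; the point you flag as delicate---that the conditional law is uniform over regular bipartite graphs containing $F$---is likewise taken for granted in the paper.
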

\begin{proof}
    Let $X \in \mC_\mB$ be the cluster of $u$ and $Y \in \mC_\mB$ be the cluster of $v$.
    For simplicity, let $d_X$ denote $d(X, Y)$, and $d_Y$ denote $d(Y, X)$.
    Observe that both $\card{X}$ and $\card{Y}$ are at least $N_0 / \delta^r \geq N_0/\Delta$.
    If there is no edge between $X$ and $Y$ in $\mE_\mB$, then there is no edge between $u$ and $v$ in $G$, and the claim holds.
    Otherwise, let $F$ be the set of discovered edges between $X$ and $Y$.
    We can disregard other discovered edges in our conditioning as the set of edges between $X$ and $Y$ is chosen independently from the remainder of the graph.
    That is, we study the probability of $(u, v) \in E(G)$ conditioned on the fact that $F \subseteq E(G)$.
    Roughly, this probability is at most $\frac{d_X}{\card{Y}} = \frac{d_Y}{\card{X}}$, which is the density parameter of a \ER{} graph that has expected degree $d_X$ on the $X$ side and $d_Y$ on the $Y$ side.

    Let $\mG$ be the set of bipartite graphs between $X$ and $Y$
    that are $d_X$-regular on the $X$ side and $d_Y$-regular on the $Y$ side, \emph{and include $F$ in their edge set}.
    Let $\mG_0 \subseteq \mG$ be the subset of graphs that include edge $(u, v)$, and $\mG_1$ be the subset of graphs that do not.
    We bound the probability of interest as follows:
    $$
    \Pr[(u,v) \in E(G) \mid F \subseteq E(G)]
    = \frac{\card{\mG_0}}{\card{\mG}}
    \leq \frac{\card{\mG_0}}{\card{\mG_1}}.
    $$

    To bound $\card{\mG_0}/\card{\mG_1}$, we define a relation between graphs of $\mG_0$ and $\mG_1$, such that each graph in $\mG_0$ is related to at least $\Omega(d_Y \card{Y}) \geq \Omega(d_Y \cdot N_0/\Delta)$
    graphs of $\mG_1$,
    and each graph in $\mG_1$ is related to at most $O(d_Y \cdot \Delta)$
    graphs of $\mG_0$.
    Then, it follows
    $$
    \card{\mG_0} \cdot \Omega(d_Y \cdot N_0/\Delta)
    \leq \card{\mG_1} \cdot O(d_Y \cdot \Delta),
    $$
    and therefore,
    $$
    \frac{\card{\mG_0}}{\card{\mG_1}} = O\left(\frac{\Delta^2}{N_0}\right).
    $$

    First, we define the \emph{two-switch} operation.
    Take four vertices $a, b, c, d$ in a graph $G$ such that the subgraph induced by these four vertices includes only two edges $(a, b)$ and $(c, d)$.
    Then, performing a two-switch on $(a, b)$ and $(c, d)$ results in a new graph $G'$ such that
    $$
    E(G') = E(G) \setminus \{(a, b), (c,d)\} \cup \{(a, d), (b, c)\}.
    $$

    The relation is defined as follows:
    For a graph $G \in \mG_0$, consider all edges $(x, y) \notin F$ between $X$ and $Y$ such that a two-switch on $(u, v)$ and $(x, y)$ is defined.
    That is, there is no edge $(u, y)$ and no edge $(x, v)$ in $E(G)$.
    Let $G' = G \oplus (x, y)$ denote the graph obtained by applying a two-switch on $(u, v)$ and $(x, y)$,
    and let $G$ be related to all such graphs $G'$.
    Since $(u, v), (x, y) \notin F$, it still holds $F \subseteq E(G')$.
    Also, the two-switch operation deletes edge $(u, v)$ and does not change the degrees.
    Therefore, $G'$ is a member of $\mG_1$.
    It remains to bound the number of relations on each side.

    Take a graph $G \in \mG_0$.
    We count all the candidates $(x, y)$ for a two-switch operation.
    Since $G[X; Y]$ is $d_X$-regular on the $X$ side, there are $\card{Y} - d_X$ vertices $y \in Y$ that are not adjacent to $u$.
    Hence, as the graph is $d_Y$-regular on the $Y$ side, there are $(\card{Y} - d_X)d_Y$ edges $(x, y)$ such that $y$ is not adjacent to $u$.
    Among these edges, at most $\kappa$ of them are discovered (i.e.\ in $F$), and at most $d_Xd_Y$ of them have $x$ adjacent to $v$.
    Therefore, the number of two-switch candidates, and as a result, the number of graphs $G' \in \mG_1$ that $G$ is related to, is at most
    $$
    (\card{Y} - d_X)d_Y - \kappa - d_Xd_Y
    \geq \Omega(d_Y\cdot N_0/\Delta),
    $$
    where the inequality follows from $\card{Y} \geq N_0/\Delta$,
    $\kappa = o(d_Y \cdot N_0/\Delta)$,
    and $d_X \leq \Delta = o(N_0/\Delta)$.

    For the last part of the proof, take a graph $G' \in \mG_1$.
    To bound the number of graphs $G \in \mG_0$ related to $G'$,
    we characterize the pairs $(x, y)$ such that there exists a graph $G$ with $G' = G \oplus (x, y)$.
    For the equality to hold, we must have 
    $$
    (u, y), (x, v) \in E(G') \qquad \text{and} \qquad (u, v), (x, y) \notin E(G').
    $$
    By the first part above, we can upper-bound the number of valid pairs $(x, y)$ by $d_Xd_Y$ (there are at most $d_X$ choices for $y$ and $d_Y$ choices for $x$).
    Furthermore, for each $(x, y)$ there exists exactly one $G$ such that $G' = G \oplus (x, y)$ since $G$ can be explicitly expressed as $$E(G) = E(G') \setminus\{(u, y), (x, v)\} \cup \{(u, v), (x, y)\}.$$
    Therefore, there are at most $d_Xd_Y$ graphs $G \in \mG_0$ related to $G'$. Combined with the previous paragraph, this implies $\frac{\card{\mG_0}}{\card{\mG_1}} = O\left(\frac{\Delta^2}{N_0}\right),$ which gives the same bound on the probability, and thus concludes the proof.
\end{proof}

Since the graph is sparse ($\Delta = o(N_0)$),
we can use the claim above to show that if the algorithm does not make many queries, it will find no cycles.

\begin{claim} \label{clm:no-cycle}
    Take a deterministic non-adaptive LCA with at most $\kappa = \Delta^{\Omega(\log\Delta/\log\log\Delta)}$ queries,
    and let $H$ be the subgraph induced by the discovered vertices when prompted with an edge $(u, v)$.
    Then, $H$ is a tree with high probability.
\end{claim}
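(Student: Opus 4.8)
The plan is to reveal the randomness of the input graph $G$ (the vertex labels, the choices of the regular bipartite subgraphs, and the adjacency-list permutations) incrementally, in the order in which the non-adaptive exploration visits the graph, and to maintain a single invariant: after the first $i$ query instructions have been processed, the discovered vertices are pairwise distinct and the subgraph of $G$ induced by them consists \emph{exactly} of the edges revealed by the exploration so far, hence is a tree (connectivity is automatic, since every discovered vertex is linked by a chain of revealed edges back to $u$ or $v$, and $(u,v)\in E(G)$). Since the LCA is deterministic and non-adaptive, the sequence $(a_1,b_1),\dots,(a_Q,b_Q)$ is fixed once the prompt $(u,v)$ is given; we may also assume these instructions are pairwise distinct, since a repeated instruction merely re-reveals an already-discovered vertex. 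If the invariant survives all $Q\le\kappa$ steps, then $H$ equals the explored tree, with the subtrees hanging below any $\bot$-query pruned away (pruning leaves the remaining explored edges a tree), so $H$ is a tree.

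The heart of the argument is a per-step failure bound. Fix a step $i+1$ and condition on the invariant holding after step $i$; let $F_i$ be the set of edges discovered so far, which by the invariant is the explored tree on the distinct vertices $u_1,\dots,u_i$. Processing query $i+1$ reveals the content of slot $b_{i+1}$ in the adjacency list of $u_{a_{i+1}}$, namely the next vertex $u_{i+1}$. The invariant can break at this step in exactly two ways: a \emph{collision}, $u_{i+1}=u_j$ for some $j\le i$, or a \emph{chord}, $u_{i+1}$ is a fresh vertex that is $G$-adjacent to some $u_j$ with $j\le i$ and $j\neq a_{i+1}$. (It cannot break for any other reason: an edge between two already-discovered vertices would have been detected when the later of the two was revealed.) I would bound the probability of each, with the randomness of $u_{i+1}$ handled by revealing its identity first and then the edges joining it to the already-discovered set. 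For a chord with a fixed $j$, no edge between $u_{i+1}$ and $u_j$ was discovered before $u_{i+1}$ appeared, so \cref{clm:edge-prob} bounds $\Pr[(u_{i+1},u_j)\in E(G)]$ by $O(\Delta^2/N_0)$; a union bound over the $\le i$ choices of $j$ gives $O(i\Delta^2/N_0)$. For a collision with a fixed $j$, $u_{i+1}=u_j$ forces $(u_{a_{i+1}},u_j)\in E(G)$; if this edge already lies in $F_i$ then $u_j$ occupies an already-revealed slot of $u_{a_{i+1}}$ other than $b_{i+1}$, so $u_{i+1}\neq u_j$, a contradiction, and otherwise \cref{clm:edge-prob} bounds the probability that this edge exists by $O(\Delta^2/N_0)$, again totaling $O(i\Delta^2/N_0)$ over all $j$. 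The extra conditioning on "$u_{i+1}$ is fresh" only rescales these estimates by a $1+o(1)$ factor, as that event has probability $1-o(1)$.

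Summing over all steps $i\le Q\le\kappa$, the probability that the invariant ever fails is $O(\kappa^2\Delta^2/N_0)$, which is $o(1)$ by the parameter choice $\kappa^2\Delta^2=o(N_0)$; hence $H$ is a tree with high probability. The step I expect to be the main obstacle is the careful bookkeeping of the conditioning: \cref{clm:edge-prob} is a statement about a \emph{fixed} pair of vertices, whereas here one endpoint is the random vertex produced by the current query, so one must order the revelations so that "the edges discovered so far'' in the hypothesis of \cref{clm:edge-prob} coincides exactly with the filtration we are conditioning on (including the mild conditioning imposed by the invariant and by the freshness of $u_{i+1}$). This is precisely the point at which the argument follows Lemmas~5.1 and~5.2 of \cite{BehnezhadRR-FOCS23}.
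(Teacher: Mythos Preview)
Your approach is the same as the paper's—reveal the exploration step by step, apply \cref{clm:edge-prob} to bound the probability of an extra edge at each step, and union-bound over the at most $\kappa$ steps to get $O(\kappa^2\Delta^2/N_0)=o(1)$—but your invariant is stronger than needed and your collision argument does not establish it.

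The gap is in the sentence ``if this edge already lies in $F_i$ then $u_j$ occupies an already-revealed slot of $u_{a_{i+1}}$ other than $b_{i+1}$, so $u_{i+1}\neq u_j$, a contradiction.'' This is only true when the edge $(u_{a_{i+1}},u_j)$ was revealed by querying a slot of $u_{a_{i+1}}$, i.e.\ when $u_j$ is a \emph{child} of $u_{a_{i+1}}$ in the exploration. It fails for the one $j$ for which $u_j$ is the exploration \emph{parent} of $u_{a_{i+1}}$: that edge was revealed from $u_j$'s side, so $u_j$ sits at an as-yet-unqueried slot of $u_{a_{i+1}}$'s adjacency list, and slot $b_{i+1}$ may very well point back to it. This ``return-to-parent'' event has probability on the order of $1/\deg(u_{a_{i+1}})$ per step and occurs with high probability over the course of the exploration, so pairwise distinctness genuinely fails. (This is exactly the ``going back on the path'' phenomenon the paper has to handle separately in the coupling argument of \cref{clm:path-coupling}.)

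The fix is to drop distinctness from the invariant, as the paper does. If $u_{i+1}$ coincides with an already-discovered vertex, the set of discovered vertices is unchanged, hence so is the induced subgraph, and it remains a tree; there is nothing to bound. The only way the induced subgraph can stop being a tree is for $u_{i+1}$ to be a \emph{fresh} vertex that is $G$-adjacent to some previously discovered $u_j$ with $j\neq a_{i+1}$—your chord case—and your analysis of that case via \cref{clm:edge-prob} is correct and is exactly the paper's proof.
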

\begin{proof}
    Consider the process of discovering the graph according to the query instructions.
    Let $H_i$ be the subgraph discovered after $i$ queries, where $E(H_0) = \{(u, v)\}$.
    Let $E_i$ be the event that $i$ is the first step for which the subgraph induced by the discovered vertices, i.e.\ the non-singleton vertices of $H_i$, is not a tree.
    We upper-bound $\Pr(E_i)$ by $i \cdot O( \Delta^2/N_0)$ and apply the union bound to prove the claim.

    Consider step $i$.
    If the induced subgraph in the previous step was not a tree, then $E_i$ does not hold.
    Hence, we can restrict our attention to the case where the previous induced subgraph was a tree.
    As a result of the $i$-th query $a_i$, a new neighbor $v$ of $u_{a_i}$ might be revealed (that is, if the degree of $u_{a_i}$ is larger than $b_i$, the queried index in the adjacency list).
    Vertex $v$ must have been previously undiscovered since the previous induced subgraph was a tree.
    Therefore, event $E_i$ is equivalent to $v$ having at least one edge to another previously discovered vertex $u_j$, where $j \neq a_i$.
    There are $i$ discovered vertices other than $u_{a_i}$.
    By \cref{clm:edge-prob}, for each of these vertices, the probability of being adjacent to $v$ is at most $O(\Delta^2 / N_0)$.
    Taking the union bound over the discovered vertices, we get
    $$
    \Pr[E_i] \leq i \cdot O(\Delta^2/N_0).
    $$
    Finally, taking the union bound over all $i \leq \kappa$,
    implies that the subgraph induced by all the discovered vertices contains a cycle with probability at most
    \begin{equation*}
    O\left(\frac{\kappa^2 \Delta^2}{N_0}\right) = o(1). \qedhere
    \end{equation*}
\end{proof}

Given \cref{clm:no-cycle}, for the remainder of the argument, we condition on the high-probability event that the subgraph induced by the discovered vertices is a tree.
Recall that the algorithm computes the output, based on the sequence of vertices $u_1, \ldots, u_Q$ returned by the queries,
and their degrees $d_1, \ldots, d_Q$ which is also revealed to the algorithm.
As the induced subgraph is a tree,
we can assume without loss of generality that the algorithm computes the output based on the sequence of degrees $d_1, \ldots, d_Q$ (where degree $\bot$ is used for vertices $u_i = \bot$ that do not exist).
Hereafter, we refer to the query instructions as the query tree,
and when a vertex is returned as a result of an adjacency list query to a vertex $u$, we say the child of $u$ in the tree has been revealed.

To show that the algorithm cannot distinguish between a significant edge (i.e.\ a $C_0^\zero$--\,$C_1^\zero$ edge) and a misleading edge (i.e.\ a $C_0^\zero$--\,$C_0^\one$ edge), we show that the distribution of the degree sequence returned by the queries (i.e.\ what the algorithm observes) is almost equal between the two cases.
As a first step, we prove a similar statement for a path in the query tree. 

Intuitively, the claim is proven by showing that the path can reveal information only until it reaches $D$, which happens with constant probability in every step.
Then, the probability that the path reveals useful information before reaching $D$ is bounded by utilizing distinguishing label sequences as characterized in \cref{clm:distinguishing-path}.
To make the proof simpler, we define one of its main components here.

\begin{definition}
\label{def:label-coupling}
    Given two vertices $u$ and $u'$ with $\deg(u) = \deg(u')$, 
    consider the two processes of choosing a random neighbor $v$ of $u$,
    and choosing a random neighbor $v'$ of $u'$,
    the \emph{label-based coupling of the neighbors} of $u$ and $u'$ is defined as follows:

    Let $C, C' \in \mC_\mB$ be the clusters of $u$ and $u'$.
    It holds $d_\mB(C) = \deg(u) = \deg(u') = d_\mB(C')$.
    Therefore, by \cref{clm:same-outgoing-labels},
    clusters $C$ and $C'$ have the same set of outgoing edge-labels.
    That is, for every neighbor cluster $B$ of $C$, there is a neighbor $B'$ of $C'$ with $d(C, B) = d(C', B')$.
    Hence, $u$ has $d(C, B)$ neighbors $v \in B$,
    and $u'$ has the same number of neighbors $v' \in B'$.
    Each vertex $v \in B$ is coupled with a vertex $v' \in B'$.
    We also say the edges $(u, v)$ and $(u', v')$ are coupled.
\end{definition}

\begin{observation}
    In the label-based coupling (as defined in \cref{def:label-coupling}),
    it holds (1) the coupled edges $(u, v)$ and $(u', v')$ have the same labels,
    (2) traversing an edge of label $x$ has probability $\frac{x}{\deg(u)} = \frac{x}{\deg(u')}$,
    and (3) when $u$ and $u'$ are in non-dummy clusters, their edges to the dummies are coupled to each other, and so are their main-matching edges.
\end{observation}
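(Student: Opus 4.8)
The plan is to verify the three items one by one; each is a direct consequence of \cref{def:final-graph,def:blueprint,def:label-coupling}, and the only point that needs care is a repeated edge-label.

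First, for item~(1), I would recall from \cref{def:final-graph} that every edge placed in the bipartite subgraph between clusters $X$ and $Y$ carries the label $d(X,Y)$ when read in the $X$-to-$Y$ direction. In the label-based coupling a neighbor $v\in B$ of $u$ is matched with a neighbor $v'\in B'$ of $u'$, where $B$ and $B'$ are coupled neighbor-clusters satisfying $d(C,B)=d(C',B')$. Hence $(u,v)$ has out-label $d(C,B)$ and $(u',v')$ has out-label $d(C',B')$, and these are equal.

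Next, for item~(2), the key fact I would establish is that the label-based coupling is a bijection between the neighbors of $u$ and the neighbors of $u'$: the neighbor-clusters of $C$ partition the neighbor set of $u$, the counts add up correctly since $\sum_{B}d(C,B)=d_\mB(C)=\deg(u)$, inside each coupled pair $(B,B')$ the $d(C,B)=d(C',B')$ neighbors are placed in bijection, and $\deg(u)=d_\mB(C)=d_\mB(C')=\deg(u')$ holds by the hypothesis of \cref{def:label-coupling} together with \cref{clm:same-outgoing-labels}. Given this, a uniformly random neighbor of $u$ equals a prescribed neighbor with probability $1/\deg(u)$, so it lands in a coupled neighbor-cluster $B$ of out-label $x=d(C,B)$ with probability $x/\deg(u)$, which equals the corresponding probability $x/\deg(u')$ on the $u'$ side; consequently coupled edges are traversed with the same probability.

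Finally, for item~(3), I would note that for a non-dummy cluster $C$ the edge to $D$ has label $\Delta_r+1=\delta^{r+1}+1$, which is neither a power of $\delta$ nor equal to $1$, whereas every within-tree outgoing label of $C$ is a power of $\delta$ and the main-matching edge has label $1$; hence the dummy edge is the unique outgoing edge of $C$ (and of $C'$) with its label, and must be coupled to the dummy edge of $C'$. The single genuine ambiguity is the label $1$, shared by the main-matching edge and the $\delta^0$-child edge whenever the latter exists; this is resolved by declaring the (unique) main-matching partner of $C$ coupled to that of $C'$, a consistent choice because $d_\mB(C)=d_\mB(C')$ forces $C$ and $C'$ to be simultaneously leaves or non-leaves, hence to have the same multiset of outgoing labels, not merely the same set as in \cref{clm:same-outgoing-labels}. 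I expect this last point — upgrading ``same set'' to ``same multiset'' of outgoing labels so that the convention in item~(3) is legitimate — to be the only step requiring any thought; everything else is immediate from the definitions.
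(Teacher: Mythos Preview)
The paper gives no proof of this observation at all; it is simply stated and used. Your verification is correct and is precisely the reasoning the reader is meant to supply: item~(1) is immediate from the cluster-by-cluster definition of the coupling, item~(2) from the fact that $u$ has exactly $d(C,B)$ neighbors in each adjacent cluster $B$ and $\sum_B d(C,B)=d_\mB(C)=\deg(u)$, and item~(3) from the uniqueness of the label $\Delta_r+1$ among the outgoing labels of any non-dummy cluster.

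You are in fact more careful than the paper on one point. \cref{clm:same-outgoing-labels} only asserts equality of the \emph{set} of outgoing labels, but the cluster-level bijection in \cref{def:label-coupling} needs equality of \emph{multisets}; the only collision is the label $1=\delta^0$ carried both by the main-matching edge and, for non-leaf clusters, by the $\delta^0$-tree-edge. Your resolution---observe that $d_\mB(C)=d_\mB(C')$ forces $C$ and $C'$ to be simultaneously leaves or non-leaves, so the multisets of outgoing labels agree, and then stipulate that main-matching pairs with main-matching---is exactly right and is implicitly what item~(3) of the observation is asserting. The paper sweeps this under the rug; you have made it explicit.
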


\begin{claim} \label{clm:path-coupling}
    In the query tree, fix a path $P$ starting at $u$, where $(u, v)$ is the prompted edge.
    Let $\bd_0$ be the random variable denoting the sequence of degrees discovered by the algorithm when the prompted edge is between $C_0^\zero$ and $C_0^\one$ (significant),
    and $\bd_1$ denote the sequence of degrees discovered when the edge is between $C_0^\zero$ and $C_1^\zero$ (misleading).
    It holds:
    $$
    \dtv(\bd_0, \bd_1) \leq O\left(\frac{1}{\kappa^2}\right).
    $$
\end{claim}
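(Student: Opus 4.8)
The plan is to set up the label-based coupling of \cref{def:label-coupling} along the fixed path $P$, walking the two processes (origin at a significant edge versus a misleading edge) step by step, and to argue that with probability $1-O(1/\kappa^2)$ the two coupled walks traverse edges with identical label sequences \emph{and} land on clusters with identical degrees at every step, so that $\bd_0 = \bd_1$. The coupling succeeds perfectly as soon as both walks enter the dummy cluster $D$ (since from $D$ onward both sides are in the same cluster and the coupling of \cref{def:label-coupling} is the identity), so the only way to produce $\bd_0 \neq \bd_1$ is for the label sequence traversed before hitting $D$ to be a distinguishing sequence in the sense of \cref{def:distinguishing-sequence}. Thus the quantity to bound is the probability that the walk, before reaching $D$, follows a label sequence that distinguishes $C_0^\zero$ from the cluster paired with it ($C_0^\one$ on the significant side sits in the other tree but corresponds, via the main matching, to $C_0^\zero$; on the misleading side the walk starts effectively at $C_1^\zero$). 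Concretely, the first edge of the path leaves $u \in C_0^\zero$; in the significant case $v \in C_0^\one$ and in the misleading case $v \in C_1^\zero$, and we couple the two first steps so that the label is the same whenever possible. After that both walks live in isomorphic copies of $\mT_r$ (plus the dummy), so we can track them with a single label sequence and invoke \cref{clm:distinguishing-path}.

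First I would make precise that, conditioned on the tree event of \cref{clm:no-cycle}, a step of the walk from a non-dummy cluster enters $D$ with probability exactly $\frac{d(C,D)}{d_\mB(C)} = \frac{\Delta_r+1}{\bar d_\mB}$, which by \cref{clm:non-leaf-degree} and the parameter choices is a constant bounded below by some $p_0 > 0$ (for leaf clusters the degree is even smaller, so the probability is only larger). Hence the walk fails to reach $D$ within the first $\ell$ steps with probability at most $(1-p_0)^\ell$. Choosing $\ell = \Theta(r\log\Delta)$ with a large enough constant makes this at most $(1-p_0)^{\Theta(r\log\Delta)} = \Delta^{-\Theta(r)} = o(1/\kappa^2)$ (recall $\kappa = \Delta^{r/6}$), which handles the tail: any distinguishing that happens after step $\ell$ contributes at most $o(1/\kappa^2)$. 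So it remains to bound, by $O(1/\kappa^2)$, the probability that within the first $\ell = \Theta(r\log\Delta)$ steps the traversed label sequence distinguishes the origin.

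Next I would apply \cref{clm:distinguishing-path}: any distinguishing label sequence must contain a subsequence of $r$ indices $a_1 < \dots < a_r$ with $d_{a_i} \le \delta^i$. The label traversed at a given step, under the walk induced by the adjacency-list/degree model, has probability $\frac{x}{\deg(\text{current vertex})}$ of equalling $x$; since every degree on a non-dummy cluster is $\Theta(\delta^{r+1}) = \Theta(\Delta)$, the event ``the label at step $a_i$ is $\le \delta^i$'' has probability at most $\sum_{j \le i} \frac{(\text{number of edges with that label})\cdot \delta^j}{\Theta(\Delta)}$; each cluster has at most one outgoing edge of a given label (\cref{clm:same-outgoing-labels-in-tree} and the structure of the tree), so this is $O(\delta^{i}/\Delta) = O(\delta^{i - r - 1})$. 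Summing/union-bounding over all $\binom{\ell}{r}$ choices of positions for the subsequence and multiplying the per-position probabilities $\prod_{i=1}^r O(\delta^{i-r-1})$, the total is at most $\binom{\ell}{r} \cdot O\!\big(\delta^{-r(r+1)/2 + r(r+1)/2}\big)$ — here I need the telescoping $\sum_i (i-r-1) = -\binom{r+1}{2} - \dots$ to come out strongly enough; the point is that the product of the small label-probabilities is roughly $\delta^{-\Omega(r^2)}$ while $\binom{\ell}{r} \le \ell^r = (r\log\Delta)^{O(r)} = \Delta^{o(r)}$, so the geometric gain from the labels dominates and the bound is $\Delta^{-\Omega(r^2/\log r)}$ or at worst comfortably $\le 1/\kappa^2 = \Delta^{-r/3}$. (I would double-check the exponent bookkeeping against the parameter settings $\delta = \omega(r)$, $r = \log\Delta/\log\log\Delta$; the margin is generous.)

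\textbf{The main obstacle} I anticipate is the second part: carefully extracting from \cref{clm:distinguishing-path} a clean probabilistic statement. The subtlety is that the events ``$d_{a_i} \le \delta^i$'' at different positions are not independent, and the walk is not a fixed sequence of steps — the labels available depend on which cluster the walk currently occupies, and whether a given index of the path even makes it that far (the path may terminate early if a queried index exceeds the current degree). I would handle this by conditioning step-by-step: at each step, regardless of the history, the conditional probability that the traversed label is $\le \delta^i$ is $O(\delta^{i-r-1})$ because the current cluster has at most one outgoing edge of each label value $\le \delta^i$ and each contributes probability (label value)/(current degree); then a standard union bound over the $\binom{\ell}{r}$ position-patterns for the distinguishing subsequence, using these conditional bounds multiplicatively along the pattern, closes it. The one extra bookkeeping point is reconciling the coupled first step (where the two walks start in $C_0^\zero$ but with the pair sitting in different neighboring clusters $C_0^\one$ vs. $C_1^\zero$): I would absorb this by noting the first step's label can be coupled identically with probability $1$ (both are the main-matching/tree edge of label $1$, or can be matched via \cref{clm:same-outgoing-labels}), and then the remaining walk is genuinely a walk in a single cluster tree to which \cref{clm:distinguishing-path} applies verbatim. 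Combining the two regimes (tail beyond $\ell$ steps: $o(1/\kappa^2)$; distinguishing within $\ell$ steps: $O(1/\kappa^2)$) via the union bound yields $\dtv(\bd_0,\bd_1) \le O(1/\kappa^2)$.
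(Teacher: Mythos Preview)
Your approach is essentially the paper's: couple the two walks step by step via the label-based coupling of \cref{def:label-coupling}, split the analysis into a tail (after reaching $D$) and a head (the first $\ell = \Theta(r\log\Delta)$ steps), bound the tail by the geometric probability of avoiding $D$, and bound the head by combining \cref{clm:distinguishing-path} with a union bound over the $\binom{\ell}{r}$ placements of the critical subsequence. The arithmetic you sketch (each critical step has conditional probability $O(\delta^{i-r-1})$, the product is $\delta^{-r(r+1)/2}$, and $\ell^r$ is negligible against it) is exactly what the paper does.

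There is, however, a genuine gap you do not address. At step $k$ the uniformly random neighbor of $v_k$ may be $v_{k-1}$, and since the algorithm sees vertex identities this ``go back'' event must be coupled with $v'_{k+1} = v'_{k-1}$ on the other side. But the labels of the backward edges $(v_k,v_{k-1})$ and $(v'_k,v'_{k-1})$ need not agree: already from the second endpoint of the prompted edge, the backward label is $1$ (main matching) on the significant side and $\delta$ on the misleading side. Once you force the backward events to be coupled, the \emph{remaining} forward label multisets at $v_k$ and $v'_k$ differ by exactly one entry, so \cref{def:label-coupling} no longer applies cleanly, and this discrepancy propagates along the whole walk. The paper patches this by designating the two mismatched edges as \emph{special}, coupling them to each other despite the label mismatch, and then showing that traversing a special edge can be folded into the critical-subsequence argument of \cref{clm:distinguishing-path}: special edges are traversed with probability at most $1/\Delta$, they do not occur while the two walks sit inside identical subtrees, and traversing one increases the maximum color by at most one. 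Your ``one extra bookkeeping point'' about the first step is only a symptom of this issue, and your proposed resolution (couple the first label identically) does not handle its propagation.

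A smaller omission: before invoking \cref{clm:distinguishing-path} the paper also strips out the main-matching steps (which merely shuttle the walk to the corresponding cluster in the other tree and cannot cause a degree mismatch); you do not mention this.
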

\begin{proof}
    Let the random variables $v_0 = u, v_1, \ldots, v_L$ denote the vertices discovered by the algorithm on the query path $P$, when $(u, v)$ is between $C_0^\zero$ and $C_0^\one$.
    Define $v'_0, v'_1, \ldots v'_L$ similarly for the case where $(u, v)$ is between $C_0^\zero$ and $C_1^\zero$.
    To prove the claim, we couple $\{v_i\}_i$ and $\{v'_i\}_i$ such that 
    $\deg(v_i) = \deg(v'_i)$ for all $0 \leq i \leq L$ with a probability larger than $1 - \frac{2}{\kappa^2}$.

    The coupling is defined iteratively, by revealing the vertices of the two paths step by step.
    The paths start at the prompted edges.
    Throughout the coupling, we try to maintain the invariant that the two sequences of degrees revealed on the path so far are equal.
    If this attempt fails, we say that the coupling has failed and the rest of the process is coupled arbitrarily.
    To prove the claim, we show that the coupling fails overall with probability at most $2/\kappa^2$.
    On a high level, the two paths can be thought of as walks between the clusters of $\mC_\mB$ which follow the same edge-labels.

    Assume that so far the vertices $v_0, v_1, \ldots v_k$
    and $v'_0, v'_1, \ldots v'_k$ have been revealed, and the invariant holds.
    Specifically, $\deg(v_k) = \deg(v'_k)$.
    Let $b$ be the index of the adjacency list that the algorithm queries.
    If $b > \deg(v_k)$, then in both cases, the $b$-th element of the adjacency list is $\bot$.
    As a result, all the subsequent vertices in the paths are also $\bot$, and the paths are coupled without fail.
    Otherwise, since we permutate the adjacency list at random, the $b$-th element in the adjacency list of $v_k$ or $v'_k$ is a neighbor chosen uniformly at random.
    Hence, we can use the label-based coupling of the neighbors of $v_k$ and $v'_k $(\cref{def:label-coupling}) to determine $v_{k+1}$ and $v'_{k+1}$.
    For now, ignore the fact that one of the neighbors of $v_k$ is $v_{k-1}$. This subtlety is addressed at the end of the proof.

    Now we bound the probability of the coupling failing, i.e.\
    the probability that for some $k$, when $v_k$ and $v'_k$ are revealed according to the coupling, it holds $\deg(v_k) \neq \deg(v'_k)$.
    Assuming the coupling has not failed up to step $k$,
    if it happens that $v_{k+1} \in D$, then it must also hold $v'_{k+1} \in D$.
    In this case, we say the coupling has converged, and it will not fail in the future.
    This is true since once the paths reach the same cluster (in this case $D$), following the same outgoing edge-labels leads them to the same clusters in any subsequent step $k' > k$,
    and thus $\deg(v_{k'}) = \deg(v'_{k'})$.

    We bound the probability of not reaching $D$ in the first $\ell = 2 \log \kappa$ steps.
    Recall that for any cluster $C \neq D$, it holds $d(C, D) = \Delta_r + 1$. Therefore, when $v_k \in C$, the probability of moving to $D$ is:
    $$
    \frac{d(C, D)}{d_\mB(C)} = \frac{\Delta_r + 1}{d_\mT(C) + \Delta_r + 2} \geq \frac{1}{2},
    $$
    where the last inequality follows from $d_\mT(C) \leq \Delta_r$.
    Therefore, the probability of not moving to the dummy cluster in the first $\ell$ steps is at most $2^{-\ell} \leq 1/\kappa^2$.

    Separately, we bound the probability of the coupling failing in the first $\ell$ steps.
    Consider the prefix of the path that does not contain $D$.
    As argued before, the coupling can only fail in this part.
    In addition, ignore the steps where a main-matching edge is traversed, i.e.\ when the path simply moves to the corresponding cluster in the other tree.
    Considering the remaining subsequence $P'$, seeing as the coupling fails, the labels of the traversed edges (which are the same for the two paths) form a distinguishing sequence (\cref{def:distinguishing-sequence}).
    Therefore, by \cref{clm:distinguishing-path}, there must be a subsequence of labels $l_1, \ldots, l_r$
    such that $l_i \leq \delta^i$.
    We call the subsequence and its steps \emph{critical}.
    We bound the probability that a critical subsequence exists in the first $\ell$ steps.
    
    Observe that there are $\binom{\ell}{r}$ possible positions for the elements of the subsequence.
    Also, the critical steps must be taken from non-leaf clusters of the trees.
    Therefore, the probability of taking a critical step from a cluster $C$, and traversing an edge of label $l_i$ is
    $$
    \frac{l_i}{d_\mB(C)} = \frac{l_i}{\bar{d}_r + \Delta_r + 1} \leq \frac{l_i}{\delta^{r+1}}.
    $$
    Hence, the probability of the path failing in the first $\ell$ steps is at most
    \begin{align*}
        \binom{\ell}{r} \prod_{1\leq i\leq r}\frac{l_i}{\delta^{r+1}}
        &\leq \ell^r \prod_{1\leq i\leq r}\frac{\delta^i}{\delta^{r+1}} \\
        &= \ell^r \frac{\delta^{r(r+1)/2}}{\delta^{r(r+1)}} \\
        &= \left(\frac{\ell}{\delta^{(r+1)/2}}\right)^r \\
        &\leq \frac{1}{\kappa^2}.
    \end{align*}
    To see why the last step holds, recall that 
    $$\ell = 2 \log \kappa, \qquad 
    \kappa = \Delta^{\Omega(\log\Delta/\log\log\Delta)}, \qquad \text{and} \qquad
    \Delta = \frac{\delta^{r+1} + 1}{\epsilon}.
    $$
    Therefore,
    \begin{align*}
        \left(\frac{\ell}{\delta^{(r+1)/2}}\right)^r
        &\leq \left(\frac{\ell}{(\Delta/2\epsilon)^{1/2}}\right)^r \\
        &\leq \left(\frac{\Omega(\log^2\Delta/\log\log\Delta)}{(\Delta/2\epsilon)^{1/2}}\right)^r \\
        &\ll \left(\frac{1}{\Delta^{1/3}}\right)^r \\
        &= \frac{1}{\kappa^2}.
    \end{align*}

    Putting things together, we have shown that the probability of the coupling failing in the first $\ell$ steps is at most $1/\kappa^2$.
    Also, the probability of the paths not reaching $D$ in the first $\ell$ steps,
    and hence the probability of the coupling failing after the first $\ell$ steps,
    is at most $1/\kappa^2$.
    Therefore, the overall probability of the coupling failing is at most $2/\kappa^2$.

    It remains to address the issue of going back on the path.
    That is, while querying neighbors of $v_k$, the algorithm might rediscover the previous vertex of the path, i.e.\ $v_{k+1} = v_{k-1}$.
    To do so, we slightly modify the coupling.
    We couple this case with $v'_{k+1} = v'_{k-1}$, since both instances are observing repeated vertex labels.
    This causes an issue if the label of $(v_{k}, v_{k+1})$ is not the same as $(v'_{k}, v'_{k+1})$,
    as it causes the remainder of the outgoing labels of $v_k$ and $v'_k$
    to differ by one edge, making it infeasible to couple based on edge label.
    We show that through a careful treatment of the coupling, this discrepancy is always limited to at most one edge, and that it does not pose any challenges to the rest of the argument.

    Starting at $v_1 \in C_0^\zero$, $v_2 \in C_0^\one$ in one case,
    and at $v'_1 \in C_0^\zero$, $v'_2 \in C_1^\zero$ in the other,
    we have coupled the cases $v_3 = v_1$ and $v'_3 = v'_1$.
    The remainder of outgoing labels of $v_2$ and $v'_2$ are the same,
    except $v_2$ has $\delta$ remaining edges with label $\delta$ whereas $v'_2$ has $\delta - 1$,
    and $v'_2$ has one main-matching edge and $v_2$ does not.
    We call the main-matching edge of $v'_2$ and one of the edges of $v_2$ that has label $\delta$ \emph{special}, couple the special edge together, and couple the remaining edges as before based on equal labels.
    
    More generally, if $(v_k, v_{k-1})$ has label $s$ and $(v'_k, v'_{k-1})$ has label $s'$,
    then among the remaining edges, 
    $v_k$ has one fewer edge of label $s$ compared to $v'_k$,
    and $v'_k$ has one fewer edge of label $s'$ compared to $v_k$.
    Therefore, we choose one outgoing edge of $v_k$ with label $s'$,
    and one outgoing edge of $v'_k$ with label $s$ as special.
    Then, we couple the special edges together, and couple the remaining edges based on equal labels.
    We modify the definition of critical subsequence and allow it to contain special edges. 
    This does not interfere with the argument since (1) the special edges are traversed with probability at most $\frac{1}{\Delta}$, (2) the vertices do not have special edges while they are inside identical subtrees of the blueprint, and (3) traversing special edges increases the maximum color of the clusters by at most $1$ (analogous to the proof of \cref{clm:distinguishing-path} for details).
    As a result, even with the special edges, the coupling fails with probability at most $\frac{2}{\kappa^2}$ which concludes the proof.
\end{proof}

Having shown each path in the query tree does not reveal much information about the prompted edge,
we prove the same thing about the query tree as a whole.

\begin{claim} \label{clm:tree-coupling}
    Let $\bd_0$ be the random variable denoting the sequence of degrees discovered by the query tree when the prompted edge is between $C_0^\zero$ and $C_0^\one$,
    and $\bd_1$ denote the sequence of degrees discovered when the edge is between $C_0^\zero$ and $C_1^\zero$.
    Then,
    $$
    \dtv(\bd_0, \bd_1) \leq O\left(\frac{1}{\kappa}\right) = o(1).
    $$
\end{claim}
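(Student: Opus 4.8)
The plan is to upgrade the path-level coupling of \cref{clm:path-coupling} into a single global coupling of the entire exploration, and then control the global failure probability by a union bound over the (at most $\kappa+1$) root-to-node paths of the query tree. First I would condition on the high-probability event from \cref{clm:no-cycle} that, in both the significant and the misleading instance, the subgraph induced by the discovered vertices is a tree; this event fails with probability $o(1)$, and taking $N_0$ sufficiently large it fails with probability $O(1/\kappa)$. On this event the explored structure coincides with the query tree as a graph, so each discovered vertex $u_j$ is reached by a unique path in the query tree, and once two query paths branch they explore disjoint parts of $G$.

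Next I would build the global coupling by processing the query instructions $(a_1,b_1),\dots,(a_Q,b_Q)$ in order, maintaining the pair $(u_j,u_j')$ of vertices discovered in the two instances, together with the invariant that $\deg(u_j)=\deg(u_j')$ and that the already-revealed edges incident to $u_j$ and to $u_j'$ are matched up with equal labels (allowing the usual at-most-one \emph{special} edge of discrepancy, handled exactly as at the end of the proof of \cref{clm:path-coupling}). When the $i$-th instruction asks for the $b_i$-th neighbor of $u_{a_i}$: if that slot was queried before, or if $b_i>\deg(u_{a_i})=\deg(u'_{a_i})$, then nothing new (or a matching $\bot$) is revealed and the invariant is preserved; otherwise we reveal a uniformly random not-yet-revealed neighbor of $u_{a_i}$ and of $u'_{a_i}$ via the (special-edge-adjusted) label-based coupling of \cref{def:label-coupling}, which is well defined because equal degrees imply equal outgoing label multisets (\cref{clm:same-outgoing-labels}). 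Since we only ever peel off matching label-pairs from the two adjacency lists, this is consistent across the several query paths passing through the same vertex, so the marginal behaviour of the coupling along any single root-to-node path $P$ is precisely the path coupling analyzed in \cref{clm:path-coupling}.

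Declare the global coupling to fail if at some step the two revealed degrees differ. This event is contained in the union, over all root-to-node paths $P$ of the query tree, of the event ``the coupling restricted to $P$ fails'', and by the previous paragraph each such event has probability at most $O(1/\kappa^2)$ by \cref{clm:path-coupling} (for a path of any length, the argument there already covers arbitrarily long paths via reaching the dummy cluster within $2\log\kappa$ steps). Since the query tree has at most $Q+1\le\kappa+1$ vertices, a union bound gives that the global coupling fails with probability at most $(\kappa+1)\cdot O(1/\kappa^2)=O(1/\kappa)$. Adding the $O(1/\kappa)$ probability that the no-cycle event fails, we conclude that $\bd_0$ and $\bd_1$ can be coupled to be identical except with probability $O(1/\kappa)$, hence $\dtv(\bd_0,\bd_1)\le O(1/\kappa)=o(1)$.

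The main obstacle is the middle step: one cannot simply union-bound total-variation distances over the query paths (total variation does not decompose over correlated coordinates), so the argument must exhibit \emph{one} coherent coupling of the whole exploration whose single failure event decomposes over paths. The delicate point is that distinct query paths sharing a vertex must be coupled consistently — which is exactly why it matters that the label-based coupling peels off matching label-pairs (so revealing one child does not spoil the coupling of the siblings) and that, on the no-cycle event, the subtrees hanging below a shared vertex land in disjoint parts of the graph, so there is no further interaction between the paths to analyze.
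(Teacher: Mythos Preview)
Your proposal is correct and takes essentially the same approach as the paper: build one global coupling by running the iterative label-based coupling of \cref{clm:path-coupling} simultaneously across the whole query tree, then union-bound the single failure event over the at most $\kappa$ root-to-node paths, each contributing $O(1/\kappa^2)$. The paper's proof is much terser (three sentences), whereas you explicitly spell out the subtlety that one must couple once globally rather than union-bound total-variation distances, and why the label-based coupling is consistent at branching vertices; these are useful elaborations of the same idea rather than a different route.
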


\begin{proof}
    Since the coupling of \cref{clm:path-coupling} is defined iteratively,
    it can be implemented for all the nodes of the query tree simultaneously.
    Then, the coupling fails on each path starting at the root (the prompted edge) with probability at most $O(1/\kappa^2)$.
    Taking the union bound over all the possible $\kappa$ paths,
    the coupling fails with probability at most $O(1/\kappa)$.
    That is, the sequence of degrees that the algorithm observes is different in the two cases with probability at most $O(1/\kappa)$.
    This concludes the proof.
\end{proof}

Finally, we are ready to prove that the algorithm is not very likely to include each significant edge in the output.

\begin{proof}[Proof of \cref{clm:small-fractional-matching}]
    The LCA always outputs a matching, i.e.\ for any vertex $u$,
    at most one adjacent edge is in the output.
    Therefore, the probabilities $x_e$ of each edge $e$ being in the output form a fractional matching.
    Now, fix any vertex $u \in C_1^\zero$ and consider its $\delta$ edges to $C_0^\zero$.
    Since we have assumed that the subgraph induced by the discovered vertices is a tree, the tree queried from each of these edges has the exact same distribution.
    Therefore, all these edges have the same probability of being added to the output $x_e \leq \frac{1}{\delta}$.

    By \cref{clm:tree-coupling},
    the distribution of the tree queried from a $C_0^\zero$--\,$C_0^\one$ edge
    is at most $o(1)$-far from that of a $C_0^\zero$--\,$C_1^\zero$ edge.
    Therefore, for any edge $e$ between $C_0^\zero$ and $C_0^\one$,
    it holds
    \begin{equation*}
    x_e \leq \frac{1}{\delta} + o(1) = o(1). \qedhere
    \end{equation*}
\end{proof}

\section{Vertex Cover Lower Bound} \label{sec:vc}

Similar to \cref{sec:matching-lb}, we prove a $\Delta^{\Omega(\log\Delta/\log\log\Delta)}$ lower bound for any deterministic algorithm that computes a constant approximation of the minimum vertex cover against our hard input distribution.
Then, we apply Yao's min-max principle to prove the lower bound for randomized LCAs.
We use a slightly modified version of \cref{def:final-graph} as the input distribution.
Namely, we remove all the main-matching edges.
That is, the edge between corresponding clusters of the two trees.

Let $S_0 := C_0^\zero \cup C_0^\one$,
and $S_1 := C_1^\zero \cup C_1^\one$.
The minimum vertex cover includes all vertices except those of $S_0$.
Yet, we show that the algorithm cannot distinguish between $S_0$ and $S_1$.
Therefore, to cover the edges between $S_0$ and $S_1$, the algorithm must include about half of $S_0$ in the output, which is suboptimal.

\begin{claim} \label{clm:large-fractional-cover}
    Take a deterministic non-adaptive LCA with query complexity $\kappa = \Delta^{\Omega(\log\Delta/\log\log\Delta)}$ for the minimum vertex cover problem.
    For any vertex $u \in V(G)$, let $y_u$ be the probability that the LCA includes $u$ in the output.
    Then, for any vertex $u \in S_0$, it holds
    $y_u \geq \frac{1}{2} - o(1)$.
\end{claim}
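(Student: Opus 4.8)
This is the vertex-cover analogue of \cref{clm:small-fractional-matching}, and I would prove it the same way: show the LCA cannot tell a vertex of $S_0$ apart from a vertex of $S_1$, then invoke validity of the output cover to push $y_u$ up. The fact that makes this possible is that, once the main-matching edges are deleted (as in the vertex-cover input distribution), the clusters $C_0$ and $C_1$ become interchangeable from the algorithm's point of view. \textbf{Step 1 (structural set-up).} Since the only change to \cref{def:final-graph} is the deletion of edges, \cref{clm:edge-prob} and \cref{clm:no-cycle} hold with the same proofs, so with probability $1-o(1)$ the subgraph explored from any prompted vertex is a tree, and conditioned on that event the LCA's answer is a deterministic function of the observed degree sequence. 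I would then record the key observation: in the modified blueprint $C_0$ and $C_1$ are both non-leaf clusters, hence by \cref{clm:non-leaf-degrees-in-tree} they have the same tree-degree $\bar d_r$, hence (no main-matching edge) the same blueprint degree $\bar d_r+\Delta_r+1$, and by \cref{clm:same-outgoing-labels-in-tree} together with the common label-$(\Delta_r+1)$ edge to $D$ they have the same multiset of outgoing edge labels. In particular a vertex $u\in C_0^\zero$ and a vertex $u'\in C_1^\zero$ have equal degree, so the label-based coupling of \cref{def:label-coupling} is defined for their neighborhoods.

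\textbf{Step 2 (indistinguishability of $C_0^\zero$ and $C_1^\zero$).} Let $\mathcal{D}_0$ (resp.\ $\mathcal{D}_1$) be the distribution of the degree sequence observed by the LCA when prompted with a vertex of $C_0^\zero$ (resp.\ $C_1^\zero$); by the random vertex IDs this is independent of which vertex of the cluster is chosen. I would re-run the coupling of \cref{clm:path-coupling} and \cref{clm:tree-coupling}, this time coupling a walk started in $C_0^\zero$ with one started in $C_1^\zero$. Both ingredients transfer directly: every step moves to $D$ with probability at least $1/2$ (since $d(C,D)=\Delta_r+1\geq d_\mB(C)/2$ for every non-dummy $C$), so a walk longer than $\ell=2\log\kappa$ reaches $D$ — after which the walks couple perfectly — except with probability $1/\kappa^2$; and while the walk stays inside the trees, a failure of the coupling forces the traversed tree-labels to form a sequence that \emph{distinguishes $C_0$ from $C_1$} in the sense of \cref{def:distinguishing-sequence}, which by \cref{clm:distinguishing-path} must contain a critical subsequence $l_1,\dots,l_r$ with $l_i\leq\delta^i$ — an event of probability at most $1/\kappa^2$ by exactly the computation in \cref{clm:path-coupling}. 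The ``special edge'' bookkeeping for walks that backtrack carries over unchanged (indeed it is slightly simpler here, as there are no main-matching edges to track). A union bound over the at most $\kappa$ root-paths of the query tree then gives $\dtv(\mathcal{D}_0,\mathcal{D}_1)=O(1/\kappa)=o(1)$, and by the automorphism of the construction that swaps corresponding clusters of the two trees the same holds for $C_0^\one$ versus $C_1^\one$.

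\textbf{Step 3 (from indistinguishability to the bound).} Fix $u\in C_0^\zero$ and let $v\in C_1^\zero$ be its unique neighbor through the label-$1$ bipartite subgraph between $C_0^\zero$ and $C_1^\zero$. Because the LCA always outputs a valid vertex cover, with probability $1$ at least one of $u,v$ belongs to the output, so $y_u+y_v\geq 1$. On the other hand, conditioning on the no-cycle event (probability $1-o(1)$), the LCA's accept/reject on a prompted vertex is a function of the observed degree sequence alone, so $|y_u-y_v|\leq\dtv(\mathcal{D}_0,\mathcal{D}_1)+2\Pr[\text{cycle discovered}]=o(1)$ by Step 2. Combining, $2y_u\geq y_u+y_v-|y_u-y_v|\geq 1-o(1)$, i.e.\ $y_u\geq 1/2-o(1)$; for $u\in C_0^\one$ the argument is identical using its neighbor in $C_1^\one$, which settles all of $S_0=C_0^\zero\cup C_0^\one$.

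\textbf{Main obstacle.} The substance is all in Steps 1--2, and even there the analysis is a reuse of \cref{clm:path-coupling}: once one checks that $C_0$ and $C_1$ have matching degrees and outgoing labels in the main-matching-free blueprint (so the coupling is well-defined), \cref{clm:distinguishing-path} — which is already phrased precisely for $C_0$ versus $C_1$ — supplies exactly the distinguishing-sequence bound that is needed, and the reaching-$D$ argument and the union bound are verbatim. The one place that takes a little care is the backtracking bookkeeping, handled as in \cref{clm:path-coupling} by designating a special edge whenever the outgoing-label multisets of the two current vertices come to differ by one edge. Step 3 is then a short deterministic argument.
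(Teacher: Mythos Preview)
Your proposal is correct and follows essentially the same approach as the paper: use validity of the output cover to get $y_u+y_v\geq 1$ for an edge $(u,v)$ with $u\in S_0$, $v\in S_1$, and use a coupling argument (the analogue of \cref{clm:tree-coupling}) to get $|y_u-y_v|=o(1)$, hence $y_u\geq \tfrac12-o(1)$. The paper's own proof is very terse and simply defers the indistinguishability step to ``a similar argument to \cref{clm:tree-coupling}''; your Steps~1--2 are a correct fleshing-out of exactly that deferred argument, including the observation that after removing the main-matching edges $C_0$ and $C_1$ have identical blueprint degree and outgoing label multisets, so the label-based coupling is well-defined from the start and \cref{clm:distinguishing-path} applies verbatim.
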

\begin{proof}
    Since the LCA always outputs a valid vertex cover,
    for every edge $(u, v)$, it holds $y_u + y_v \geq 1$.
    That is, $y$ is a fractional vertex cover.
    Specifically, for an edge $(u, v)$ where $u \in S_0$ and $v \in S_1$,
    using a similar argument to \cref{clm:tree-coupling},
    one can show that the algorithm does not distinguish between $u$ and $v$ with high probability.
    Therefore, we must have $\card{y_u - y_v} = o(1)$,
    and thus $y_u \geq \frac{1}{2} - o(1)$.
\end{proof}

\begin{lemma}
    Any \emph{deterministic} non-adaptive LCA that given a graph $G$ from the modified input distribution (as described earlier in the section), $O(1)$-approximates the minimum vertex cover with constant probability, requires $\kappa = \Delta^{\Omega(\log \Delta / \log \log \Delta)}$ queries.
\end{lemma}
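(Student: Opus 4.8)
The plan is to follow the proof of \cref{lem:deterministic-lb} almost verbatim, with the fractional‑matching upper bound replaced by the fractional‑cover lower bound of \cref{clm:large-fractional-cover}. Fix the target approximation ratio $c$ and choose $\epsilon$ to be a sufficiently small constant depending on $c$; this is the only parameter we need to re‑tune, and since $\epsilon$ stays constant, $\delta$ (determined by $\Delta=(\delta^{r+1}+1)/\epsilon$) still satisfies $\delta=\omega(r)$, so every earlier statement about the construction continues to hold. We prove the lemma for any algorithm that, on $G$ from the modified distribution, outputs a vertex cover of \emph{expected} size at most $c\cdot\opt(G)$ (where $\opt(G)$ is the minimum vertex cover size); the bound for algorithms succeeding with constant probability then follows by a standard argument analogous to the matching case, now using that a valid vertex cover always has size at most $|V(G)|=n=O(N_0)$ and choosing $\epsilon$ small enough to absorb the failure probability.

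So suppose a deterministic non‑adaptive LCA with $Q\le\kappa$ queries outputs a vertex cover of expected size at most $c\cdot\opt(G)$. Because the output is always a valid vertex cover, the values $y_u=\Pr[u\in\textnormal{output}]$ satisfy $y_u+y_v\ge 1$ for every edge $(u,v)$, i.e.\ they form a fractional vertex cover, and by linearity of expectation the expected output size equals $\sum_{u\in V(G)}y_u$. Restricting the sum to $S_0=C_0^\zero\cup C_0^\one$ and invoking \cref{clm:large-fractional-cover} (which gives $y_u\ge\frac12-o(1)$ for every $u\in S_0$) yields
$$
\sum_{u\in V(G)}y_u\ \ge\ \sum_{u\in S_0}y_u\ \ge\ |S_0|\Big(\tfrac12-o(1)\Big)\ =\ 2N_0\Big(\tfrac12-o(1)\Big)\ =\ (1-o(1))N_0 .
$$
For the other direction, since the main‑matching edges are removed there are no edges inside $S_0$ (the two root clusters are internally edgeless and are joined only through their children and $D$), so $V(G)\setminus S_0$ is a vertex cover and $\opt(G)\le n-2N_0=(2+2\epsilon)N-2N_0$. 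Using $N\le N_0\big(1+\frac{r+1}{\delta-(r+1)}\big)=(1+o(1))N_0$ from \cref{clm:value-of-N} (recall $\delta=\omega(r)$), we get $\opt(G)\le(2\epsilon+o(1))N_0$. Hence the expected approximation ratio is at least
$$
\frac{(1-o(1))N_0}{(2\epsilon+o(1))N_0}\ =\ \frac{1-o(1)}{2\epsilon+o(1)}\ >\ c
$$
once $\epsilon<\frac{1}{2c}$ and $n$ is large enough, contradicting the hypothesis. Therefore every deterministic non‑adaptive LCA achieving an $O(1)$‑approximation of minimum vertex cover on this distribution has $Q>\kappa=\Delta^{\Omega(\log\Delta/\log\log\Delta)}$, which proves the lemma.

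The genuine content is not in this lemma but in \cref{clm:large-fractional-cover}, which I would prove in the same way as \cref{clm:path-coupling,clm:tree-coupling}: for an edge $(u,v)$ with $u\in S_0$ and $v\in S_1$, the distribution of the degree sequence explored by the query tree is $o(1)$‑close whether the tree is rooted at $u$ or at $v$, hence $|y_u-y_v|=o(1)$, and together with $y_u+y_v\ge 1$ this forces $y_u\ge\frac12-o(1)$. The coupling starts cleanly here because $C_0$ and $C_1$ are both \emph{non‑leaf} clusters of $\mT_r$, so by \cref{clm:non-leaf-degree} they have the same blueprint‑degree $\bar{d}_\mB$ and by \cref{clm:same-outgoing-labels} the same outgoing label set — exactly what the label‑based coupling of \cref{def:label-coupling} needs — and the coupling then fails only when the common traversed label sequence distinguishes $C_0$ from $C_1$, whose probability is bounded by \cref{clm:distinguishing-path} precisely as in \cref{clm:path-coupling} (the dummy cluster $D$ still absorbs long walks since $d(C,D)=\Delta_r+1$ is unchanged). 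I expect the one place that needs fresh care to be the ``going back on the path'' subtlety of \cref{clm:path-coupling}: it must be re‑run without the main‑matching edges, but the same device — designating a single special outgoing edge at each step so that the label discrepancy is always confined to one edge, and allowing special edges into the critical subsequence — carries over unchanged. The only other point to handle, as noted above, is the conversion from ``constant success probability'' to ``expectation'': unlike for matching a bad vertex cover is not small, so one bounds $|\textnormal{output}|\le n$ and picks $\epsilon$ small enough that the displayed inequality still forces a contradiction.
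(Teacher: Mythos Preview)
Your proof is correct and follows essentially the same approach as the paper: lower-bound the expected output size by $\sum_{u\in S_0}y_u\ge(1-o(1))N_0$ via \cref{clm:large-fractional-cover}, upper-bound $\opt(G)$ by $|V(G)\setminus S_0|=2\epsilon N+2(N-N_0)$, and compare. Your treatment is in fact slightly more careful than the paper's in two places---you explicitly address the expectation-versus-constant-probability conversion (noting that a bad vertex cover can be as large as $n$, unlike a matching), and you make the parameter re-tuning $\epsilon<1/(2c)$ explicit---whereas the paper reuses the matching-section parameters and leaves these points implicit.
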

\begin{proof}
    We take an arbitrary constant $0 < c < 1$,
    and prove the lower bound for any deterministic algorithm that $c$-approximates the minimum vertex in expectation.
    
    Observe that the expected size of the vertex cover outputted by the algorithm is $\sum_u y_u$.
    Due to \cref{clm:large-fractional-cover}, it holds
    $$
    \sum_u y_u \geq \card{S_0} \left(\frac{1}{2} - o(1)\right)
    = N_0 (1 - o(1)).
    $$
    On the other hand, the vertices $V(G) \setminus S_0$ form a vertex cover. Therefore, the size of the minimum vertex cover is at most
    $$
    \card{V(G) \setminus S_0} = 2 \epsilon N + 2 (N - N_0).
    $$
    Putting the two together, the approximation ratio is at most
    \begin{equation*}
    \frac{2 \epsilon N + 2 (N - N_0)}{N_0 (1 - o(1))}
    = (1 + o(1))\left(2\epsilon + \frac{r+1}{\delta-(r+1)} \right)
    \leq c. \qedhere     
    \end{equation*}
\end{proof}

By applying Yao's min-max principle, we obtain the final lower bound.
\begin{theorem}
    Any non-adaptive LCA that given a graph $G$, $O(1)$-approximates the minimum vertex cover with constant probability, requires $\Delta^{\Omega(\log \Delta / \log \log \Delta)}$ queries.
\end{theorem}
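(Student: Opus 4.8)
The plan is to obtain the stated randomized lower bound from the preceding \emph{deterministic} lemma by a routine application of Yao's min-max principle, mirroring the proof of \cref{thm:main} for maximum matching. The deterministic lemma already pins down a hard input distribution---the construction of \cref{def:final-graph} with every main-matching edge deleted---on which no deterministic non-adaptive LCA making fewer than $\kappa = \Delta^{\Omega(\log\Delta/\log\log\Delta)}$ queries can $c$-approximate the minimum vertex cover in expectation. A randomized non-adaptive LCA that $O(1)$-approximates MVC with constant probability on \emph{every} graph does so, in particular, on a graph sampled from this distribution; fixing the best realization of its internal coins yields a deterministic LCA with the same query budget and, by averaging, the same success guarantee on the distribution, contradicting the lemma unless the query complexity is $\Delta^{\Omega(\log\Delta/\log\log\Delta)}$. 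So all the content lies in the deterministic lemma, which I would prove as follows.

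Fix the edge set and let $y_u$ be the probability---over random vertex IDs and random adjacency-list permutations---that $u$ lands in the output cover. Since the output is always a valid vertex cover, $\{y_u\}$ is a fractional vertex cover: $y_u + y_v \ge 1$ for each edge $(u,v)$. The crux is \cref{clm:large-fractional-cover}: $y_u \ge \tfrac12 - o(1)$ for every $u \in S_0 = C_0^\zero \cup C_0^\one$. To see this, take an edge $(u,v)$ with $u \in S_0$ and $v \in S_1 = C_1^\zero \cup C_1^\one$, and argue that the query tree explored from $u$ and from $v$ produce $o(1)$-close distributions on the sequence of observed degrees. This reuses the matching-section machinery essentially verbatim: \cref{clm:no-cycle} (the discovered subgraph is a tree w.h.p., since $\Delta = o(N_0)$ and each regular bipartite block behaves like an \ER{} graph, \cref{clm:edge-prob}), so the only observable is the degree sequence; and then the path/tree couplings \cref{clm:path-coupling,clm:tree-coupling}, whose combinatorial engine---that any label sequence distinguishing $C_0$ from $C_1$ must contain $r$ successively low labels, \cref{clm:distinguishing-path}---is a statement purely about the cluster tree $\mT_r$ and is unaffected by the removal of main-matching edges. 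Indistinguishability gives $\card{y_u - y_v} = o(1)$, which together with $y_u + y_v \ge 1$ yields $y_u \ge \tfrac12 - o(1)$.

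The lemma then follows by comparing the expected output size with the optimum. By linearity of expectation the expected cover size is $\sum_u y_u \ge \card{S_0}(\tfrac12 - o(1)) = N_0(1 - o(1))$ since $\card{S_0} = 2N_0$. On the other hand $V(G) \setminus S_0$ is a vertex cover (after deleting the main-matching edges, every surviving edge incident to a root cluster has its other endpoint outside $S_0$), so $\mathrm{opt} \le \card{V(G) \setminus S_0} = 2\epsilon N + 2(N - N_0)$. Plugging in $N \le N_0(1 + \tfrac{r+1}{\delta - (r+1)})$ (\cref{clm:value-of-N}) and the parameter settings $\epsilon = c/6$, $\delta \ge (\tfrac{3}{c} + 1)(r+1)$, $r = \log\Delta/\log\log\Delta$, the approximation ratio is at most $(1+o(1))\big(2\epsilon + \tfrac{r+1}{\delta-(r+1)}\big) \le c$, contradicting the assumed $c$-approximation.

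I expect the one genuinely delicate point to be exactly the one it was for matching: controlling \emph{moderately long} walks in the query tree, of length between $r$ and $\Theta(r\log\Delta)$, which reach far beyond the radius at which neighborhoods are isomorphic. Extremely long walks are neutralized by the dummy cluster $D$---each step enters $D$ with probability $\ge \tfrac12$, after which the two explorations couple perfectly---but the moderate range needs the refined accounting via distinguishing label sequences and the per-step bound on the number of low-labelled edges. Re-certifying that this argument still goes through after deleting the main-matching edges---in particular that \cref{clm:same-outgoing-labels} and the ``going back on the path'' bookkeeping inside \cref{clm:path-coupling} are unaffected---is where the care is required, though no new idea beyond the matching case seems to be needed.
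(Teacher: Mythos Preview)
Your proposal is correct and follows the paper's approach essentially verbatim: the theorem itself is obtained by Yao's min-max principle from the preceding deterministic lemma, and your sketch of that lemma---fractional vertex cover constraint $y_u+y_v\ge 1$, indistinguishability of $S_0$ and $S_1$ via the tree coupling to get $y_u\ge\tfrac12-o(1)$, then the size comparison with the parameter arithmetic---matches the paper's proof line for line. Your flagging of the re-certification step (that the path/tree coupling survives the deletion of main-matching edges) is apt; the paper simply asserts ``using a similar argument to \cref{clm:tree-coupling}'' without spelling it out, and if anything the removal of those edges slightly simplifies the special-edge bookkeeping in \cref{clm:path-coupling}.
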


\section{Maximal Independent Set} \label{sec:mis}
In this section, we prove the same lower bound on query complexity for the maximal independent set problem. We require that the algorithm always outputs an independent set, which is maximal with constant probability.

\begin{theorem} \label{thm:mis-lb}
    Any non-adaptive LCA that, given a graph $G$, computes a maximal independent set with constant probability, requires $\Delta^{\Omega(\log \Delta / \log \log \Delta)}$ queries.
\end{theorem}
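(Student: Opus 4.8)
The plan is to follow the same three–move template as the matching and vertex–cover lower bounds: fix a hard input distribution, show that a non‑adaptive LCA making $\Delta^{o(\log\Delta/\log\log\Delta)}$ queries cannot distinguish two symmetric groups of vertices, and then argue that the membership probabilities it is thereby forced to produce cannot come from any distribution over maximal independent sets. For the distribution I would start from the vertex–cover variant of \cref{def:blueprint} (main‑matching edges removed), with one extra modification of the dummy region described below. Writing $S_0 := C_0^\zero \cup C_0^\one$ and $S_1 := C_1^\zero \cup C_1^\one$, in this graph each of $S_0,S_1$ induces an independent set, the only edges between them are the $C_0$–$C_1$ bipartite subgraphs inside each tree, and $C_0^\zero,C_1^\zero$ are non‑leaf clusters of equal blueprint degree. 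Running the coupling of \cref{clm:tree-coupling} on vertex prompts, exactly as in \cref{clm:large-fractional-cover}, shows that the degree sequence observed when prompted with a vertex of $S_0$ is $o(1)$‑close in total variation to that observed when prompted with a vertex of $S_1$; together with the tree‑swap symmetry and the symmetry within each cluster this forces a single number $p$ with $y_u := \Pr[u\in I] = p\pm o(1)$ for all $u\in S_0\cup S_1$.

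Next I would extract linear constraints on the marginals $y_u$ from the two requirements on the output. Since $I$ is always an independent set, for every edge $(u,v)$ the events $u\in I$ and $v\in I$ are disjoint, so $y_u+y_v\le 1$; applied to a $C_0^\zero$–$C_1^\zero$ edge this already gives $p\le \tfrac12+o(1)$. Since $I$ is maximal — equivalently dominating — with probability at least a constant $1-\gamma$, taking expectations in the pointwise bound $\mathbb 1[u\text{ dominated by }I]\le \mathbb 1[u\in I]+\sum_{v\sim u}\mathbb 1[v\in I]$ yields $y_u+\sum_{v\sim u}y_v\ge 1-\gamma$ for every vertex $u$. The theorem then reduces to showing that these constraints, together with $p\le\tfrac12+o(1)$ on $S_0\cup S_1$, cannot all hold once $\gamma$ is a small enough constant.

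The main obstacle — and the reason the construction must be altered — is precisely this last step: as stated, the dummy cluster $D$ is adjacent to every non‑dummy vertex and has no internal edges, so $I=D$ is itself a valid maximal independent set, and every domination inequality is satisfied by giving $D$‑vertices an arbitrarily small marginal; moreover $D$ is singled out by its degree. I would neutralize this with the same kind of device used to kill long walks: rework the dummy region into a pair of clusters $D^\zero,D^\one$ joined to each other by a label‑$1$ matching (and each joined to every other cluster as $D$ was), padded — analogously to the ``special edges'' treated at the end of the proof of \cref{clm:path-coupling} — so that $D^\zero,D^\one$ are mutually adjacent, mutually indistinguishable, and share the same degree and outgoing–label profile as the non‑leaf tree clusters. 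This preserves the absorbing property the coupling of \cref{clm:path-coupling} relies on (two coupled walks that both enter the dummy region follow identical labels thereafter), it folds the dummy vertices into the single indistinguishability class with marginal $p$, and the $D^\zero$–$D^\one$ adjacency keeps $p\le\tfrac12+o(1)$. The distinguishing‑sequence analysis of \cref{clm:distinguishing-path} must be re‑checked against the new edges, exactly as the special edges were handled in \cref{clm:path-coupling}.

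Even with the dummy marginals pinned down this way, a $C_0^\zero$ vertex still has $\Theta(\Delta_r)$ neighbours in the dummy region, so maximality is not violated by a naive first‑moment count alone; the genuine content is that a non‑adaptive LCA with a sub‑$\kappa$ query tree cannot consistently reflect which of a vertex's neighbours lie in $I$. This is where I expect to need the full strength of the coupling (extended to moderate and long walks as in \cref{clm:path-coupling}): one should push the indistinguishability of the class $S_0\cup S_1\cup D^\zero\cup D^\one$ from "same observed degree sequence" to a statement strong enough that the output's restriction to a star of the $C_0^\zero$–$C_1^\zero$ bipartite graph behaves essentially like a product, at which point $p\le\tfrac12+o(1)$ together with the maximality inequality on the star centre and its $\delta$ leaves — whose remaining neighbours all lie in the colour‑$\ge 2$ subtrees, which shrink geometrically by $\delta$ and hence contain only $o(N_0)$ vertices and contribute $o(1)$ per vertex to the domination sums — becomes contradictory for small enough $\gamma$, as in \cref{lem:deterministic-lb}. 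I therefore expect the two hard points to be (i) choosing the padding so that the dummy clusters and the non‑leaf tree clusters are genuinely indistinguishable without disturbing the degree/label bookkeeping of \cref{sec:construction}, and (ii) upgrading the coupling from bounding the observed degree sequence to controlling the joint behaviour of the outputs along a star, which is the MIS analogue of the step that in \cref{clm:small-fractional-matching} followed for free from $I$ being a matching.
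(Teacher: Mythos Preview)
The paper takes a much cleaner and entirely different route: it passes to the \emph{line graph} $G'=L(G)$ of the matching instance of \cref{def:final-graph}. Independent sets in $G'$ are exactly matchings in $G$, and a maximal independent set of $G'$ is a maximal matching of $G$, hence a $\tfrac12$-approximate maximum matching. An MIS LCA on $G'$ is therefore in particular a constant-approximate matching LCA on $G$, and the argument of \cref{sec:matching-lb} applies once the coupling is adapted to line-graph queries. The paper does this by maintaining, for coupled line-graph vertices $e_i=(u_i,v_i)$ and $f_i=(x_i,y_i)$, the invariant $\deg(u_i)=\deg(x_i)$ and $\deg(v_i)=\deg(y_i)$, and applying the label-based coupling of \cref{def:label-coupling} separately on each endpoint. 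No modification of the dummy cluster is needed, and no new joint-distribution analysis arises: the step that plays the role of your point (ii) comes for free, because the $\delta$ misleading edges sharing a $C_1^\zero$ endpoint form a \emph{clique} in $L(G)$, and the independent-set constraint on that clique forces each to have marginal at most $1/\delta$, exactly as in \cref{clm:small-fractional-matching}.

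Your direct approach hits a genuine obstruction at (ii), not merely a ``hard point''. Even granting the product behaviour you aim for, the domination inequality at a star centre $v\in C_1^\zero$, or at any leaf $u\in C_0^\zero$, reads $p\cdot(1+\deg)\ge 1-\gamma$, which is trivially satisfied once $\deg=\Theta(\Delta)$; so $p\le\tfrac12+o(1)$ together with maximality yields no contradiction. Your claim that the ``remaining neighbours \ldots\ contribute $o(1)$ per vertex to the domination sums'' is not correct: after your modification the dummy clusters sit in the same indistinguishable class with marginal $\approx p$, and every $C_0^\zero$ vertex has $\Theta(\Delta_r)$ dummy neighbours, so their contribution to its domination sum is $\Theta(\Delta_r\cdot p)$, not $o(1)$. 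The structural reason the direct route stalls is that the indistinguishable vertices on each side of $S_0\cup S_1$ are pairwise non-adjacent, so the IS constraint never forces more than $2p\le 1$; what is needed is a large clique of indistinguishable objects, and that is precisely what the line graph supplies.
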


\begin{remark}
    The lower bound holds even for the almost-maximal independent set problem,
    i.e.\ where it is required that the output can be made maximal by adding at most $\epsilon n$ vertices.
\end{remark}

The argument follows closely the proof of \cref{thm:main}.
As such, we only describe the parts where they differ.

Similar to previous sections, we present an input distribution on which no deterministic non-adaptive LCA performs well (i.e.\ succeeds with constant probability). Then, the lower bound for (possibly) randomized non-adaptive LCAs follows from Yao's min-max principle.

To obtain the input distribution, we use the \emph{line graph} of the input for maximum matching approximation (\cref{sec:matching-lb}).
That is, we draw a graph $G = (V, E)$ from the input distribution for maximum matching.
Then, we construct a graph $G' = L(G)$ where each vertex corresponds to an edge in $E$, and two vertices are connected if the corresponding edges in $G$ share an endpoint.
Observe that any independent set in $G'$ corresponds to a matching in $G$.
In addition, a maximal independent set of $G'$ has the same size as the corresponding maximal matching in $G$, and hence provides a constant approximation of the maximum matching.

To prove the lower bound, it suffices to show that the vertices of $G'$ corresponding to significant ($C_0^\zero$--\,$C_0^\one$) and misleading ($C_0^\zero$--\,$C_1^\zero$) edges are indistinguishable.
This is argued by a step-by-step coupling of the query tree in the cases where the prompted edge is significant or misleading.
Therefore, we adapt the coupling of \cref{def:label-coupling} to fit the queries in the line graph.

The invariant we maintain through the coupling is that in any step $i$,
the edges $e_i$ and $f_i$ discovered in the two query trees have the same endpoint degrees.
That is, letting $e_i = (u_i, v_i)$ and $f_i = (x_i, y_i)$ where $v_i$ and $y_i$ are the endpoints previously not adjacent to any edges, it holds $\deg(u_i) = \deg(x_i)$ and $\deg(v_i) = \deg(y_i)$.
Then, assuming the invariant holds, the adjacent edges of $e_i$ and $f_i$ are coupled by simply using the label-based coupling of (vertex) neighbors (\cref{def:label-coupling}) on each endpoint.
More precisely, for each $e_{i+1}$ adjacent to $e_i$ through endpoint $u_i$, we use the label-based coupling of neighbors of $u_i$ and $x_i$ to obtain $f_{i+1}$ adjacent to $f_i$ through $x_i$.
Note that in this coupling, the edge $e_i$ is omitted from the adjacent edges of $u_i$, and $f_i$ is omitted from the adjacent edges of $x_i$.
Similarly, we couple the edges adjacent to $e_i$ and $f_i$ through endpoints $v_i$ and $y_i$ respectively.

Finally, it can be shown that this coupling maintains the invariant throughout the process with probability $1 - o(1)$, similar to \cref{clm:tree-coupling}. This implies that significant and misleading edges are indistinguishable, which in turn shows that the algorithm is unable to obtain a constant approximation.
\section{Applications of Non-Adaptive LCAs in MPC} \label{sec:mpc}
This section illustrates the application of non-adaptive LCAs to MPC algorithms.
We show that if for any small constant $\epsilon$, there exists a non-adaptive LCA for approximate minimum vertex cover with query complexity $\Delta^{(\log\Delta)^{1-\epsilon}}$,
then there is a sublinear-space MPC for $O(1)$-approximate maximum matching with round complexity $O\left((\log n)^{1/2 - \Omega(\epsilon)}\right)$,
which would be an improvement over the best known $\Ot(\sqrt{\log n})$-round algorithms of \cite{GhaffariU19,OnakArxiv18}. This approach is ruled out by our lower bound.

Throughout the section, we assume that the size of the maximum matching is at least linear in the number of vertices.
This is without loss of generality due to well-known \emph{vertex sparsification} techniques \cite{AssadiKL16,BehnezhadDH20}.
\begin{assumption}
    $\mu(G) = \Omega(n)$.
\end{assumption}

Here, we define the MPC model.
In the \emph{Massively Parallel Computation (MPC)} model,
an input of size $N$ is distributed among $M$ machines of local memory $S$.
During each round of computation, the machines can make arbitrary calculations on their local data, and then send messages to a number of other machines.
The only restriction is that the total size of the messages sent or received by a single machine cannot exceed $S$.
The efficiency of the algorithm is measured in the number of rounds.

For graph problems, the input size is $N = n + m$, and a popular choice of $S = n^\delta$ for a parameter $\delta \in (0, 1)$, i.e.\ the local memory of each machine is sublinear in the number of vertices.
This is known as the \emph{sublinear regime} (also \emph{fully scalable regime}), and is highly desirable as it is most suited for rapidly growing input sizes.

The main result of this section is formalized below.

\begin{theorem}\label{thm:MPC}
    For any $\epsilon > 0$, if there exists a non-adaptive LCA with query complexity \linebreak ${Q(\Delta) = \Delta^{(\log\Delta)^{1-\epsilon}}}$ and depth $D = \poly(\log n)$ that $O(1)$-approximates minimum vertex cover,
    then, for any constant parameter $0 < \delta < 1$, there exists an MPC algorithm that computes a constant approximation of the maximum matching in $(\log n)^{1/2 - \Omega(\epsilon)}$ rounds,
    using $O(n^\delta)$ local memory per machine and $O(nQ^2 + m)$ total space.
\end{theorem}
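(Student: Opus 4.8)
The plan is to compile the hypothesized non-adaptive minimum vertex cover (MVC) LCA into an MPC routine and run it inside a degree-reduction loop, in three stages. \emph{Stage~1 (degree reduction)}: starting from the input graph, whose maximum degree may be as large as $n-1$, repeatedly subsample edges so that $\log\Delta$ shrinks by a constant factor per phase, down to the threshold $\Delta^\star$, defined as the largest value with $Q(\Delta^\star)^2\le n^{\delta}$; in each phase the standing assumption $\mu(G)=\Omega(n)$, together with the standard MPC vertex/edge-sparsification toolkit, keeps $\mu=\Omega(n)$, and a matching found in the sparser graph is lifted back by resolving the $O(1)$ conflicts it creates per vertex. \emph{Stage~2 (run the LCA)}: on the resulting instance of maximum degree at most $\Delta^\star$, invoke the MVC-LCA from every vertex simultaneously to obtain an $O(1)$-approximate vertex cover $S$. \emph{Stage~3 (extract a matching)}: since $|S|=O(\mu(G))=O(n)$ and every edge meets $S$, keep $\poly(\log n)$ random incident edges at each vertex of $S$; this near-linear subgraph preserves $\mu$ up to a constant (a standard matching-sparsification fact), fits the machines, and a maximal matching of it — computable in $O(\log\log n)$ MPC rounds — is the output. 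Stages~2 and~3 keep the total space at $O(nQ^2+m)$.

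The crux is realizing Stage~2, and the per-phase LCA calls hidden inside Stage~1, in few MPC rounds. Because the LCA is non-adaptive, every vertex $u$ knows its entire query tree $T_u$ — of depth $D=\poly(\log n)$ with at most $Q=Q(\Delta)$ nodes — before seeing the graph, so the reads of the LCA can be gathered by graph exponentiation: in iteration $i$ vertex $u$ holds the part of $T_u$ realized inside its radius-$2^i$ ball together with the induced subgraph, and to extend it, it ships the prescribed continuation instructions to each already-located depth-$2^i$ node, which resolves them from its own radius-$2^i$ ball. The memory constraint is that such a ball, of size up to $\Delta^{2^i}$, must fit in $n^\delta$; once balls saturate this bound the doubling stalls and further exploration proceeds essentially hop by hop, so the number of rounds needed to realize all $Q$ reads is $\Theta(r^\star)+O(\log\log n)$, where $r^\star:=\log_{\Delta^\star}Q(\Delta^\star)$ is the effective exploration radius at the critical degree. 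Since $Q(\Delta^\star)=n^{\Theta(\delta)}$ we have $r^\star=\Theta\big((\log\Delta^\star)^{1-\epsilon}\big)$, and $(\log\Delta^\star)^{2-\epsilon}=\Theta(\delta\log n)$ gives $\log\Delta^\star=\Theta\big((\delta\log n)^{1/(2-\epsilon)}\big)$, hence $r^\star=\Theta\big((\log n)^{(1-\epsilon)/(2-\epsilon)}\big)=(\log n)^{1/2-\Omega(\epsilon)}$ via $\tfrac{1-\epsilon}{2-\epsilon}=\tfrac{1}{2}-\tfrac{\epsilon}{2(2-\epsilon)}$; summing the analogous per-phase costs over the $O(\log\log n)$ degree-reduction phases telescopes to the same bound, as $1/\log\Delta$ is dominated by its final value $1/\log\Delta^\star$.

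I expect the main obstacle to be Stage~2 under the joint constraints: arguing that the memory-bounded graph-exponentiation faithfully realizes every one of the LCA's $Q(\Delta)$ reads within $O(r^\star)$ rounds and $O(nQ^2+m)$ total space, and — more delicate — that a black-box non-adaptive MVC-LCA can be invoked usefully on the intermediate high-degree graphs of Stage~1, where $Q(\Delta_i)\gg n^\delta$ so the query tree does not fit at all, yet one still needs a cover good enough to drive the degree reduction. A secondary point is that the Stage~1 subsampling must provably preserve $\mu(G)=\Omega(n)$ and admit a cheap lift-back of matchings (this is where the $\mu(G)=\Omega(n)$ hypothesis, itself without loss of generality by vertex sparsification, is spent), without inflating the total space past $O(nQ^2+m)$ or the round count past $(\log n)^{1/2-\Omega(\epsilon)}$.
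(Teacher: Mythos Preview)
Your Stage~1 is where the real gap lies, and you already flag it: you never say how to reduce the degree when $Q(\Delta_i)\gg n^\delta$. The paper's answer is \emph{not} to reduce the degree first and then run the LCA once; instead, in every iteration it samples edges of the current graph $G_i$ with probability $p_i$ chosen so that the sample $H_i$ has maximum degree $\approx p_i\Delta_i$ with $Q(p_i\Delta_i)^2\le n^\delta$, runs the LCA on $H_i$ (which now fits), and removes the resulting vertex cover $A_i$ from $G_i$. The key lemma you are missing is that after deleting any vertex cover of a $p$-sample, at most $n/p$ edges of the \emph{original} graph survive with high probability (a one-line union bound over vertex sets), so subsequently deleting the $\le 2n/\log n$ vertices of degree above $(\log n)/p_i$ caps the new maximum degree at $(\log n)/p_i$. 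This is what drives $\Delta_i$ down by a factor of $2^{(\log n)^{1/(2-\epsilon)}}$ per iteration, giving $(\log n)^{1/2-\Omega(\epsilon)}$ iterations total; your ``subsample so $\log\Delta$ halves and lift back a matching by resolving $O(1)$ conflicts'' does not do this.

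Your LCA-implementation analysis is also off, though it lands on the right final number by coincidence. You treat graph exponentiation as collecting radius-$2^i$ balls of size $\Delta^{2^i}$, which stall once they exceed $n^\delta$ and then crawl hop by hop for $r^\star$ rounds. But because the LCA is non-adaptive with a fixed query tree of depth $D=\poly(\log n)$ and at most $Q$ nodes, each machine $M_v$ can collect, for every node $u$ of the query tree, the depth-$2^j$ truncation $T_u^{2^j}(v)$ by doubling $j$; this uses $O(Q^2)\le n^\delta$ space per machine and finishes in $O(\log D)=O(\log\log n)$ rounds. So one LCA invocation costs only $O(\log\log n)$ rounds, and the $(\log n)^{1/2-\Omega(\epsilon)}$ in the theorem comes entirely from the number of degree-reduction iterations, not from a single expensive LCA call. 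Finally, the paper avoids your Stage~3 altogether by assuming (see the remark after the theorem) that the LCA also outputs a matching that certifies its cover, so the matching is accumulated across iterations rather than extracted at the end.
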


\begin{remark}
In proving \cref{thm:MPC}, we assume that the LCA also provides a matching that has size within a constant of its output vertex cover. This matching serves as the certificate that the vertex cover is indeed constant approximate. To our knowledge all existing MVC LCAs satisfy this:  they either come with a fractional matching that can be easily rounded or directly construct a maximal matching and pick the endpoints of its edges as the vertex cover.
\end{remark}

First, we describe the MPC algorithm on a high level (see \Cref{alg:mpc} for the pseudo-code). Then, we prove the correctness (\cref{clm:mpc-apx}), provide implementation details, and analyze the round complexity (\cref{clm:mpc-implementation}).

In each iteration, the algorithm computes a vertex set $A$ and a matching $M$ which matches a constant fraction of $A$,
then it deletes $A \cup V(M)$ along with some extra vertices that are high-degree, making the graph significantly sparser.
Once the graph is sparse enough, say $\Delta = \poly(\log n)$, the algorithm computes a maximal matching using known algorithms for low-degree graphs in $O(\log \Delta + \poly (\log \log n))$ rounds \cite{BarenboimEPS12}.
The algorithm outputs the union of these matchings, which match a constant fraction of the vertices in $G$ if we ignore the extra deleted vertices.
Therefore, by upper-bounding the number of the extra vertices, we can prove the algorithm obtains a constant approximation.

To be more precise, in each iteration, a subgraph of $H \subseteq G$ is obtained by sampling each edge of $G$ independently with probability $p$.
Then, the LCA is invoked for every vertex in $H$ to obtain a vertex cover $A$ and a matching $M \subseteq E(H)$ which matches a constant fraction of $A$. 
The probability $p$ is chosen small enough such that the query tree of the LCA for a single vertex fits into one machine and can be collected efficiently, i.e.\ $Q(p\Delta)^2 = O(n^\delta)$ where $p\Delta$ is (roughly) the maximum degree of $H$.
Finally, the algorithm deletes all the vertices of $A$, and afterwards deletes any vertex of degree more than $(\log n) / p$, hence significantly decreasing the maximum degree.
It can be shown that deleting the vertex cover of $H$ leaves at most $n/p$ edges.
Therefore, the number of extra deleted vertices in each iteration is bounded by $O(n/\log n)$, which is fine since the number of iterations is $o(\log n)$.

 \begin{algorithm}
     \caption{An MPC algorithm for constant-approximation of maximum matching, given a non-adaptive LCA for minimum vertex cover}
     \label{alg:mpc}
     \textbf{Input:} Graph $G$ with $n$ vertices
     
     Let $G_1 \gets G$, and $i \gets 1$

     \While{$G_i$ is non-empty \label{step:while}}{
        $\Delta_i \gets $ maximum degree in $G_i$

        \uIf{$\Delta_i \leq \log^2 n$ \label{step:terminating-condition}}{
            Compute a maximal matching $M_i$ of $G_i$ in $O(\log \Delta_i + \poly(\log \log n))$ rounds
            
            Let $A_i \gets V(M_i)$

            \Return $\bigcup_{j\leq i} M_j$
        }
        \Else{
            Let $p_i$ be such that $Q(10p_i\Delta_i)^2 \leq n^\delta$, i.e.\  $p_i  \gets \frac{2^{((\delta/2) \log n)^{1/(2 - \epsilon)}}}{10 \Delta_i}$

            Let $H_i$ be a subgraph of $G_i$, where each edge is sampled with probability $p_i$

            Run the non-adaptive LCA from every edge to obtain a vertex cover $A_i$ of $H_i$ and a matching $M_i \subseteq E(H_i)$. \label{step:lca}

            $G_i' \gets G_i - (A_i \cup V(M_i))$
            
            Let $D_i \gets \{ u \in V(G_i') \mid \deg_{G_i'}(u) > (\log n)/p\}$

            $G_{i+1} \gets G_i' - D_i$ \label{step:deleting-high-degree}

            $i \gets i + 1$
        }
     }
 \end{algorithm}

We let $G_i$ be the graph left at the beginning of the $i$-th iteration,
$\Delta_i$ be the maximum degree of $G_i$,
$p_i$ be the sampling probability,
$H_i$ be the sampled subgraph,
$A_i$ be the vertex cover of $H_i$, and
$M_i$ be the computed matching.
First, we upper-bound the number of iterations.

\begin{claim} \label{clm:number-of-iterations}
    There are $O\left( (\log n)^{1/2 - \Omega(\epsilon)} \right)$ iterations of the while-loop (Step~\ref{step:while}).
\end{claim}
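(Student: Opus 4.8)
The plan is to control the iteration count purely through the rate at which the maximum degree $\Delta_i$ of $G_i$ shrinks, which turns out to be a \emph{deterministic} geometric decay. The key observation is that one pass through the branch executed when $\Delta_i > \log^2 n$ forces $\Delta_{i+1} \le (\log n)/p_i$: indeed $G_{i+1} = G_i' - D_i$ is the induced subgraph of $G_i'$ obtained by deleting $D_i = \{u \in V(G_i') : \deg_{G_i'}(u) > (\log n)/p_i\}$, and removing vertices only lowers degrees, so every vertex surviving into $G_{i+1}$ has degree at most $(\log n)/p_i$ there. (Note that $A_i$ being a vertex cover of the \emph{sampled} subgraph $H_i$ does not by itself bound degrees in $G_i' = G_i - (A_i \cup V(M_i))$ — that cap comes solely from removing $D_i$; what $A_i$ buys, via $|D_i|$ being small, is needed only for the approximation guarantee in \cref{clm:mpc-apx}, not here.) Substituting $p_i = 2^{((\delta/2)\log n)^{1/(2-\epsilon)}}/(10\Delta_i)$ gives
\[
\Delta_{i+1} \;\le\; \frac{\log n}{p_i} \;=\; \rho\,\Delta_i, \qquad \rho := \frac{10\log n}{2^{((\delta/2)\log n)^{1/(2-\epsilon)}}}.
\]
Since $\tfrac{1}{2-\epsilon} > \tfrac12$, the exponent $(\log n)^{1/(2-\epsilon)}$ dominates $\log\log n$, so $\rho = 2^{-\Theta((\log n)^{1/(2-\epsilon)})} = o(1)$ and $\ln(1/\rho) = \Theta((\log n)^{1/(2-\epsilon)})$.

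Next I would iterate this bound (it applies in every iteration before the loop terminates, as those are exactly the iterations with $\Delta_i > \log^2 n$): from $\Delta_1 \le n$ we get $\Delta_i \le n\,\rho^{\,i-1}$, so the terminating condition $\Delta_i \le \log^2 n$ is reached after at most
\[
1 + \left\lceil \frac{\ln n}{\ln(1/\rho)} \right\rceil \;=\; O\!\left(\frac{\log n}{(\log n)^{1/(2-\epsilon)}}\right) \;=\; O\!\left((\log n)^{\,1-\frac{1}{2-\epsilon}}\right)
\]
iterations, the extra $+1$ accounting for the final iteration that computes the maximal matching on the low-degree graph. A one-line computation converts the exponent into the claimed form: $1-\tfrac{1}{2-\epsilon} = \tfrac{1-\epsilon}{2-\epsilon} = \tfrac12 - \tfrac{\epsilon}{2(2-\epsilon)}$, and $\tfrac{\epsilon}{2(2-\epsilon)} \ge \tfrac{\epsilon}{4} = \Omega(\epsilon)$ for $\epsilon \in (0,1)$, so the number of iterations is $O\!\left((\log n)^{1/2-\Omega(\epsilon)}\right)$.

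The only point that needs a little care — and it is bookkeeping rather than a genuine obstacle — is the boundary where the formula $p_i = 2^{((\delta/2)\log n)^{1/(2-\epsilon)}}/(10\Delta_i)$ formally exceeds $1$; this can occur while $\Delta_i$ is still above the threshold $\log^2 n$, because the numerator is already $\omega(\log^2 n)$. In that regime I would simply clamp $p_i := 1$ (so $H_i = G_i$), which still respects the per-machine memory bound since $Q(10\Delta_i)^2 \le Q\big(2^{((\delta/2)\log n)^{1/(2-\epsilon)}}\big)^2 = n^{\delta}$ and makes $A_i$ a vertex cover of all of $G_i$; then $G_i'$ is edgeless, $D_i = \emptyset$, and the loop terminates within one more iteration, absorbing only an additive constant into the count. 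Apart from this, the statement is essentially immediate from the geometric decay $\Delta_{i+1} \le \rho\,\Delta_i$ with $\rho = 2^{-\Theta((\log n)^{1/(2-\epsilon)})}$.
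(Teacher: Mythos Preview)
Your proof is correct and follows essentially the same route as the paper: both argue that deleting $D_i$ forces $\Delta_{i+1}\le(\log n)/p_i=\rho\,\Delta_i$ with $\rho=2^{-\Theta((\log n)^{1/(2-\epsilon)})}$, and then iterate this geometric decay starting from $\Delta_1\le n$. Your treatment is in fact slightly more careful than the paper's --- you spell out the exponent arithmetic $1-\tfrac{1}{2-\epsilon}=\tfrac12-\tfrac{\epsilon}{2(2-\epsilon)}$ explicitly and handle the boundary case $p_i>1$, which the paper glosses over.
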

\begin{proof}
    Note the choice of the sampling probability in iteration $i$:
    $$
    p_i  = \frac{2^{((\delta/2) \log n)^{1/(2 - \epsilon)}}}{10 \Delta_i}
    = \frac{2^{(\log n)^{1/2 + O(\epsilon)}}}{\Delta_i}.
    $$
    Since we delete the vertices of degree more than $(\log n)/p$ (in Step~ \ref{step:deleting-high-degree}),
    it holds
    $$
    \Delta_{i+1} \leq \frac{\log n}{p_i} \leq \frac{(\log n)\Delta_i}{2^{(\log n)^{1/2 + O(\epsilon)}}} \leq  \frac{\Delta_i}{2^{(\log n)^{1/2 + O(\epsilon)}}}.
    $$
    That is, the maximum degree decreases by a factor of $2^{(\log n)^{1/2 + O(\epsilon)}}$ every iteration.
    Therefore, by induction we get
    $$
    \Delta^{i+1} \leq \frac{\Delta_1}{2^{i\cdot (\log n)^{1/2 + O(\epsilon)}}}
    \leq \frac{n}{2^{i\cdot (\log n)^{1/2 + O(\epsilon)}}}
    = 2^{\log n - i\cdot (\log n)^{1/2 + O(\epsilon)}}.
    $$
    As a result, after at most $\displaystyle \frac{\log n}{(\log n)^{1/2 + O(\epsilon)}} = (\log n)^{1/2 - \Omega(\epsilon)}$ iterations,
    the maximum degree becomes small enough for the terminating condition to hold (Step~\ref{step:terminating-condition}).
\end{proof}

Now, we present a useful lemma that characterizes the number of remaining edges after removing the vertex cover of the sampled subgraph.

\begin{lemma} \label{clm:average-degree-after-maximal}
    Given a graph $G= (V, E)$, and $p \in [0, 1]$, let $H \subseteq G$ be a subgraph such that each edge is included independently with probability $p$,
    and $A$ be a vertex cover of $H$.
    Then, $\card{E[V(G) \setminus A]}$, the number of edges not adjacent to $A$, is at most $n/p$ with high probability.
\end{lemma}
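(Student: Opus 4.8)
The plan is to prove a slightly stronger, and crucially $A$-independent, high-probability statement and then specialize it. The structural observation driving this is: if $A$ is \emph{any} vertex cover of $H$, then $V(G) \setminus A$ is an independent set in $H$, so the induced subgraph $G[V(G) \setminus A]$ contains no edge of $H$. Hence it suffices to show that, with high probability over the choice of $H$, \emph{every} vertex set $U \subseteq V(G)$ with $|E(G[U])| > n/p$ contains at least one edge of $H$. Specializing this to $U = V(G) \setminus A$ then gives $|E(G[V(G) \setminus A])| \leq n/p$ simultaneously for all vertex covers $A$ of $H$, which is exactly the claim.

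To establish the stronger statement I would first fix one set $U$ with $|E(G[U])| > n/p$ and bound the probability that $H$ avoids all of its edges. Since each edge of $G$ is placed in $H$ independently with probability $p$, this probability is $(1-p)^{|E(G[U])|} \leq (1-p)^{n/p} \leq e^{-n}$, using $1 - p \leq e^{-p}$. Then I would take a union bound over the at most $2^n$ choices of $U \subseteq V(G)$: the event that some such $U$ is entirely avoided by $H$ has probability at most $2^n e^{-n} = (2/e)^n$, which is $o(1)$ (indeed $e^{-\Omega(n)}$). Taking complements finishes the proof.

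I do not expect a genuine obstacle here. The one point that needs care — and the reason for phrasing the good event as a statement about $H$ alone, quantified over all $U$, rather than about the particular cover $A$ returned — is that $A$ may be chosen adversarially as a function of $H$; the union bound over all vertex subsets sidesteps this dependence cleanly. The corner cases where $n/p$ exceeds $\binom{n}{2}$ (making the bound trivial) or $p = 0$ (degenerate) require no special handling.
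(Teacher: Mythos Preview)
Your proposal is correct and essentially identical to the paper's own proof: the same structural observation that $V(G)\setminus A$ is independent in $H$, the same per-set bound $(1-p)^{n/p}\le e^{-n}$, and the same union bound over all $2^n$ subsets giving failure probability $(2/e)^n$. Your explicit remark about sidestepping the adversarial dependence of $A$ on $H$ via the union bound is exactly the point the paper relies on as well.
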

\begin{proof}
    Fix a vertex set $B \subseteq V$ with $\card{E[B]} \geq n/p$.
    We upper-bound the probability of $V(G) \setminus A$ being equal to $B$ by $e^{-n}$, and take the union bound over all possible $B$.
    Here, the probability is over the randomness of $H$.
    Since $A$ is a vertex cover of $H$,
    the vertices $V(G) \setminus A$ form an independent set in $H$.
    Therefore, for $B = V(G) \setminus A$ to happen,
    no edge inside $B$ should be sampled.
    Thus,
    \begin{align*}
        \Pr(V(G)\setminus A = B)
        &\leq \Pr(E[B] \cap H = \emptyset) \\
        &= (1 - p)^{\card{E[B]}} \\
        &\leq (1-p)^{n/p} \\
        &\leq e^{-n}.
    \end{align*}
    Taking the union bound over all vertex sets $B$ with $E[B] \geq n/p$, we get
    \begin{equation*}
    \Pr(\card{E[V(G) \setminus A]} \geq n/p)
    \leq 2^n \cdot e^{-n} 
    = \left(\frac{2}{e}\right)^n. \qedhere
    \end{equation*}
\end{proof}

The following claim is essential to the analysis of the approximation ratio.

\begin{claim} \label{clm:number-of-deleted-vertices}
    The total number of extra vertices removed across iterations (in Step~\ref{step:deleting-high-degree}) is $o(n)$, with high probability.
\end{claim}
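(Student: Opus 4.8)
The plan is to bound the number of extra vertices deleted in each iteration separately, and then sum over the $O\!\left((\log n)^{1/2-\Omega(\epsilon)}\right)$ iterations guaranteed by \cref{clm:number-of-iterations}. Fix an iteration $i$ and condition on the graph $G_i$ produced by the previous iterations; the randomness of the sampled subgraph $H_i$ is then fresh, so \cref{clm:average-degree-after-maximal} (which is stated for a deterministic input graph) applies with input graph $G_i$, subgraph $H_i$, parameter $p_i$, and the vertex cover $A_i$ of $H_i$ returned in Step~\ref{step:lca}. It yields that $\card{E(G_i - A_i)} \leq n/p_i$ with high probability. Since $G_i' = G_i - (A_i \cup V(M_i))$ is a subgraph of $G_i - A_i$, the same bound holds for $\card{E(G_i')}$.

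Next I would apply a handshake argument inside $G_i'$. Every vertex of $D_i$ has degree more than $(\log n)/p_i$ in $G_i'$, so summing degrees gives
\[
\card{D_i}\cdot\frac{\log n}{p_i} \;<\; \sum_{u \in V(G_i')}\deg_{G_i'}(u) \;=\; 2\,\card{E(G_i')} \;\leq\; \frac{2n}{p_i},
\]
hence $\card{D_i} < 2n/\log n$. (The vertices of $A_i \cup V(M_i)$ are matched by $M_i$ or charged to the fractional matching underlying $A_i$, so they are not ``extra''; the extra deletions in iteration $i$ are exactly $D_i$.) Thus, on the high-probability event of the previous paragraph, at most $2n/\log n$ extra vertices are removed in iteration $i$.

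Finally I would union bound. The failure probability of \cref{clm:average-degree-after-maximal} in a single iteration is at most $(2/e)^n$, so union-bounding over the $O\!\left((\log n)^{1/2-\Omega(\epsilon)}\right) = o(\log n)$ iterations, the per-iteration bound $\card{D_i} \leq 2n/\log n$ holds simultaneously for all $i$ with probability $1 - 2^{-\Omega(n)}$. On this event the total number of extra vertices removed is at most
\[
O\!\left((\log n)^{1/2-\Omega(\epsilon)}\right)\cdot\frac{2n}{\log n} \;=\; O\!\left(\frac{n}{(\log n)^{1/2+\Omega(\epsilon)}}\right) \;=\; o(n),
\]
which is the claim. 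I do not expect a genuine obstacle here: the only points requiring care are that the randomness of $H_i$ must be treated as independent of $G_i$ so that \cref{clm:average-degree-after-maximal} can be invoked conditionally, and that the iteration count from \cref{clm:number-of-iterations} is $o(\log n)$ so that the $\Theta(n/\log n)$ per-iteration loss sums to $o(n)$; the rest is a routine handshake plus union bound.
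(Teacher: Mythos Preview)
Your proposal is correct and follows essentially the same approach as the paper: bound the edges remaining after removing the vertex cover via \cref{clm:average-degree-after-maximal}, apply a handshake/averaging argument to cap $\card{D_i}$ by $2n/\log n$, and sum over the $o(\log n)$ iterations from \cref{clm:number-of-iterations}. Your write-up is in fact more careful than the paper's---you make explicit the conditioning on $G_i$ so that the fresh randomness of $H_i$ legitimizes the use of \cref{clm:average-degree-after-maximal}, and you spell out the union bound across iterations---but the argument is the same.
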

\begin{proof}
    We prove that the number of extra vertices removed in each iteration is at most $\frac{2n}{\log n}$. Then the claim follows since the number of iterations is at $o(\log n)$ by \cref{clm:number-of-iterations}.
    The number of extra removed vertices can be bounded by an averaging 
    argument.
    With high probability, there are at most $n/p_i$ edges of $G_i$ remaining we remove the vertices $A_i \cup V(M_i)$.
    Therefore, there at most $\frac{2n}{\log n}$ vertices have more than $(\log n)/p_i$ edges.
\end{proof}

\begin{claim} \label{clm:mpc-apx}
    \Cref{alg:mpc} computes a constant approximation of the maximum matching, with high probability.
\end{claim}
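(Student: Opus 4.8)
The plan is to show that the output $M := \bigcup_j M_j$ is a valid matching of $G$ of size $\Omega(\mu(G))$, using the structural facts established above together with the assumption $\mu(G) = \Omega(n)$; the constant approximation follows immediately.

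First I would check that $M$ is indeed a matching. Since $G_{i+1}$ is obtained from $G_i$ by (among other deletions) removing the vertices of $V(M_i)$, an easy induction gives $V(G_j) \cap V(M_i) = \emptyset$ for every $j > i$. As $M_j \subseteq E(G_j)$, the edge sets $M_1, M_2, \dots$ are pairwise vertex-disjoint, and each $M_i$ is itself a matching (the terminal one is a maximal matching of $G_i$, the others are matchings returned by the LCA). Hence $M$ is a matching, and since $M_i \subseteq E(H_i) \subseteq E(G_i) \subseteq E(G)$ we have $M \subseteq E(G)$.

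For the size lower bound I would fix a maximum matching $M^\star$ of $G$, let $t$ be the terminal iteration, and split $M^\star$ into two parts according to whether each edge survives in $G_t$. If $e \in M^\star$ still lies in $G_t$, then because $M_t$ is a \emph{maximal} matching of $G_t$, at least one endpoint of $e$ is in $V(M_t)$; these endpoints are distinct as $M^\star$ is a matching, so this part has size at most $|V(M_t)| = 2|M_t|$. If $e \in M^\star$ does not survive in $G_t$, then at least one of its endpoints was deleted in some earlier iteration $i < t$, i.e.\ it lies in $A_i \cup V(M_i) \cup D_i$, so this part has size at most $\sum_i \bigl(|A_i| + |V(M_i)|\bigr) + \sum_i |D_i|$. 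The certificate property of the LCA (the returned matching has size within a constant factor of the returned cover, per the remark preceding the algorithm) gives $|A_i| = O(|M_i|)$ for the non-terminal iterations, and $|A_t| = |V(M_t)| = O(|M_t|)$; hence $\sum_i |A_i| + \sum_i |V(M_i)| = O\!\bigl(\sum_i |M_i|\bigr) = O(|M|)$. By \cref{clm:number-of-iterations,clm:average-degree-after-maximal,clm:number-of-deleted-vertices}, $\sum_i |D_i| = o(n)$ with high probability. Combining the two parts,
\[
\mu(G) = |M^\star| \;\le\; 2|M_t| + O(|M|) + o(n) \;=\; O(|M|) + o(\mu(G)),
\]
where the last equality uses $\mu(G) = \Omega(n)$. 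Rearranging yields $|M| = \Omega(\mu(G))$, as desired.

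The disjointness induction and the arithmetic bookkeeping are routine; the one step that needs care is the dichotomy on the edges of $M^\star$, and in particular the observation that the $o(n)$ slack coming from the extra deleted vertices $D_i$ must be charged against $\mu(G)$ rather than $n$ — which is exactly where the assumption $\mu(G) = \Omega(n)$ (justified by vertex sparsification) is used. All the quantitative work has already been carried out in \cref{clm:number-of-iterations,clm:average-degree-after-maximal,clm:number-of-deleted-vertices}.
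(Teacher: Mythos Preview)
Your proof is correct and takes essentially the same approach as the paper: both use the certificate property $|A_i| = O(|M_i|)$, the bound $\sum_i |D_i| = o(n)$ from \cref{clm:number-of-deleted-vertices}, and the assumption $\mu(G) = \Omega(n)$ to absorb that slack. Your decomposition of a fixed $M^\star$ into edges surviving or not surviving to $G_t$ is just the dual framing of the paper's direct chain $|M| \geq \tfrac{c}{6}\bigl|\bigcup_j(A_j \cup V(M_j))\bigr| \geq \tfrac{c}{3}\mu(G \setminus D) \geq \tfrac{c}{3}(1-o(1))\mu(G)$; if anything, your explicit treatment of the terminal iteration via maximality of $M_t$ is slightly more careful than the paper's, which tacitly identifies $V(G)\setminus D$ with $\bigcup_j(A_j\cup V(M_j))$ and so elides the vertices left isolated in $G_t$.
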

\begin{proof}
    Assume that each matching $M_i$ covers a $c$-fraction of the vertices in $A_i$, and thus
    $$
    \card{M_i} \geq \frac{c}{2}\card{A_i}.
    $$
    On the other hand, since $A_i$ is a vertex cover for $H_i$, and $M_i$ is a matching in it, we have
    $$
    \card{M_i} \leq \card{A_i}.
    $$
    Putting the two together, we can conclude:
    $$
    \frac{\card{M_i}}{\card{A_i \cup V(M_i)}}
    \geq \frac{\card{M_i}}{\card{A_i} + 2\card{M_i}}
    \geq \frac{(c/2) \card{A_i}}{\card{A_i} + 2\card{A_i}}
    \geq \frac{c}{6}.
    $$
    
    Now, let $A := \bigcup_{j \leq i} M_j$, $M := \bigcup_{j \leq i} M_j$, and $D := \bigcup_{j \leq i} D_j$.
    Observe that 
    $$V(G) \setminus D = \bigcup_{j\leq i}(A_j \cup V(M_j)).$$
    Therefore, the analysis in the previous paragraph implies
    $$
    \card{M} \geq \frac{c}{6}\card{V(G) \setminus D} \geq \frac{c}{3}\mu(G\setminus D).
    $$
    In addition, by \cref{clm:number-of-deleted-vertices}, we get
    $$
    \mu(G\setminus D) \geq \mu(G) - \card{D} = \mu(G) - o(n) = (1 - o(1))\mu(G).
    $$
    Combining the two, we can conclude that the approximation ratio is $\frac{c}{3} - o(1)$, which concludes the proof.
\end{proof}

\begin{claim} \label{clm:mpc-implementation}
    \Cref{alg:mpc} can be implemented in $(\log n)^{1/2 - \Omega(\epsilon)}$ rounds of the MPC model, using $S = n^\delta$ local space per machine, and $O(nQ^2 + m)$ total space, with high probability.
\end{claim}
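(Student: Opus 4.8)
The plan is to show that a single iteration of the while loop (Step~\ref{step:while}) runs in $\poly(\log\log n)$ rounds using $O(n^\delta)$ local memory, and then to combine this with the bound of $O((\log n)^{1/2-\Omega(\epsilon)})$ on the number of iterations (\cref{clm:number-of-iterations}). Since $\poly(\log\log n) = (\log n)^{o(1)}$, the product of the two is still $(\log n)^{1/2-\Omega(\epsilon)}$.

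Most of the work inside an iteration is done by routine $O(1)$-round MPC primitives: computing $\Delta_i$, sampling $H_i$ (each surviving edge keeps itself independently with probability $p_i$ using private randomness), forming and deleting $A_i\cup V(M_i)$, and computing and deleting the high-degree set $D_i$. A Chernoff bound over the edges of $G_i$ shows that $H_i$ has maximum degree at most $10p_i\Delta_i$ with high probability (using $p_i\Delta_i \gg \log n$), so we may always run the LCA with degree parameter $10p_i\Delta_i$, for which its query complexity is $Q = Q(10p_i\Delta_i)$ with $Q^2 \le n^\delta$ by the choice of $p_i$. In the terminating iteration $\Delta_i \le \log^2 n$, and the maximal matching is computed in $O(\log\Delta_i + \poly(\log\log n)) = \poly(\log\log n)$ rounds by the low-degree MPC algorithm of \cite{BarenboimEPS12}.

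The crux is Step~\ref{step:lca}: evaluating the non-adaptive LCA from all roots of $H_i$ simultaneously. Fix the LCA's randomness (shared across machines) so that every root uses one common query tree $T$ with $Q+1$ nodes and depth $D = \poly(\log n)$. Regard node $i{+}1$ of $T$ as a child of node $a_i$, and attach to it the fixed function $f_{i+1}\colon V(H_i) \to V(H_i)\cup\{\bot\}$ sending a vertex $x$ to the $b_i$-th entry of its (fixed) adjacency list. By the definition of the exploration, for any root $v$ the vertex seen at node $c$ is $\mathrm{val}_v(c) = \bigl(f_c \circ f_{\mathrm{par}(c)} \circ \cdots\bigr)(v)$, the composition of the $\le D$ functions along the root-to-$c$ path applied to $v$ (for an edge-prompted query tree there are two roots, handled identically). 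Each $f_c$ is a table on $V(H_i)$ computable in $O(1)$ rounds once adjacency lists are sorted, composition of two such tables is again an $O(1)$-round relational join, and composition is associative; hence by pointer jumping on $T$ we can compute, in $O(\log D) = O(\log\log n)$ rounds, for every node $c$ the table $F_c$ of the full root-to-$c$ composition. After that we read off $\mathrm{val}_v(c) = F_c(v)$ and $\deg(\mathrm{val}_v(c))$ for every root $v$ and node $c$, and each $v$ locally evaluates the LCA's decision (for vertex cover, its membership in $A_i$; the matching $M_i$ is read off from the fractional matching the LCA provides alongside its vertex cover and rounded in $O(\log\log n)$ rounds). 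The $F_c$ tables occupy $O(nQ)$ cells spread over machines of size $n^\delta$, the per-root read-off is $O(nQ)$, and together with the $O(m)$ needed to store $G$ the total space is $O(nQ^2+m)$.

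Combining the pieces, each iteration is $\poly(\log\log n)$ rounds with $O(n^\delta)$ local and $O(nQ^2+m)$ total space, and there are $O((\log n)^{1/2-\Omega(\epsilon)})$ iterations, which yields the claimed bounds with high probability. The one step that is not routine --- and the main obstacle --- is Step~\ref{step:lca}: a level-by-level black-box simulation of the depth-$D$ query tree would cost $D = \poly(\log n)$ rounds and push the running time past $(\log n)^{1/2}$, defeating the purpose. What rescues us is precisely non-adaptivity: the query tree $T$ is a fixed object independent of the graph, so evaluating it is a depth-$D$ composition of graph-neighbor-lookup maps, and such compositions collapse from $D$ rounds to $O(\log D)$ rounds by pointer jumping; the companion point to verify is that each intermediate table and each composition fits within the $n^\delta$ memory budget, which is exactly what dictates the choice of $p_i$.
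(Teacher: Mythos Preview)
Your plan is correct and hits the same key idea as the paper: the non-trivial step is the LCA simulation, and non-adaptivity lets you collapse the depth-$D$ exploration into $O(\log D)=O(\log\log n)$ MPC rounds via doubling. The implementations differ in a way worth noting. The paper assigns a machine $M_v$ to each vertex $v$ and, in phase $j$, has $M_v$ hold the depth-$2^j$ explored subtree $T_u^{2^j}(v)$ for \emph{every} node $u$ of the query tree; doubling then stitches a depth-$2^j$ subtree onto each of its depth-$2^j$ leaves by requesting the appropriate $T_{u'}^{2^j}(v')$ from $M_{v'}$. This is why the paper needs $Q^2\le n^\delta$: each machine stores $Q$ subtrees each of size at most $Q$. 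Your route instead maintains, for each query-tree node $c$, a global table $F_c\colon V\to V\cup\{\bot\}$ encoding the root-to-$c$ composition of neighbor-lookup maps, and doubles by composing tables along the tree via pointer jumping. This is arguably cleaner and in fact lighter on space (the tables total $O(nQ)$, not $O(nQ^2)$), though it leans on sorting/join primitives to do the compositions whereas the paper's per-vertex packaging makes the local computation more explicit. Either way the round count and the stated space bound go through; your accounting line ``$O(nQ)+O(nQ)+O(m)=O(nQ^2+m)$'' is just slack to match the statement. One small remark: the LCA here is vertex-prompted (vertex cover), so there is a single root per call; your aside about two roots for edge-prompted trees is unnecessary but harmless.
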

\begin{proof}
    The only non-trivial part is Step~\ref{step:lca}.
    We show this step can be implemented in $O(\log D) = O(\log \log n)$, where recall $D = \poly(\log n)$ is the depth of the LCA.
    Then, since there are \linebreak $(\log n)^{1/2 - \Omega(\epsilon)}$ iterations, the total number of MPC rounds is 
    $$
    (\log n)^{1/2 - \Omega(\epsilon)} \cdot O(\log D) = (\log n)^{1/2 - \Omega(\epsilon)}.
    $$

    We shall prove that using $O(\log D)$ rounds, one can collect the explored subgraph of the non-adaptive LCA starting at each vertex of the graph. Then, the LCA can be simulated in one round.
    By the Chernoff bound, in iteration $i$, the maximum degree in the sampled subgraph $H_i$ is at most $10p_i\Delta_i$ w.h.p. 
    In addition, the choice of sampling probability 
    $p_i  = \frac{2^{((\delta/2) \log n)^{1/(2 - \epsilon)}}}{10 \Delta_i}$
    is such that $Q(10p_i\Delta_i)^2 \leq n^\delta$.

    On a high level, we collect the query tree in $\log D$ phases, doubling the depth of the collected tree in every phase.
    Let the non-adaptive LCA be represented by a rooted tree $T$.
    For each vertex $u$, we use $T_u^d$ to denote the subtree of $u$, truncated after distance $d$.
    That is, $T_u^d$ is obtained by taking the subtree of $u$, and deleting every descendant more than $d$ edges away from $u$.
    We designate a machine $M_v$ for each vertex $v$ in $G$.
    Then, in phase $j$, for all possible $u \in T$, we collect $T_u^{2^j}$ using $v$ as a starting point (denoted by $T_u^{2^j}(v)$), in $M_v$.

    The base case is $j=0$.
    Fix a vertex $v \in V(G)$.
    For each vertex $u \in T$, the tree $T_u^1$ consists only of the immediate children of $u$.
    We collect $T_u^1(v)$ in $M_v$ by simply collecting the corresponding neighbors of $v$.

    For $j > 0$, we shall make use of the previously collected trees of depth $2^{j-1}$.
    Fix a vertex $v \in V(G)$.
    For each vertex $u \in T$, we collect $T_u^{2^j}(v)$ as follows.
    Consider $T_u^{2^{j-1}}(v)$;
    The goal is to append the truncated parts with the appropriate subtrees.
    For each leaf $u'$ of $T_u^{2^{j-1}}$ that is at distance exactly $2^{j-1}$ from $u$, we look at the corresponding vertex $v' \in T_u^{2^{j-1}}(v)$, and send a message to $M'_v$ to collect $T_{u'}^{2^{j-1}}(v')$.
    Appending all the collected trees to $T_u^{2^{j-1}}(v)$,
    we obtain $T_u^{2^{j}}(v)$.

    For the last part of the argument, we use $Q$ as a shorthand for $Q(10p_i\Delta_i)$.
    As the final step, we show that the above operations can be done using $O(Q^2)$ local space per machine.
    To see this, note that each machine $M_v$, for each $u \in T$, is sending messages to at most $Q$ machines $M_v'$ to request their subtree ($Q^2$ requests in total).
    For each $u$, the sum of the sizes of the returned subtrees will not exceed $Q$ ($Q^2$ total size of responses).
    To answer these requests, each machine may need to respond to many more than $Q$.
    However, this can be handled using standard techniques such as sorting and broadcast trees since the total size of the messages is bounded by $MQ^2$ as previously described.
\end{proof}

\bibliographystyle{plainnat}

\bibliography{references}

\appendix
\section{Omitted Proofs of \cref{subsec:cluster-trees}} 
\label{apx:cluster-trees}

This section includes the omitted proofs of \cref{subsec:cluster-trees}.
The statements are repeated here for the reader's convenience.

The tree is defined recursively by a parameter $r$, and denoted with $\mT_r = (\mC_r, \mE_r)$. Here, $\mC_r$ is the set of clusters and $\mE_r$ is the set of edges.
For two adjacent clusters $X$ and $Y$, the edge $(X, Y) \in \mE_r$ is labeled with an integer $d(X, Y)$ which represents the degree of the vertices in $X$, from the edges going to $Y$ in the final construction ($d(Y, X)$ is defined similarly). We refer to $d(X, Y)$ as the degree of $X$ to $Y$.
The labels depend on a parameter $\delta> 0$ that is defined later.

\begin{definition}[The cluster tree \cite{KuhnMW16}]
    \label{mydef:cluster-tree}
    The base case is $r = 0$: \footnote{We use a slightly different base case than \cite{KuhnMW16}, which ultimately leads to the same tree structure.}
    \begin{align*}
        \mC_0 &:= \{C_0, C_1\}, \\
        \mE_0 &:= \{(C_0, C_1)\}, \\
        d(C_0, C_1) &:= 1, \qquad d(C_1, C_0) := \delta,
    \end{align*}
    and the tree is rooted at $C_0$.
    
    For $r > 0$, the tree $\mT_r$ is constructed by adding some leaf clusters to $\mT_{r-1}$:
    \begin{enumerate}
        \item For every non-leaf cluster $C \in \mC_{r-1}$, a child cluster $C'$ is created with $d(C, C') = \delta^r$.
        \item For every leaf cluster $C \in \mC_{r-1}$, let $p_C$ be the parent of $C$ and $i^*$ be the integer such that $d(C, p_C) = \delta^{i^*}$.
        Then, for every $i \in \{0, 1, \ldots, r\} - \{i^*\}$
        a child cluster $C'$ is added with $d(C, C') = \delta^i$.
    \end{enumerate}
    For every new edge $(C, C')$ where $C'$ is the leaf, we let the upward label $d(C', C) = \delta \cdot d(C, C')$.
    For every cluster $C$, we use $d_p(C)$ as a shorthand for $d(C, p_C)$.
\end{definition}

The \emph{degree} and \emph{color} of the clusters are defined as follows.

\begin{definition}[Cluster degrees]
    For a cluster $C \in \mC_r$, we define its degree in the tree as 
    $$
    d_\mT(C) := \sum_{C': (C, C') \in \mE_r} d(C, C'),
    $$
    We also use $\Delta_r$ to denote the maximum degree among the clusters.
\end{definition}

\begin{definition}[Color] \label{mydef:color}
    For a cluster $C \in \mC_r$, its color $\tau(C)$ is the integer $0 \leq i \leq r$ such that $C$ was first added in $\mT_i$.
\end{definition}

First, we characterize the degrees.

\begin{claim} \label{myclm:max-parent-degree}
    For all clusters $C \in \mC_r \setminus \{C_0\}$, it holds
    $$
    d_p(C) \leq \delta^{\tau(C) + 1}.
    $$
\end{claim}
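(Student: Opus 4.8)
The plan is to trace the recursive construction in \cref{mydef:cluster-tree}, isolating the single fact that the \emph{downward} label $d(p_C, C)$ of the edge joining a cluster to its parent is bounded by $\delta^{\tau(C)}$; the claimed bound on $d_p(C)$ then follows immediately from the rule defining \emph{upward} labels.

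First I would recall from \cref{mydef:color} that $\tau(C)$ is precisely the iteration $r' := \tau(C)$ in which $C$ is first created, and that at that moment $C$ is a newly added leaf attached by a single edge $(p_C, C)$ to a cluster $p_C$ already present in $\mT_{r'-1}$ (for $r' = 0$ this is the base case, where $C = C_1$, $p_C = C_0$, and $d(C_0, C_1) = 1$). Now inspect the two branches of the step producing $\mT_{r'}$ from $\mT_{r'-1}$: if $p_C$ was a non-leaf at that stage, then $d(p_C, C) = \delta^{r'}$; if $p_C$ was a leaf, then $d(p_C, C) = \delta^{i}$ for some $i \in \{0, 1, \ldots, r'\}$. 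In either case, and in the base case as well, we get $d(p_C, C) \le \delta^{r'} = \delta^{\tau(C)}$. Crucially, once the edge $(p_C, C)$ is introduced its label is never modified in subsequent iterations, so this bound still holds in $\mT_r$ for every $r \ge \tau(C)$.

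Then I would invoke the definition of upward labels in \cref{mydef:cluster-tree}: for every edge $(p_C, C)$ with $C$ the leaf endpoint, the upward label is set to $d(C, p_C) = \delta \cdot d(p_C, C)$. Combining this with the previous step,
\[
d_p(C) = d(C, p_C) = \delta \cdot d(p_C, C) \le \delta \cdot \delta^{\tau(C)} = \delta^{\tau(C)+1},
\]
as claimed. The whole argument is short; the only point that calls for any care — the ``main obstacle,'' modest as it is — is checking the uniform bound $d(p_C, C) \le \delta^{\tau(C)}$ across both branches of the recursive step (and the base case), after which the upward-label rule does the rest.
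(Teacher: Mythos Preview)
Your proof is correct and follows essentially the same approach as the paper: both establish $d(p_C, C) \le \delta^{\tau(C)}$ by inspecting the iteration in which $C$ is created, and then multiply by $\delta$ via the upward-label rule. Your version is simply more explicit in separating out the two branches of the recursive step and the base case, whereas the paper compresses this into the single observation that ``the degree to any leaf cluster is at most $\delta^{\tau(C)}$.''
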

\ifbool{showstuff}{ \begin{proof}
    Recall that $C$ is added in iteration $\tau(C)$, at which point the degree to any leaf cluster is at most $\delta^{\tau(C)}$. Therefore, we have
    \begin{equation*}
    d_p(C) = d(C, p_C) = \delta \cdot d(p_C, C) \leq \delta^{\tau(C) + 1}. \qedhere
    \end{equation*}
\end{proof}}{}

\begin{claim} \label{myclm:non-leaf-degrees-in-tree}
    For all non-leaf clusters $C \in \mC_r$, it holds
    $$
    d_\mT(C) = \sum_{i = 0}^r \delta^i =: \bar{d}_r.
    $$
\end{claim}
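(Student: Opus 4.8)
The plan is to follow a fixed non-leaf cluster $C$ through the recursive construction of \cref{mydef:cluster-tree} and account for every edge attached to it, iteration by iteration. I would first dispose of the root separately: $C_0$ is present already in $\mT_0$, where its only edge (to $C_1$) has outgoing label $d(C_0, C_1) = 1 = \delta^0$, and $C_0$ is never a leaf, so in each iteration $k = 1, \ldots, r$ exactly one child is attached to it (by the non-leaf rule, item~1 of \cref{mydef:cluster-tree}) with outgoing label $\delta^k$; summing, $d_\mT(C_0) = \sum_{i=0}^r \delta^i = \bar{d}_r$.

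For $C \ne C_0$, write $t := \tau(C)$ for its color. Since clusters are always introduced as leaves, $C$ is a leaf of $\mT_t$, and since $C$ is non-leaf in $\mT_r$ we must have $t < r$, so iteration $t+1$ exists. Let $i^*$ be the integer with $d_p(C) = \delta^{i^*}$; by \cref{myclm:max-parent-degree} we have $d_p(C) \le \delta^{\tau(C)+1} = \delta^{t+1}$, hence $0 \le i^* \le t+1$. At the start of iteration $t+1$ the cluster $C$ is still a leaf of $\mT_t$, so the leaf rule (item~2 of \cref{mydef:cluster-tree}) applies: a child with outgoing label $\delta^i$ is attached to $C$ for every $i \in \{0, 1, \ldots, t+1\} \setminus \{i^*\}$. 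Together with the already-present parent edge of label $\delta^{i^*}$, the outgoing labels at $C$ after iteration $t+1$ are exactly $\delta^0, \delta^1, \ldots, \delta^{t+1}$, so $d_\mT(C) = \sum_{i=0}^{t+1} \delta^i$ at that point.

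In each later iteration $k$ with $t+2 \le k \le r$, the cluster $C$ is now non-leaf in $\mT_{k-1}$, so the non-leaf rule attaches exactly one new child, of outgoing label $\delta^k$. Adding these contributions gives
$$
d_\mT(C) = \sum_{i=0}^{t+1} \delta^i + \sum_{k=t+2}^{r} \delta^k = \sum_{i=0}^{r} \delta^i = \bar{d}_r,
$$
with the case $t+1 = r$ immediate since the second sum is then empty. I do not expect a real obstacle here; the only points needing care are (i) handling $C_0$ separately, as it is non-leaf from the outset and hence not governed by the leaf rule; (ii) the boundary case $t+1 = r$; and (iii) the appeal to \cref{myclm:max-parent-degree} to ensure $i^* \le t+1$, which is exactly what makes the parent edge fill the single missing index so that the outgoing labels at $C$ after iteration $t+1$ form the complete set $\{\delta^0, \ldots, \delta^{t+1}\}$.
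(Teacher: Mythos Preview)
Your proof is correct and follows essentially the same approach as the paper: trace the cluster through the iterative construction, accumulate the outgoing label $\delta^i$ contributed in each iteration, and sum. You are in fact slightly more careful than the paper's own write-up, since you handle $C_0$ separately (it has no parent, so the $i^*$ argument does not apply to it) and you explicitly invoke \cref{myclm:max-parent-degree} to justify $i^* \le t+1$, a point the paper leaves implicit.
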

\ifbool{showstuff}{ \begin{proof}
    Take a non-leaf cluster $C$, and consider the iterative construction of $\mT_r$. Let $i^*$ be such that $d_p(C) = \delta^{i^*}$.
    Recall that $C$ is added in iteration $\tau(C) < r$ at which point it is a leaf.
    In iteration $\tau(C) + 1$, for every $i \in \{0, 1, \ldots, r\} - \{i^*\}$ a leaf $C'$ is added to $C$ with $d(C, C') = \delta^i$.
    Considering the degree to the parent $d_p(C) = \delta^{i^*}$, the total degree of $C$ at this point is
    $$
    \sum_{i=0}^{\tau(C) + 1} \delta^i.
    $$
    This proves the claim for cases where $r = \tau(C) + 1$.
    Otherwise, in the proceeding iterations $i$ from $\tau(C) + 2$ to $r$,
    a leaf $C'$ is added to $C$ with $d(C, C') = \delta^i$.
    Therefore, after the $r$-th iteration it holds:
    \begin{equation*}
    d_\mT(C) = \sum_{i = 0}^r \delta^i. \qedhere
    \end{equation*}
\end{proof}}{}

\begin{claim} \label{myclm:tree-max-degree}
    The maximum degree among the clusters is $\Delta_r = \delta^{r + 1}$.
\end{claim}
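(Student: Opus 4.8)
The plan is to bound the degree of leaf clusters and of non-leaf clusters separately, and then observe that the larger of the two bounds, namely $\delta^{r+1}$, is actually attained by some leaf.

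First I would dispose of non-leaf clusters. By \cref{myclm:non-leaf-degrees-in-tree}, every non-leaf cluster $C \in \mC_r$ satisfies $d_\mT(C) = \sum_{i=0}^{r} \delta^i = \frac{\delta^{r+1}-1}{\delta-1}$, which is strictly smaller than $\delta^{r+1}$ whenever $\delta > 1$. Hence no non-leaf cluster realizes the claimed maximum.

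Next I would handle leaf clusters. A leaf cluster $C$ has a single incident edge, namely the one to its parent $p_C$, so $d_\mT(C) = d_p(C)$. By the rule for upward labels in \cref{mydef:cluster-tree}, $d_p(C) = \delta \cdot d(p_C, C)$, and every downward label is of the form $\delta^i$ with $0 \le i \le r$; thus $d_\mT(C) = \delta^{i+1} \le \delta^{r+1}$. To see that this bound is tight, note that in the $r$-th iteration step~1 of the construction attaches to each non-leaf cluster (in particular to the root) a fresh child with downward label $\delta^r$, whose upward label is therefore $\delta^{r+1}$; this child is a leaf of $\mT_r$. (Already in the base case $r=0$ the cluster $C_1$ is such a leaf, since $d(C_1, C_0) = \delta = \delta^{r+1}$.)

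Putting these observations together, every cluster of $\mT_r$ has degree at most $\delta^{r+1}$ and at least one leaf attains this value, so $\Delta_r = \delta^{r+1}$. I do not anticipate any genuine obstacle here; the only point meriting a moment of care is checking that a leaf whose parent edge carries label exactly $\delta^r$ really occurs in $\mT_r$, which is immediate from step~1 of \cref{mydef:cluster-tree} applied in iteration $r$.
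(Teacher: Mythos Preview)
Your proposal is correct and follows essentially the same approach as the paper: bound leaves via $d_\mT(C)=d_p(C)=\delta\cdot d(p_C,C)\le\delta^{r+1}$ with equality for the leaf attached with downward label $\delta^r$, and bound non-leaves via \cref{myclm:non-leaf-degrees-in-tree} as $\sum_{i=0}^r\delta^i<\delta^{r+1}$. The only cosmetic difference is that you spell out explicitly which leaf attains the maximum, whereas the paper simply notes that $d(p_C,C)$ ranges over $\delta^0,\ldots,\delta^r$.
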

\ifbool{showstuff}{ \begin{proof}
    The maximum degree among the leaf clusters is $\delta^{r+1}$,
    because for a leaf cluster $C$, it holds
    $$
    d_\mT(C) = d_p(C) = \delta \cdot d(p_C, C),
    $$
    and $d(p_C, C)$ ranges from $\delta^0$ to $\delta^r$.
    For any non-leaf cluster, by \cref{myclm:non-leaf-degrees-in-tree}, the degree is 
    \begin{equation*}
    \sum_{i = 0}^r \delta^i
    = \frac{\delta^{r+1} - 1}{\delta - 1}
    \leq \delta^{r+1}.
    \qedhere
    \end{equation*}
\end{proof}}{}

The following claim states that clusters with the same total degrees
have similar outgoing labels.
This is a crucial fact in our coupling arguments.

\begin{claim} \label{myclm:same-outgoing-labels-in-tree}
    Take any two clusters $C, C' \in \mC_r$.
    If $d_\mT(C) = d_\mT(C')$, then $C$ and $C'$ have the same set of outgoing edge-labels.
\end{claim}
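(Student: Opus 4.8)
The plan is to dispatch the claim by a case split on whether the common value $d_\mT(C)=d_\mT(C')$ equals $\bar{d}_r := \sum_{i=0}^r \delta^i$. The enabling arithmetic fact I would establish first is that the degree of a cluster already determines whether it is a leaf: by \cref{myclm:non-leaf-degrees-in-tree} every non-leaf cluster has degree exactly $\bar{d}_r$, whereas a leaf cluster $C\neq C_0$ has its only edge going to its parent, so $d_\mT(C) = d_p(C) = \delta\cdot d(p_C,C)$, which is a single power of $\delta$ lying in $\{\delta^1,\dots,\delta^{r+1}\}$ (downward labels are always powers of $\delta$ between $\delta^0$ and $\delta^r$, and the upward label is $\delta$ times a downward one). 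Since $\delta\ge 2$ (recall $\delta=\omega(\log\Delta/\log\log\Delta)$), we have $\delta^r<\bar{d}_r<\delta^{r+1}$, so $\bar{d}_r$ is strictly between two consecutive powers of $\delta$ and is therefore never a leaf degree. Hence $d_\mT(C)=d_\mT(C')$ forces $C$ and $C'$ to be of the same type.

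In the case $d_\mT(C)=d_\mT(C')=\bar{d}_r$, both clusters are non-leaves, and I would show that \emph{every} non-leaf cluster has exactly one outgoing edge of label $\delta^i$ for each $i\in\{0,1,\dots,r\}$, so that its set of outgoing labels is $\{\delta^0,\dots,\delta^r\}$ independently of the cluster; the claim then follows immediately. This is read off from \cref{mydef:cluster-tree}: a non-leaf $C\neq C_0$ has a parent-edge label $\delta^{i^*}=d_p(C)$ with $i^*\in\{1,\dots,r\}$, receives in the iteration right after its creation one child of label $\delta^i$ for each $i\in\{0,\dots,r\}\setminus\{i^*\}$, and receives in each later iteration $j$ exactly one further child of label $\delta^j$; together with the parent edge this yields precisely the labels $\{\delta^0,\dots,\delta^r\}$. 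The root $C_0$ has no parent edge but instead acquires a child of label $\delta^0$ (the edge to $C_1$) together with one child per label $\delta^1,\dots,\delta^r$, so it too has outgoing-label set $\{\delta^0,\dots,\delta^r\}$.

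In the case $d_\mT(C)=d_\mT(C')\neq\bar{d}_r$, both clusters are leaves, and a leaf has a single outgoing edge, the one to its parent, carrying label $d_p(C)$; but for a leaf $d_p(C)=d_\mT(C)$, so $d_p(C)=d_\mT(C)=d_\mT(C')=d_p(C')$ and the two one-element label sets coincide.

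The only step requiring a little bookkeeping is the non-leaf case: one must verify that across all the iterations in which a non-leaf cluster is grown it accumulates exactly one child for each label $\delta^i$, $i\in\{0,\dots,r\}$ (using that its parent-edge label is one of these $\delta^i$), and one must treat $C_0$ — which has no parent edge — as a small special case. The degenerate case $r=0$ is trivial, since there the two clusters $C_0,C_1$ have distinct degrees $1$ and $\delta$, so the hypothesis holds only when $C=C'$.
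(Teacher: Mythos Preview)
Your proof is correct and follows exactly the same case split as the paper's (on whether the common degree equals $\bar{d}_r$), just with more of the bookkeeping spelled out. One minor slip in that bookkeeping: in the non-leaf case, the children added in the iteration right after $C$'s creation have labels $\delta^i$ for $i\in\{0,\dots,\tau(C)+1\}\setminus\{i^*\}$, not $\{0,\dots,r\}\setminus\{i^*\}$ --- the remaining labels $\delta^{\tau(C)+2},\dots,\delta^r$ arrive one per subsequent iteration --- but your conclusion that the outgoing-label set is $\{\delta^0,\dots,\delta^r\}$ is unaffected.
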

\ifbool{showstuff}{ \begin{proof}
    If $d_\mT(C) = d_\mT(C') = \bar{d}_r$, then both $C$ and $C'$ are non-leaf clusters.
    Therefore, for every $i \in \{0, 1, \ldots, r\}$ they have exactly one edge with outgoing label $\delta^i$.
    If $d_\mT(C) = d_\mT(C') \neq \bar{d}_r$, then $C$ and $C'$ are both leaves, and they only have one edge with outgoing label $d_p(C) = d_p(C')$.
\end{proof}}{}

The following claim characterizes the color of a child cluster based on the color of the parent, and the label of the edge coming in from the parent.

\begin{claim} \label{myclm:child-color}
    Take a cluster $C$ and let $C_i$ be a child of $C$ such that $d(C, C_i) = \delta^i$. Then, it holds
    $$
    \tau(C_i) = \max(i, \tau(C) + 1).
    $$
\end{claim}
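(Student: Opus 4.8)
The plan is to read the color of $C_i$ straight off the recursive construction in \cref{mydef:cluster-tree}, exploiting the fact (\cref{mydef:color}) that the color of a cluster is precisely the index of the iteration in which that cluster is first created. So the entire argument reduces to determining, given an outgoing label $\delta^i$ of $C$, in which iteration the corresponding child $C_i$ of $C$ gets attached.

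Concretely, I would first recall that $C$ itself is created in iteration $\tau(C)$ and that at that moment it is a leaf of $\mT_{\tau(C)}$; write $d_p(C) = \delta^{i^*}$ and note that $i^* \le \tau(C) + 1$ by \cref{myclm:max-parent-degree}. Then split on the size of $i$. \textbf{Case $i \le \tau(C)+1$:} in iteration $\tau(C)+1$, rule~(2) of \cref{mydef:cluster-tree} fires on the leaf $C$ and attaches exactly one child with outgoing label $\delta^i$ for every $i \in \{0,\dots,\tau(C)+1\}\setminus\{i^*\}$; each such child is born in iteration $\tau(C)+1$, so $\tau(C_i) = \tau(C)+1 = \max(i,\tau(C)+1)$. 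After this iteration $C$ is no longer a leaf. \textbf{Case $i > \tau(C)+1$:} for each later iteration $j$ with $\tau(C)+1 < j \le r$, rule~(1) attaches a single child to $C$ with outgoing label $\delta^j$, born in iteration $j$; taking $i = j$ gives $\tau(C_i) = i = \max(i,\tau(C)+1)$. Since these two cases account for all children of $C$ (possibly none, if $\tau(C)=r$), this establishes $\tau(C_i) = \max(i,\tau(C)+1)$ in every case.

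I do not expect a genuine obstacle here; the only points requiring a little care are the bookkeeping of which of the two construction rules fires in which iteration, and the observation that $\delta^{i^*}$ is exactly the label omitted at iteration $\tau(C)+1$ while every later iteration $j > \tau(C)+1 \ge i^*$ contributes a label $\delta^j \neq \delta^{i^*}$ — so no child of $C$ ever carries label $\delta^{i^*}$ and the formula is never even tested there. (For the root $C_0$, which is never a leaf, the sole deviation is the base-case edge to $C_1$; its remaining children are still attached by rule~(1) exactly as in the second case, so the statement holds for the way the claim is subsequently used.)
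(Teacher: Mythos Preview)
Your proposal is correct and follows essentially the same argument as the paper: both split on whether $i \le \tau(C)+1$ (rule~(2) fires at iteration $\tau(C)+1$) or $i > \tau(C)+1$ (rule~(1) fires at iteration $i$), and read the color off directly. Your extra remarks about the omitted label $\delta^{i^*}$ and the root $C_0$ are sound bookkeeping that the paper leaves implicit.
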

\ifbool{showstuff}{ \begin{proof}
    Consider the iterative construction in \cref{mydef:cluster-tree}.
    Cluster $C$ is added in iteration $\tau(C)$ as a leaf.
    Let $i^*$ be such that $d_P(C) = \delta^{i^*}$.
    In iteration $\tau(C) + 1$, since $C$ is a leaf,
    for every $i \in \{0, 1, \ldots, \tau(C) + 1\} - \{i^*\}$,
    a child $C_i$ is added with $d(C, C_i) = \delta^i$.
    This proves the claim for $i \leq \tau(C) + 1$.

    In iteration $i > \tau(C) + 1$,
    since $C$ is no longer a leaf,
    a single child $C_i$ is added for $C$ with $d(C, C_i) = \delta^i$.
    This proves the claim for $i > \tau(C) + 1$.
\end{proof}}{}

The following claim states that the structure of the subtree of $C \notin \{C_0, C_1\}$ is uniquely determined by $\tau(C)$ and $d_P(C)$.
This observation is implicitly present in Definition 2 and Lemma 7 of \cite{KuhnMW16}.

\begin{claim} \label{myclm:identical-subtrees}
    Let $C, C' \in \mC_r \setminus \{C_0, C_1\}$ be two clusters of the same color
    $\tau = \tau(C) = \tau(C')$ 
    that have the same degrees 
    to their parents.
    Then, the subtrees of $C$ and $C'$ are identical (considering the labels on the edges).
\end{claim}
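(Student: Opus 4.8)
The plan is to prove the claim by downward induction on the color $\tau$, keeping $r$ fixed throughout. For the base case $\tau = r$: a cluster of color $r$ is created in the very last iteration that builds $\mT_r$, hence it is a leaf of $\mT_r$, so the subtrees of $C$ and $C'$ are each a single node and are trivially identical (with no edges to compare). For the inductive step I would assume the statement for every color strictly larger than $\tau$ and establish it for an arbitrary color $\tau < r$; note every $C \notin \{C_0,C_1\}$ has $\tau(C) \ge 1$, and for $\tau < r$ such a cluster is necessarily a non-leaf of $\mT_r$ because item~(2) of \cref{mydef:cluster-tree} already attaches children to it in iteration $\tau+1 \le r$.

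The key step is to read off, from \cref{mydef:cluster-tree}, the labeled list of children of a cluster $C$ with $\tau(C) = \tau < r$ and $d_p(C) = \delta^{i^*}$. Since $C$ enters as a leaf in iteration $\tau$, item~(2) attaches in iteration $\tau+1$ exactly one child with downward label $\delta^i$ for every $i \in \{0,1,\dots,\tau+1\}\setminus\{i^*\}$, and then in each later iteration $j$ with $\tau+2 \le j \le r$ item~(1) attaches one further child with downward label $\delta^j$. Thus $C$ has, for each $i \in \{0,\dots,r\}\setminus\{i^*\}$, exactly one child $C_i$ with $d(C,C_i)=\delta^i$ and upward label $d(C_i,C)=\delta^{i+1}$, and this description depends on $C$ only through $\tau$ and $i^*$; since $\tau(C')=\tau$ and $d_p(C')=\delta^{i^*}$, the cluster $C'$ has a matching child $C_i'$ for each such $i$, with identical incident labels. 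To invoke the inductive hypothesis on the pair $C_i,C_i'$ I would verify three things: their parent degrees agree, $d_p(C_i)=\delta^{i+1}=d_p(C_i')$; by \cref{myclm:child-color} their colors agree, $\tau(C_i)=\max(i,\tau+1)=\tau(C_i')$; and this common color is at least $\tau+1 \ge 1$, so neither $C_i$ nor $C_i'$ is $C_0$ or $C_1$ (both of color $0$), while it exceeds $\tau$, so the hypothesis applies and the subtree of $C_i$ is label-identical to that of $C_i'$. Reassembling $C$ with its edges to the $C_i$'s and their subtrees yields a labeled rooted tree identical to the one built from $C'$, completing the induction.

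The step I expect to require the most care is the child-enumeration in the middle paragraph: one must confirm that, over the iterations $\tau+1,\dots,r$, the leaf/non-leaf case split in the construction produces precisely the child set $\{\delta^i : 0 \le i \le r,\ i \ne i^*\}$ with no omissions or duplicates, and that a color-$\tau$ cluster with $\tau<r$ really does behave as a non-leaf past iteration $\tau+1$. Everything downstream of that enumeration is a routine application of \cref{myclm:child-color} and the rule for upward labels.
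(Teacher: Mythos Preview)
Your proposal is correct and follows essentially the same approach as the paper: downward induction on the color $\tau$, with the base case $\tau=r$ (leaves), and in the inductive step enumerating the children $\{C_i : i \in \{0,\dots,r\}\setminus\{i^*\}\}$, matching their parent-degrees and colors via \cref{myclm:child-color}, and applying the inductive hypothesis. If anything, you are more careful than the paper in explicitly verifying that $C_i,C_i' \notin \{C_0,C_1\}$ and in spelling out why the child set is exactly $\{\delta^i : i \ne i^*\}$.
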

\ifbool{showstuff}{ \begin{proof}
For a fixed $r$, we prove the claim by induction on $\tau$.
The base case is $\tau = r$ where the claim holds because both $C$ and $C'$ are leaves.

Assume the claim holds for colors larger than $\tau$,
and let $C$ and $C'$ be two clusters such that 
$$\tau(C) = \tau(C') = \tau \qquad \textnormal{and} \qquad d_p(C) = d_p(C') = \delta^{i^*}.$$
For every $i \in \{0, 1, \ldots, r\} - \{i^*\}$, cluster $C$ has a child cluster $C_i$ with $d(C, C_i) = \delta^i$. Similarly, $C'$ has a child $C'_i$ with $d(C', C'_i) = \delta^i$.
It holds $$d_p(C_i) = d_p(C'_i) = \delta^{i+1}.$$
Furthermore, by \cref{myclm:child-color}, we have
$$
\tau(C_i) = \tau(C'_i) \geq \tau + 1.
$$
Therefore, by the induction hypothesis, the subtree of $C_i$ is identical to that of $C'_i$. This implies the claim for the subtrees of $C$ and $C'$ and concludes the proof.
\end{proof}}{}

The following claim uses \cref{myclm:identical-subtrees}
to show that starting from two clusters $C$ and $C'$, taking the edges with the same labels leads to isomorphic subtrees if the edge-label is large enough.

\begin{claim} \label{myclm:large-steps}
    Take two clusters $C, C' \in \mC_r$. Let $\tau = \max(\tau(C), \tau(C')) + 2$, and $B$ and $B'$ be \emph{neighbors} of $C$ and $C'$ resp.\ such that $d(C', B) = d(C, B') \geq \delta^{\tau}$. Then, the subtrees of $B$ and $B'$ are identical (considering the labels on the edges).
\end{claim}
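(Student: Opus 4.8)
The plan is a short deduction chaining \cref{myclm:max-parent-degree,myclm:child-color,myclm:identical-subtrees}. First I would show that $B$ and $B'$ are necessarily \emph{children} of $C$ and $C'$, rather than their parents. Indeed, by \cref{myclm:max-parent-degree} the edge from $C$ to its parent carries label $d_p(C) \le \delta^{\tau(C)+1} < \delta^{\tau}$, and similarly $d_p(C') < \delta^{\tau}$, so an incident edge whose label is at least $\delta^{\tau}$ cannot be a parent edge. (Should it happen that neither $C$ nor $C'$ even has an incident edge of label $\ge \delta^{\tau}$ — which is the case exactly when $\tau$ is too large, e.g.\ $\tau > r+1$ — the hypothesis of the claim is unsatisfiable and there is nothing to prove; in the remaining cases one has $\tau \le r$.)

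Next, write $d(C,B) = d(C',B') = \delta^{i}$ with $\tau \le i \le r$. Applying \cref{myclm:child-color} to each side gives $\tau(B) = \max\bigl(i,\ \tau(C)+1\bigr)$ and $\tau(B') = \max\bigl(i,\ \tau(C')+1\bigr)$. Since $i \ge \tau = \max(\tau(C),\tau(C'))+2$ strictly exceeds both $\tau(C)+1$ and $\tau(C')+1$, both maxima collapse to $i$, so $\tau(B) = \tau(B') = i$. Moreover the upward labels agree: $d_p(B) = \delta\cdot d(C,B) = \delta^{i+1} = \delta\cdot d(C',B') = d_p(B')$. Thus $B$ and $B'$ have the same color and the same degree to their parent, and — being created only in iteration $i \ge \tau \ge 2$ — neither equals $C_0$ or $C_1$. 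Hence \cref{myclm:identical-subtrees} applies and the subtrees of $B$ and $B'$ are identical, edge labels included.

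I do not expect any real obstacle: the content is entirely bookkeeping with the color recursion. The only two spots that need a second's attention are (i) ruling out that $B$ or $B'$ is a parent, which is precisely what the slack $d_p(\cdot) \le \delta^{\tau(\cdot)+1} < \delta^{\tau}$ buys, and (ii) confirming $B,B' \notin \{C_0,C_1\}$ so that \cref{myclm:identical-subtrees} is legitimately invokable — both immediate from $i \ge \max(\tau(C),\tau(C'))+2$.
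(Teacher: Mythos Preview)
Your proposal is correct and follows essentially the same route as the paper's own proof: use \cref{myclm:max-parent-degree} to force $B,B'$ to be children, apply \cref{myclm:child-color} to get $\tau(B)=\tau(B')=i$, and then invoke \cref{myclm:identical-subtrees}. If anything, you are slightly more careful than the paper in explicitly verifying $d_p(B)=d_p(B')$ and $B,B'\notin\{C_0,C_1\}$ before invoking \cref{myclm:identical-subtrees}, and in noting when the hypothesis is vacuous.
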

\ifbool{showstuff}{ \begin{proof}
    Recall that $d_p(C) \leq \delta^{\tau(C) + 1} < \delta^\tau$ (\cref{myclm:max-parent-degree}). The same holds for $C'$. 
    Therefore, $B$ and $B'$ must be children of $C$ and $C'$ respectively.
    Let $i \geq \tau$ be such that $d(C, B) = d(C', B') = \delta^i$. Then, by \cref{myclm:child-color}, we have 
    $$
    \tau(B) = \max( i, \tau(C) + 1) \qquad \textnormal{and} \qquad
    \tau(B') = \max( i, \tau(C') + 1).
    $$
    Since $i \geq \tau = \max(\tau(C), \tau(C')) + 2$, it holds $\tau(B) = \tau(B')$.
    Hence, we can invoke \cref{myclm:identical-subtrees} to derive the claim.
\end{proof}}{}

\end{document}